\titlespacing{\section}{0pt}{*2}{*1.5}
\titlespacing{\subsection}{0pt}{*2}{*1.5}
\titlespacing{\subsubsection}{0pt}{*2}{*1.5}
\def\EqualDist{\stackrel{\rm d}{=}}
\newcommand{\cov}{\mbox{Cov}}
\newtheorem{lemma}{Lemma}
\newtheorem{definition}{Definition}
\newtheorem{theorem}{Theorem}
\newtheorem{proposition}{Proposition}
\date{}
\author{S. Ward, E. A. K. Cohen\footnote{Correspondence to e.cohen@imperial.ac.uk}, and N. M. Adams \\ Department of Mathematics, Imperial College London, \\ South Kensington Campus, London SW7 2AZ }
\title{Functional summary statistics and testing for independence in marked point processes on the surface of three dimensional convex shapes}
\begin{document}
\maketitle
\begin{abstract}
The fundamental functional summary statistics used for studying spatial point patterns are developed for marked homogeneous and inhomogeneous point processes on the surface of a sphere. These are extended to point processes on the surface of three dimensional convex shapes given the bijective mapping from the shape to the sphere is known. These functional summary statistics are used to test for independence between the marginals of multi-type spatial point processes with methods for sampling the null distribution proposed and discussed. This is illustrated on both simulated data and the RNGC galaxy point pattern, revealing attractive dependencies between different galaxy types.
\end{abstract}

\vspace{0.8cm}

\section{Introduction}\label{Section:1}

The analysis of marked spatial point patterns has focused predominantly on those that arise in a Euclidean space $\mathbb{R}^n$, with a rich literature available for modelling and exploratory data analysis. Central to existing methodologies are the functional summary statistics \citep{Lieshout1999,vanLieshout2006}, a generalisation of their univariate counterparts that analyse the proximity of events of different types to explore if, for example, they are consistent or not with the marginal component processes being independent.

In many applications, marked point patterns (known as multi-type when the mark-space is discrete) are increasingly being captured on surfaces that are not adequately modelled by the geometry of a Euclidean space. For example, terrestrial point patterns occur on a sphere (or more accurately a spheroid), and in microbiology, researchers are interested in the spatial relationship between differently labelled proteins on the bacterial plasma membrane, with the cells shapes more appropriately modelled by ellipsoids or capsules \citep{Kumar2024}. In such applications current methodologies that assume the space is Euclidean would give erroneous results due to their inability to respect and adapt to the underlying geometry of where the objects or events of interest reside.

Advances in the modelling and analysis of point patterns in non-Euclidean spaces are restricted to the univariate (single-type) setting, and have predominantly focused on spheres \citep{Robeson2014,Lawrence2016,Moller2016,Cuevas-Pacheco2018,Jun2019} and linear networks \citep{Ang2012,Rakshit2017,Mcswiggan2017,Moradi2018,Rakshit2019}. The functional summary statistics were extended to general convex bounded shapes in \cite{Ward2020} and non-parametric intensity estimation was developed for Riemannian manifolds in \cite{Ward2023}.

In this paper, we outline how exploratory data analyses can be conducted for marked point processes on the surfaces of convex bounded shapes in $\mathbb{R}^3$. We develop the marked functional summary statistics and demonstrate how they can be used to detect interactions between marginal (component) processes. As detailed in \cite{Ward2020} for the univariate setting of testing for complete spatial randomness, simply substituting the Euclidean metric for a more appropriate one is not a sufficient solution. The surface, in general, does not present the infinite number of isometries needed for the analogous notions of \emph{stationarity} and \emph{isotropy}, which are required for defining the functional summary statistics, e.g. Ripley's $K$-function \citep{Ripley1977}. This problem is retained in the marked setting, and therefore our approach builds on that of \cite{Ward2020}. By first developing the marked functional summary statistics for point patterns on the sphere $\mathbb{S}^2$, we are subsequently able to project the point pattern of interest onto $\mathbb{S}^2$ and conduct our analyses there, exploiting the infinite isometries on the new space. To test for independence between components we consider two methods for sampling from the null distribution of the relevant test statistics. The first is an analogue of the toroidal shift method of \cite{Lotwick1982} that uses random shifts of one of the component patterns to induce independence but preserve the marginal distributions. The second method samples inhomogeneous Poisson patterns from the required intensity function, as mapped onto the sphere.

\section{Preliminaries}\label{Section:2}

\subsection{Geometry}

Our geometric construction follows that of \cite{Ward2020} Let $\mathbf{x}\in\mathbb{R}^3$ such that $\mathbf{x}=(x_1,x_2,x_3)^T$ and define $||\mathbf{x}||=(x_1^2+x_2^2+x_3^2)^{1/2}$ to be the Euclidean norm with the origin of $\mathbb{R}^3$ denoted as $\boldsymbol{0}=(0,0,0)^T$. Denote a subset of $\mathbb{R}^3$ as $\mathbb{D}=\{\mathbf{x}\in\mathbb{R}^3:g(\mathbf{x})=0\}$, where $g:\mathbb{R}^3\mapsto\mathbb{R}$ is the level-set function of $\mathbb{D}$. We require $\mathbb{D}$ be closed and bounded, and define the set $\mathbb{D}_{int}=\{\mathbf{x}\in\mathbb{R}^3:g(\mathbf{x})<0\}$, i.e. the boundary of $\mathbb{D}_{int}$ is $\mathbb{D}$ and we refer to $\mathbb{D}_{int}$ as the interior of $\mathbb{D}$. The set $\mathbb{D}_{int}$ is said to be convex if and only if for all $\mathbf{x},\mathbf{y}\in \mathbb{D}_{int}$ such that $\mathbf{x}\neq\mathbf{y}$ then $\{\mathbf{z}\in\mathbb{R}^3:\mathbf{z}=\mathbf{x}+\gamma(\mathbf{y}-\mathbf{x}),\; \gamma\in (0,1)\}\subset \mathbb{D}_{int}$. We thus define $\mathbb{D}$ to be convex if and only if its interior $\mathbb{D}_{int}$ is convex.

For any closed, bounded and convex $\mathbb{D}$ with level-set function $g$, we define $\tilde{g}$ which rearranges $g(\mathbf{x})=0$, such that $x_3=\tilde{g}(x_1,x_2)$. It may not always be possible to find $\tilde{g}$ explicitly since it may not be a proper function. This issue can be rectified by partitioning $\mathbb{D}$ appropriately. For example take the case of a sphere with radius 1, then $g(\mathbf{x})=x_1^2+x_2^2+x_3^2-1$, hence $\tilde{g}(x_1,x_2)=\pm (1-x_1^2-x_2^2)^{1/2}$, which is not a proper function. In this case we partition the region $\mathbb{D}$ into the regions: $x_3\geq 0$, for which $\tilde{g}(x_1,x_2)=+(1-x_1^2-x_2^2)^{1/2}$, and $x_3<0$, for which $\tilde{g}(x_1,x_2)=-(1-x_1^2-x_2^2)^{1/2}$.

For any close, bounded and convex $\mathbb{D}$, we define its geodesic as the shortest path between two points $\mathbf{x},\mathbf{y}\in\mathbb{D}$ such that every point in the path is also an element of $\mathbb{D}$. The geodesic distance is denoted $d:\mathbb{D}\times\mathbb{D}\mapsto \mathbb{R}_+$, where $\mathbb{R}_+$ is the positive real line including $0$. Thus $\{\mathbb{D},d(\cdot,\cdot)\}$ is a metric space. We will frequently need to evaluate integrals over $\mathbb{D}$, which can be done using its infinitesimal area element defined as,
\begin{equation*}
d\mathbb{D}=\left\{1+\left(\frac{\partial \tilde{g}}{\partial x_1}\right)^2+\left(\frac{\partial \tilde{g}}{\partial x_2}\right)^2\right\}^{1/2}dx_1 dx_2.
\end{equation*}
We assume, without loss of generality, that $\mathbf{0}\in \mathbb{D}_{int}$. In this setting, we then say the space $\mathbb{D}$ is centred. Our methodology can easily be adapted for non-centred spaces by making the appropriate translations to bring the origin inside $\mathbb{D}$.

\subsection{Marked point process on $\mathbb{D}$}

To define a marked spatial point process on a closed, bounded and convex $\mathbb{D}$, we adopt the notation used in the Euclidean construction of \cite{Moller2004}. Let $\lambda_{\mathbb{D}}(\cdot)$ be the Lebesgue measure restricted to the surface of the convex shape $\mathbb{D}$, which is endowed with its geodesic distance and Borel $\sigma$-algebra $\mathcal{B}(\mathbb{D})$. Let $\mathcal{M}$ be a Polish space equipped with its Borel $\sigma$-algebra $\mathcal{B}(\mathcal{M})$. We define the \emph{ground process} $X_g$ to be a univariate locally finite simple (events do not coincide) point process on $\mathbb{D}$. Informally, if for each event $\mathbf{x}_i\in X_g$ there exists a mark $m_i\in\mathcal{M}$, the resulting process $X$ is a marked point process on $\mathbb{D}$ with mark space $\mathcal{M}$. More formally, let $\mathcal{B}(\mathbb{D}\times\mathcal{M})$ be the Borel $\sigma$-algebra on the product space of $\mathbb{D}$ and $\mathcal{M}$ and define $N_{lf}=\{x\subset \mathbb{D}\times\mathcal{M}: |x\cap A|<\infty \text{ for all bounded } A\in\mathcal{B}(\mathbb{D}\times\mathcal{M}) \}$, where $|\cdot|$ denotes cardinality. Define $\mathcal{N}$ to be the $\sigma$-algebra generated by the sets $\{x\in N_{lf}: |x\cap A|=n\}$ for all bounded $A\in \mathcal{B}(\mathbb{D}\times \mathcal{M})$ and $n\in\mathbb{N}$. Then a marked point process on $\mathbb{D}$ with mark space $\mathcal{M}$ is a measurable mapping from some probability space $(\Omega,\mathcal{F},\mathbb{P})$ into the measurable space $(N_{lf},\mathcal{N})$. The special case of multi-type point processes arises when $\mathcal{M}$ is finite, i.e. $\mathcal{M}=\{1,2,\dots,k\}$ for $k$ a positive integer.

In order to integrate over $\mathbb{D}\times\mathcal{M}$ we require a reference measure over $\{\mathbb{D}\times\mathcal{M},\mathcal{B}(\mathbb{D}\times\mathcal{M})\}$. For this purpose we take $\lambda_{\mathbb{D}}\otimes\nu$ to be our reference measure where $\nu$ is some suitably chosen reference measure on the mark space. Examples of $\nu$ can be probability measures over compact subspaces of $\mathbb{R}$ or a counting measure when $\mathcal{M}$ is finite  \citep[see e.g.][]{chiu2013stochastic,Daley2010}.

\subsection{Intensity measure and function}
Let $D \times M \in\mathcal{B}(\mathbb{D}\times\mathcal{M})$ then we define $N_X(D\times M)$ to be the counting measure of $X$. Defined as
\begin{equation*}
N_X(D\times M) = \sum_{(\mathbf{x},m)\in X}\mathbbm{1}\{(\mathbf{x},m)\in D\times M\},
\end{equation*}
it counts the number of points in $X$ with spatial location lying in $D$ and mark in $M$. The expectation of $N_X$ is denoted the intensity measure: $\mu(D\times M)=\mathbb{E}\{N_X(D\times M)\}$. Throughout this paper we shall make the further assumption that $\mu$ is absolutely continuous with respect to $\lambda_{\mathbb{D}}\otimes \nu$ which, by the Radon-Nikodyn theorem, implies there exists a unique $\rho:\mathbb{D}\times\mathcal{M}\mapsto\mathbb{R}^+$, termed the intensity function of $X$, such that
\begin{equation*}
\mu(D\times M) = \int_{D}\int_M \rho(\mathbf{x},m)\lambda_{\mathbb{D}}(d\mathbf{x})\nu(dm).
\end{equation*}
By construction, the ground process $X_g$ has intensity measure $\mu_g(D)=\mu(D\times \mathcal{M})$ for $D\subset\mathbb{D}$ and so for any fixed $M\subset\mathcal{M}$ we have that $\mu(D\times M)\leq \mu(D\times \mathcal{M}) = \mu_g(D)$. Therefore $\mu$ is absolutely continuous with respect to $\mu_g$, and for any fixed $M\subset \mathcal{M}$,
\begin{equation*}
\mu(D\times M) = \int_{D} K^{\mathbf{x}}(M) \mu_g(d\mathbf{x}).
\end{equation*}
Here, $K^{\mathbf{x}}$ is a probability measure, interpreted as the probability of a given event $\mathbf{x}\in X_g$ having a mark in $M$. Further, by assumption that $\mu$ is absolutely continuous with respect to $\lambda_{\mathbb{D}}\otimes\nu$ we have that $\mu_g$ is absolutely continuous with respect to $\lambda_{\mathbb{D}}$ and thus there exists the intensity function $\rho_g:\mathbb{D}\mapsto\mathbb{R}^+$. If we also suppose that $K^{\mathbf{x}}$ is absolutely continuous with respect to $\nu$ then
\begin{equation*}
\mu(D\times M) = \int_{D} \int_M f^{\mathbf{x}}(m)\rho_g(\mathbf{x}) \nu(dm)\lambda_{\mathbb{D}}(d\mathbf{x}),
\end{equation*}
where $f^{\mathbf{x}}$ is the density of $K^{\mathbf{x}}$ with respect to $\nu$, and hence $\rho(\mathbf{x},m)=f^{\mathbf{x}}(m)\rho_g(\mathbf{x})$.

Marked point process $X$ is homogeneous if $\rho_g(\mathbf{x})=\rho\in\mathbb{R}_+$. In this setting, the intensity function of $X$ is given by $\rho(\mathbf{x},m)= f^{\mathbf{x}}(m)\rho$, where $f^{\mathbf{x}}$ is the density of the first order mark distribution with respect to the reference measure $\nu$ over $\mathcal{M}$. We say that $X$ has \emph{common mark distribution} if $K^{\mathbf{x}}\equiv K$ is independent of $\mathbf{x}$, and thus $f^{\mathbf{x}}\equiv f$: the density of $K$ with respect to $\nu$. Further to this, if $K$ coincides with $\nu$ then we have that $f\equiv 1$ and so $\rho(\mathbf{x},m)=\rho_g(\mathbf{x})$. We also define the \emph{unit rate (marked) Poisson process} to be a marked process which has a ground Poisson process with intensity one almost everywhere and marks independent of location and independently and identically distributed with probability measure $P_\mathcal{M}$ over a mark space $\mathcal{M}$. Additionally if $P_\mathcal{M}$ admits a density $p_{\mathcal{M}}$ with respect to the reference measure $\nu$ then the unit rate Poisson process has intensity $\rho(\mathbf{x},m)=p_\mathcal{M}(m)$.

\subsection{Higher order moments}
Higher order intensity functions are further defined as densities of \emph{factorial moment measures} with respect to the $n$-fold reference measure $(\lambda_{\mathbb{D}}\otimes \nu)^n$. The $n^{th}$-order factorial moment measure of $X$ is defined as
\begin{equation*}
\alpha^{(n)}(B_1,\dots,B_n)=\mathbb{E}\left[ \mathop{\sum\nolimits\sp{\ne}}_{(\mathbf{x}_1,m_1),\dots,(\mathbf{x}_n,m_n)\in X}\mathbbm{1}\left\{(\mathbf{x}_1,m_1)\in B_1,\dots,(\mathbf{x}_n,m_n)\in B_n\right\}\right],
\end{equation*}
where $B_i\in \mathcal{B}(\mathbb{D}\times \mathcal{M})$ $(i=1,\dots,n)$ and $\sum^{\neq}$ is the sum over pairwise distinct elements. We note, $\alpha^{(1)}=\alpha=\mu$. If $\alpha^{(n)}$ is absolutely continuous with respect to $(\lambda_{\mathbb{D}}\otimes \nu)^n$, by \emph{Campbell's formula} there exists densities $\rho^{(n)}:(\mathbb{D}\times\nu)^n\mapsto\mathbb{R}^+$ such that for any measurable function $f:(\mathbb{D}\times\nu)^n\mapsto\mathbb{R}^+$,
\begin{multline*}
\mathbb{E}\left[\mathop{\sum\nolimits\sp{\ne}}_{(\mathbf{x}_1,m_1),\dots,(\mathbf{x}_n,m_n)\in X}f\{(\mathbf{x}_1,m_1),\dots,(\mathbf{x}_n,m_n)\}\right] =\\
\int_{\mathbb{D}\times \mathcal{M}}\cdots\int_{\mathbb{D}\times \mathcal{M}}f\{(\mathbf{x}_1,m_1),\dots,(\mathbf{x}_n,m_n)\}\rho^{(n)}\{(\mathbf{x}_1,m_1),\dots,(\mathbf{x}_n,m_n)\}\prod_{i=1}^n\lambda_{\mathbb{D}}(d\mathbf{x}_i)\nu(dm_i),
\end{multline*}
where $\rho^{(n)}\{(\mathbf{x}_1,m_1),\dots,(\mathbf{x}_n,m_n)\} \prod_{i=1}^n\lambda_{\mathbb{D}}(d\mathbf{x}_i)\nu(dm_i)$ can be heuristically considered as the joint probability of finding events in infinitesimal areas $(d\mathbf{x}_i,dm_i)$ $(i=1,\dots,n)$. In the present paper, we assume the existence of $\rho^{(n)}$ for all $n$.
Generalising the $n=1$ case, we define $K^{\mathbf{x}_1,\dots,\mathbf{x}_n}$ to be the $n^{th}$-order mark distributions, and existence of $\rho^{(n)}$ gives $\rho^{(n)}_g$ the $n^{th}$-order intensity functions of $X_g$. Again, assuming $K^{\mathbf{x}_1,\dots,\mathbf{x}_n}$ is absolutely continuous with respect to the $n$-fold measure $\nu^{n}$ with density $f^{\mathbf{x}_1,\dots,\mathbf{x}_n}:\mathcal{M}^n\mapsto\mathbb{R}^+$,
\begin{equation*}
\rho^{(n)}\left\{(\mathbf{x}_1,m_1),\dots,(\mathbf{x}_n,m_n)\right\} = f^{\mathbf{x}_1,\dots,\mathbf{x}_n}(m_1,\dots,m_n)\rho_g^{(n)}\left\{(\mathbf{x}_1,m_1),\dots,(\mathbf{x}_n,m_n)\right\}.
\end{equation*}
Further to this we say that $X$ is independently marked if for all $n\in\mathbb{N}$ the densities $f^{\mathbf{x}_1,\dots,\mathbf{x}_n}$ can be given as the product of their marginals, i.e. $f^{\mathbf{x}_1,\dots,\mathbf{x}_n}(m_1,\dots,m_n)=\prod_{i=1}^n f^{\mathbf{x}_i}(m_i)$.

The pair correlation function (PCF) of $X$ at two points $(\mathbf{x},m_{\mathbf{x}}),(\mathbf{y},m_{\mathbf{y}})\in \mathbb{D}\times \mathcal{M}$ is defined as
\begin{equation*}
g\left\{(\mathbf{x},m_{\mathbf{x}}),(\mathbf{y},m_{\mathbf{y}})\right\}=\frac{\rho^{(2)}\left\{(\mathbf{x},m_{\mathbf{x}}),(\mathbf{y},m_{\mathbf{y}})\right\}}{\rho(\mathbf{x},m_{\mathbf{x}})\rho(\mathbf{y},m_{\mathbf{y}})}.
\end{equation*}
When $X_g$ is a Poisson process and the markings in $X$ are independent, $g\left\{(\mathbf{x},m_{\mathbf{x}}),(\mathbf{y},m_{\mathbf{y}})\right\}=1$ for all $(\mathbf{x},m_{\mathbf{x}})$ and $(\mathbf{y},m_{\mathbf{y}})$. This model typically serves as a benchmark of no interaction between events. When $g\left\{(\mathbf{x},m_{\mathbf{x}}),(\mathbf{y},m_{\mathbf{y}})\right\}$ is greater than one, this indicates clustering between these mark types, and when less than one it indicates inhibition.

The $n^{th}$-order correlation functions $\xi_n$ for $n\in\mathbb{N}$ are defined recursively \citep[e.g.][]{White1979,vanLieshout2006} and based on the $n^{th}$-order intensity functions. Set $\xi_1=1$, then for $n\geq 2$,
\begin{equation}
\label{eq:n:correl:fun}
\sum_{k=1}^l\sum_{E_1,\dots, E_l}\prod_{j = 1}^k \xi_{|E_j|}\left[\{(\mathbf{x}_i,m_i):i\in E_j\}\right]=\frac{\rho^{(n)}\left\{(\mathbf{x}_1,m_1),\dots,(\mathbf{x}_n,m_n)\right\}}{\rho(\mathbf{x}_1,m_1)\cdots\rho(\mathbf{x}_n,m_n)},
\end{equation}
where $\sum_{E_1,\dots, E_l}$ is the sum over all possible $l$-sized partitions of the set $\{1,\dots,n\}$ such that $E_j\neq \emptyset$.

\subsection{Reduced Palm process}
The summary statistics developed in the present paper are based on the \emph{reduced Palm process} of a marked point process $X$, defined via the Radon-Nikodyn derivative of the \emph{reduced Campbell measure},
$$C^!(A \times N)=\mathbb{E}\left(\sum_{(\mathbf{x},m)\in X}\mathbbm{1}\left[\left\{(\mathbf{x},m),X\setminus(\mathbf{x},m)\right\}\in A \times N \right]\right),$$
where $A\in\mathcal{B}(\mathbb{D}\times \mathcal{M}),$ and $N\in\mathcal{N}$, with respect to $\mu$ \citep[e.g. Appendix C.2][]{Moller2004}. In other words, since $C^!$ is absolutely continuous with respect to $\mu$ we have by the Radon-Nikodyn theorem
\begin{equation*}
C^!(A \times N) = \int_{B\times M} P^{!}_{(\mathbf{x},m)}(N)\mu(d\mathbf{x},dm),
\end{equation*}
where $(\mathbf{x},m)\in\mathbb{D}\times\mathcal{M}$, $P^{!}_{(\mathbf{x},m)}$ is the Radon-Nikodyn derivative $dC^!/d\mu$ and defines a probability measure called the reduced Palm measure \citep[e.g. Appendix C.2][]{Moller2004}. We frequently use $P^{!}_{(\mathbf{x},m)}(N)$ as a short hand for $P(X^{!}_{(\mathbf{x},m)}\in N)$ for $N\in\mathcal{N}$. The process $X^{!}_{(\mathbf{x},m)}$ that follows $P^{!}_{(\mathbf{x},m)}$ is referred to as the reduced Palm process of $X$. Heuristically, $P^{!}_{(\mathbf{x},m)}$ can be considered the conditional distribution of $X$ given that $(\mathbf{x},m)\in X$. Based on this we also have the \emph{Campbell-Mecke formula}
\begin{equation*}
\mathbb{E}\left[\mathop{\sum\nolimits\sp{\ne}}_{(\mathbf{x},m)\in X}f\left\{(\mathbf{x},m),X\setminus (\mathbf{x},m)\right\}\right] = \int_{\mathbb{D}\times \mathcal{M}}\mathbb{E}^{!}_{(\mathbf{x},m)}[f\left\{(\mathbf{x},m),X\right\}]\mu(d\mathbf{x},dm),
\end{equation*}
where $\mathbb{E}^{!}_{(\mathbf{x},m)}$ is the expectation under the measure $P^{!}_{(\mathbf{x},m)}$. Following \cite{Cronie2016} we can also define the \emph{$\nu$-averaged Palm measure} with respect to $M\in\mathcal{B}(\mathcal{M})$ where $\nu(M)>0$ as,
\begin{equation}\label{eq:nu:average:palm}
P^{!}_{\mathbf{z},M}(N) = \frac{1}{\nu(M)}\int_M P^{!}_{(\mathbf{x},m)}(N)\nu(dm), \quad N\in\mathcal{N}.
\end{equation}
$P^{!}_{\mathbf{z},M}$ defines a probability measure since $0\leq P^{!}_{(\mathbf{x},m)}(\cdot) \leq 1$ and it may be interpreted as the conditional distribution of $X$ given it has an event at $\mathbf{x}$ with mark in $M$. We frequently use $P^{!}_{\mathbf{x},M}(N)$ as a short hand for $P(X^{!}_{\mathbf{x},M}\in N)$ for $N\in\mathcal{N}$.

In the case of $X$ being a multi-type point process with $\mathcal{M}=\{1,\dots,k\}$ and reference measure $\nu$, then for $M=\{i\}$ for some $i\in\{1,\dots ,k \}$, we have
\begin{equation*}
P^{!}_{\mathbf{x},M}(N) = \frac{1}{\nu(i)}\nu(i)P^{!}_{(\mathbf{x},i)}(N)=P^{!}_{(\mathbf{x},i)}(N).
\end{equation*}
 From (\ref{eq:nu:average:palm}), and identically to \cite[][Appendix A.2]{Iftimi2019}, the expectation with respect to $P^!_{(\mathbf{x},m)}$ is
\begin{equation}\label{eq:expect:mark:set}
\mathbb{E}^{!}_{\mathbf{x},M}\left\{f(X)\right\} = \frac{1}{\nu(M)}\int_M \mathbb{E}^{!}_{(\mathbf{x},m)}\left\{f(X)\right\}\nu(dm),
\end{equation}
for measurable, non-negative functions $f$.

\subsection{Generating functional}
In order to define upcoming summary functional statistics we make use of the probability generating functional (pgf) which uniquely characterises a point process $X$. Let $u:\mathbb{D}\times\mathcal{M}\mapsto[0,1]$ be a measurable function with bounded support then the pgf $G(\cdot)$ is defined as \citep{Moller2004}
\begin{equation*}
G(u) = \mathbb{E}\left[\prod_{(\mathbf{x},m)\in X}\left\{1-u(\mathbf{x},m)\right\}\right].
\end{equation*}
Further to this, given that the $n^{th}$-order intensity functions exist for all $n$ and that the following series is convergent, the generating functional can be represented as an infinite series \citep{Cronie2016},
\begin{equation*}
G(u) = 1+\sum_{n=1}^{\infty}\int_{\mathbb{D}\times\mathcal{M}}\cdots\int_{\mathbb{D}\times\mathcal{M}}\rho^{(n)}\left\{(\mathbf{x}_1,m_1),\dots,(\mathbf{x}_n,m_n)\right\}\prod_{i=1}^n u(\mathbf{x}_i,m_i)\lambda_{\mathbb{D}}(d\mathbf{x}_i)\nu(m_i),
\end{equation*}
where we take the convention that an empty product is 1. Further to this, using (\ref{eq:expect:mark:set}) we can define the generating functional $G^!_{\mathbf{x},M}$ with respect to $P^!_{\mathbf{x},M}$ as,
\begin{equation*}
G^!_{\mathbf{x},M}(u) =\frac{1}{\nu(M)}\int_M \mathbb{E}^{!}_{(\mathbf{x},m)}\left[\prod_{(\mathbf{y},n)\in X}\left\{1-u(\mathbf{y},n)\right\}\right]\nu(dm).
\end{equation*}

\section{Summary statistics for isotropic processes on the sphere}\label{Section:3}

\subsection{Isotropic processes}

In keeping with the formulation of \cite{Moller2016}, let $\mathcal{O}(3)$ denote the set of $3\times 3$ rotation matrices and for $O\in\mathcal{O}(3)$, let $OX=\{(O\mathbf{x},m):(\mathbf{x},m)\in X\}$ be the rotation of point process $X$ by $O$.

\begin{definition}
	\label{def:iso}
Point process $X$ on $\mathbb{S}^2$ is isotropic if $OX$ and $X$ are identically distributed for any $O\in\mathcal{O}(3)$.
\end{definition}

Isotropy of $X$ implies $\mu(D\times M)=\mu(OD\times M)$ for any $M\subset\mathcal{M}$, $D\subseteq\mathbb{S}^2$ and $O\in\mathcal{O}(3)$. Rotational invariance further implies $\mu(D\times M)=\eta_M\lambda_{\mathbb{S}^2}(D)$ where $\eta_M$ is a positive constant potentially depending on $M$. If $X$ is isotropic then so is the ground process $X_g$ which consequently has constant intensity $\rho_g$ and $\mu(D\times M)=\int_D K^{\mathbf{x}}(M) \rho_g \lambda_{\mathbb{S}^2}(d\mathbf{x})$. Therefore,
$$
0 = \int_{D}\eta_M - K^{\mathbf{x}}(M) \rho_g \lambda_{\mathbb{S}^2}(d\mathbf{x}),
$$
and since this must hold for any $D\subseteq\mathbb{S}^2$, it follows that $\eta_M = K^{\mathbf{x}}(M) \rho_g$ and the mark distribution is independent of $\mathbf{x}$. This allows the relaxation of notation to write $\mu(D\times M) = \rho_g K(M)\lambda_{\mathbb{S}^2}(D)$. In the special case of isotropic processes on a sphere, we can take the reference measure $\nu$ over $\mathcal{M}$ to be $K$, which we refer to as the \emph{canonical mark measure} on the sphere.  Applying this in (\ref{eq:nu:average:palm}) gives the isotropic spheroidal counterpart to $\nu$-average Palm measures, as defined for stationary marked processes in $\mathbb{R}^n$ by \cite{Cronie2016}.

If $X$ is isotropic, then it can be shown that $OX^!_{(\mathbf{x},m)}$ and $X^!_{(O\mathbf{x},m)}$ are identically distributed for any $O\in\mathcal{O}(3)$. This is analogous to the equivalent translational result in \cite{Moller2004}. Moreover defining the origin on $\mathbb{S}^2$ to be the North pole, i.e. $\mathbf{o}=(0,0,1)^T$ and $O_\mathbf{x}$ to be the unique rotation orthogonal to the geodesic between $\mathbf{x}$ and $\mathbf{o}$ such that $O_\mathbf{x}\mathbf{o}=\mathbf{x}$, then $O_{\mathbf{x}}X^!_{(\mathbf{o},m)}$ and $X^!_{(\mathbf{x},m)}$ are identically distributed. This is the marked version of \cite[Proposition 1][]{Moller2016} and can easily be proven under the mild condition of $X$ being absolutely continuous to the unit rate Poisson over mark space $\mathcal{M}$. Therefore, from (\ref{eq:nu:average:palm}),
\begin{equation*}
P(X^{!}_{\mathbf{o},M}\in \cdot) = P\left\{O_{\mathbf{x}}^{T}(X^{!}_{\mathbf{x},M})\in \cdot\right\},
\end{equation*}
for almost all $\mathbf{x}\in\mathbb{S}^2$.

\subsection{Functional summary statistics}
\label{subsec:func}
The functional summary statistics for isotropic marked point processes are an extension of their Euclidean counterparts as defined in \cite{vanLieshout2006}.
\begin{definition}
	\label{def:fss_iso}
Let $C,E\subseteq \mathcal{M}$ then
\begin{align}
F^E(r) &= P\left[X\cap \left\{B_{\mathbb{S}^2}(\mathbf{o},r)\times E\right\}\neq \emptyset\right]\label{eq:iso:sphere:F}\\
D^{CE}(r) &= P[X^!_{\mathbf{o},C}\cap\left\{B_{\mathbb{S}^2}(\mathbf{o},r)\times E\right\}\neq \emptyset]\label{eq:iso:sphere:D}\\
J^{CE}(r) &= \frac{1-D^{CE}(r)}{1-F^E(r)} \quad \text{for } F^E(r)<1 \label{eq:iso:sphere:J}\\
K^{CE}(r) &= \frac{1}{\lambda_{\mathbb{S}^2}(A)\nu(C)\nu(E)}\mathbb{E}\left[\mathop{\sum\nolimits\sp{\ne}}_{(\mathbf{x},m_{\mathbf{x}}),(\mathbf{y},m_{\mathbf{y}})\in X}\frac{\mathbbm{1}\left\{(\mathbf{x},m_\mathbf{x})\in A\times C, (O_{\mathbf{x}}^T\mathbf{y},m_\mathbf{y})\in B_{\mathbb{S}^2}(\mathbf{o},r)\times E\right\}}{\rho(\mathbf{x},m_{\mathbf{x}}) \rho(\mathbf{y},m_{\mathbf{y}})}\right],\label{eq:iso:sphere:K}
\end{align}
for all $r\in[0,\pi]$, and where $K^{CE}(\cdot)$ does not depend on $A\subseteq\mathbb{S}^2$.
\end{definition}
Commonly, $F^E(\cdot)$ is referred to as the empty space function, and $D^{CE}(\cdot)$ is the cross nearest neighbour distribution function. The cross $K$ function $K^{CE}(\cdot)$ is a spheroidal and marked version of the \emph{multi-type cross $K$ function} \citep[e.g.][]{Moller2004}, a multivariate extension of the classical Ripley's $K$-function.

Under the isotropic assumption and taking the canonical mark measure $\nu=K$, we obtain $\rho(\mathbf{x},m)= f^{\mathbf{x}}(m)\rho_g$, where $\rho_g$ is the constant intensity of the ground process. Thus, this can replace $\rho(\mathbf{x},m)$ in (\ref{eq:iso:sphere:K}). Furthermore, isotropy implies all four functional summary statistics are independent of the typical point $\mathbf{o}$. For example,
\begin{align*}
F^{E}(r;\mathbf{x}) &= P[X\cap \left\{B_{\mathbb{S}^2}(\mathbf{x},r)\times E\right\}\neq \emptyset]\\
&=P[O_{\mathbf{x}}^TX\cap \{B_{\mathbb{S}^2}(O_{\mathbf{x}}^T\mathbf{x},r)\times E\}\neq \emptyset]\\
&=P[X\cap \left\{B_{\mathbb{S}^2}(\mathbf{o},r)\times E\right\}\neq \emptyset]\\
&= F^{E}(r),
\end{align*}
with a similar argument holding for $D^{CE}(\cdot)$ and hence $J^{CE}(\cdot)$. The invariance of $K^{CE}(\cdot)$ to the typical point is shown in Appendix \ref{Kinvariance} by representing it as an expectation with respect to the reduced Palm distribution. It is important to note that in general $J^{CE}(\cdot)\neq J^{EC}(\cdot)$ whilst $K^{CE}(\cdot)=K^{EC}(\cdot)$.

Let $C, E\subset \mathcal{M}$ with $C\cap E =\emptyset$, and define $X_C= X\cap (\mathbb{S}^2\times C)$ and $X_E= X\cap (\mathbb{S}^2\times E)$. The following proposition describes how the functional summary statistics behave under independence of $X_C$ and $X_E$, and can be considered marked spheroidal analogues to the multi-type Euclidean results given by \cite{Lieshout1999}.

\begin{proposition}\label{prop:iso:function}
	Let $X$ be a marked isotropic spheroidal point process. Let $C,E\subset \mathcal{M}$ such that $C\cap E =\emptyset$ and suppose that $X_C$ and $X_E$ are independent. For $r\in[0,\pi]$,
	\begin{align*}
	D^{CE}(r) &= F^{E}(r),\\
	J^{CE}(r) &= 1,\\
	K^{CE}(r) &= 2\pi\{1-\cos(r)\}.
	\end{align*}
\end{proposition}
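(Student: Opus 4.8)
The plan is to prove the three identities in turn, obtaining the second as an immediate corollary of the first and treating the third through the pair correlation function. For $D^{CE}(r) = F^E(r)$ I would argue directly with the reduced Palm measure. The event $\{X \cap (B_{\mathbb{S}^2}(\mathbf{o},r)\times E) \neq \emptyset\}$ defining $F^E(r)$ involves only points carrying a mark in $E$, that is the subprocess $X_E$, and similarly $D^{CE}(r)$ depends on the reduced Palm process $X^!_{\mathbf{o},C}$ only through its $E$-marked component. The crux is therefore to show that, when $X_C$ and $X_E$ are independent and $C \cap E = \emptyset$, the $E$-marked part of $X^!_{\mathbf{o},C}$ has the same law as $X_E$. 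I would establish this by evaluating the reduced Campbell measure $C^!$ on events $N \in \mathcal{N}$ that are measurable with respect to the $E$-marked configuration: restricting the defining sum to marks in $C$ and invoking independence of $X_C$ and $X_E$, the expectation factorises as $\mu_C(A)\,P[X_E \in N_E]$, where $\mu_C$ is the intensity measure restricted to $C$-marks and $N_E$ is the event $N$ read off the $E$-component. Differentiating with respect to $\mu$ gives $P^!_{(\mathbf{x},m)}(N) = P[X_E \in N_E]$ for $\nu$-almost every $m \in C$, and averaging through (\ref{eq:nu:average:palm}) yields $P^!_{\mathbf{o},C}(N) = P[X_E \in N_E]$; taking $N = \{X \cap (B_{\mathbb{S}^2}(\mathbf{o},r)\times E) \neq \emptyset\}$ then gives $D^{CE}(r) = F^E(r)$.

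Substituting this into (\ref{eq:iso:sphere:J}) immediately gives $J^{CE}(r) = \{1 - F^E(r)\}/\{1 - F^E(r)\} = 1$ for $F^E(r) < 1$. For the third identity I would first apply Campbell's formula to the expectation in (\ref{eq:iso:sphere:K}), so that the weight $1/\{\rho(\mathbf{x},m_\mathbf{x})\rho(\mathbf{y},m_\mathbf{y})\}$ combines with $\rho^{(2)}$ to form the pair correlation function $g$. On the support of the indicator we have $m_\mathbf{x} \in C$ and $m_\mathbf{y} \in E$, so the two points belong to the independent subprocesses $X_C$ and $X_E$; for independent processes the cross second-order factorial moment measure factorises as $\mu_C \otimes \mu_E$, whence $\rho^{(2)}\{(\mathbf{x},m_\mathbf{x}),(\mathbf{y},m_\mathbf{y})\} = \rho(\mathbf{x},m_\mathbf{x})\rho(\mathbf{y},m_\mathbf{y})$ and $g \equiv 1$ on this set. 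The integral then separates: the two mark integrations contribute exactly $\nu(C)$ and $\nu(E)$, cancelling the matching factors in the prefactor, and there remains $\frac{1}{\lambda_{\mathbb{S}^2}(A)}\int_A \int_{\mathbb{S}^2} \mathbbm{1}\{O_\mathbf{x}^T\mathbf{y} \in B_{\mathbb{S}^2}(\mathbf{o},r)\}\,\lambda_{\mathbb{S}^2}(d\mathbf{y})\,\lambda_{\mathbb{S}^2}(d\mathbf{x})$. Since $O_\mathbf{x}$ is a rotation and preserves surface measure, the substitution $\mathbf{z} = O_\mathbf{x}^T\mathbf{y}$ reduces the inner integral to $\lambda_{\mathbb{S}^2}\{B_{\mathbb{S}^2}(\mathbf{o},r)\}$, the area of a spherical cap of angular radius $r$, equal to $2\pi\{1 - \cos(r)\}$. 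This value is independent of $\mathbf{x}$, so the normalised outer integral over $A$ leaves $K^{CE}(r) = 2\pi\{1 - \cos(r)\}$, confirming incidentally that the statistic does not depend on $A$.

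I expect the main obstacle to be the first step: making rigorous the claim that conditioning on a $C$-marked point does not disturb the law of the $E$-marked configuration. This rests on the factorisation of the reduced Campbell measure over $E$-measurable events together with interchanging the Radon--Nikodym differentiation with the independence structure, and the measure-theoretic bookkeeping there is more delicate than the essentially computational steps behind $K^{CE}(r)$, whose only substantive inputs are the factorisation of $\rho^{(2)}$ across the independent subprocesses and the rotation-invariance of $\lambda_{\mathbb{S}^2}$.
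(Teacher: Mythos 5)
Your treatment of $K^{CE}$ and $J^{CE}$ is correct and essentially identical to the paper's: Campbell's formula, the factorisation $\rho^{(2)}\{(\mathbf{x},m_\mathbf{x}),(\mathbf{y},m_\mathbf{y})\}=\rho(\mathbf{x},m_\mathbf{x})\rho(\mathbf{y},m_\mathbf{y})$ for $m_\mathbf{x}\in C$, $m_\mathbf{y}\in E$ under independence, and the cap area $\lambda_{\mathbb{S}^2}\{B_{\mathbb{S}^2}(\mathbf{o},r)\}=2\pi\{1-\cos(r)\}$; likewise both proofs obtain $J^{CE}\equiv 1$ directly from $D^{CE}=F^E$. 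For $D^{CE}=F^E$, however, your argument as written has a genuine gap. Radon--Nikodym differentiation of your factorisation $C^!(A\times N)=\mu(A)\,P[X_E\in N_E]$ yields $P^!_{(\mathbf{x},m)}(N)=P[X_E\in N_E]$ only for $\mu$-almost every \emph{pair} $(\mathbf{x},m)\in\mathbb{S}^2\times C$ --- not, as you state, for $\nu$-almost every $m\in C$ with the location free --- and the exceptional null set depends on the fixed event $N$. Your $N$ is anchored at $\mathbf{o}$, yet the $\nu$-average (\ref{eq:nu:average:palm}) defining $D^{CE}$ must be evaluated at the Palm location $\mathbf{x}=\mathbf{o}$, a $\lambda_{\mathbb{S}^2}$-null set. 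Isotropy alone cannot rescue the fixed-event statement: by the Palm rotation covariance of Section \ref{Section:3}, $P^!_{(\mathbf{x},m)}(N)$ for the ball event centred at $\mathbf{o}$ depends on $\mathbf{x}$ only through $d_{\mathbb{S}^2}(\mathbf{x},\mathbf{o})$, so your almost-everywhere identity pins this function down at almost every distance $t>0$, whereas $D^{CE}(r)$ is its value at $t=0$ exactly --- a null point that no rotation can reach from positive distances.

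The gap is closable in two ways. First, keep your strategy but upgrade the per-event identity to a distributional one: take a countable generating class $\{N_k\}$ of events measurable with respect to the $E$-marked configuration, intersect the countably many exceptional null sets, and apply a $\pi$--$\lambda$ argument to conclude that for $\mu$-a.e.\ $(\mathbf{x},m)\in\mathbb{S}^2\times C$ the \emph{entire} $E$-marked component of $X^!_{(\mathbf{x},m)}$ is distributed as $X_E$. This single-null-set statement may then be applied to the event moving with the Palm point, giving $P^!_{(\mathbf{x},m)}\left[X\cap\{B_{\mathbb{S}^2}(\mathbf{x},r)\times E\}\neq\emptyset\right]=F^E(r)$ for a.e.\ $(\mathbf{x},m)$, and the covariance $O_\mathbf{x}X^!_{(\mathbf{o},m)}\EqualDist X^!_{(\mathbf{x},m)}$ transfers this to the origin, since both sides of the identity are then constant in $\mathbf{x}$. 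Second, and alternatively, build the moving ball into the Campbell--Mecke functional from the outset, applying it to $f\{(\mathbf{x},m),\phi\}=\mathbbm{1}\{(\mathbf{x},m)\in A\times C\}\,\mathbbm{1}\left[\phi\cap\{B_{\mathbb{S}^2}(\mathbf{x},r)\times E\}\neq\emptyset\right]$, dropping the removed point via $C\cap E=\emptyset$ and conditioning on $X_C$; this is precisely the paper's proof in Appendix \ref{proof:iso:function}, and it is the shorter route because it integrates the Palm probabilities over a set of positive measure and never needs to evaluate a Radon--Nikodym derivative at a specific point.
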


Proof: See Appendix \ref{proof:iso:function}.

These functional summary statistics can be interpreted in an analagous manner to the summary statistics given in \cite{Moller2004} for stationary Euclidean multi-type processes. If $K^{CE}(r)>2\pi\{1-\cos(r)\}$ and $J^{CE}(r)<1$ then this indicates that points with marks in $C$ aggregate around points with marks in $E$, whilst if $K^{CE}(r)<2\pi\{1-\cos(r)\}$ and $J(r)>1$ then they repel.

\subsection{Estimating functional summary statistics}
Following \cite{vanLieshout2006}, supposing that the process is observed on some window $W\subset \mathbb{S}^2$, such that $\lambda_{\mathbb{S}^2}(W)>0$, and that the reference measure and probability measure over $\mathcal{M}$ coincide (thus $\rho(\mathbf{x},m)=\rho_g$ where $\rho_g$ is the constant intensity of the ground process), then estimators are defined as
\begin{align*}
1-\hat{D}^{CE}(r) &= \frac{1}{\rho_g \lambda_{\mathbb{S}^2}(W_{\ominus r})\nu(C)}\\ & \qquad \times\sum_{(\mathbf{x},m_{\mathbf{x}})\in X} \mathbbm{1}\{(\mathbf{x},m_{\mathbf{x}})\in W_{\ominus r}\times C\}\prod_{(\mathbf{y},m_{\mathbf{y}})\in X}\left[1- \mathbbm{1}\{d_{\mathbb{S}^2}(\mathbf{x},\mathbf{y}) < r,m_{\mathbf{y}}\in E\}\right],\\
1-\hat{F}^{E}(r) &=\frac{1}{|I_{W_{\ominus r}}|}\sum_{\mathbf{p}\in I_{W_{\ominus r}}} \prod_{(\mathbf{x},m)\in X} \left[1-\mathbbm{1}\left\{d_{\mathbb{S}^2}(\mathbf{p},\mathbf{x})\leq r, m\in E\right\}\right],\\
\hat{J}^{CE}(r) &= \frac{1-\hat{D}^{CE}(r)}{1-\hat{F}^{E}(r)}\quad \text{for } \hat{F}^{E}(r)< 1,\\
\hat{K}^{CE}(r) &= \frac{1}{\rho_g^2\lambda_{\mathbb{S}^2}(W_{\ominus r})\nu(C)\nu(E)} \\ & \qquad\times\sum_{(\mathbf{x},m_{\mathbf{x}})\in X}\sum_{(\mathbf{y},m_{\mathbf{y}})\in X\setminus (\mathbf{x},m_{\mathbf{x}})}\mathbbm{1}\{(\mathbf{x},m_\mathbf{x})\in W_{\ominus r}\times C, d_{\mathbb{S}^2}(\mathbf{x},\mathbf{y})\leq r, m_{\mathbf{y}}\in E\},
\end{align*}
where $W_{\ominus r}$ is the erosion of $W$ by distance $r\in\mathbb{R}_+$ and $I_{W_{\ominus r}} = I\cap W_{\ominus r}$, where $I$ is a finite grid of points on $\mathbb{S}^2$. 

\begin{proposition}\label{prop:iso:bias}
	Let $X$ be a marked isotropic point process on $\mathbb{S}^2$. Let $C,E\subset \mathcal{M}$ such that $C\cap E =\emptyset$ and suppose that $X_C$ and $X_E$ are independent. The estimators $\hat{F}^{E}(\cdot)$, $\hat{D}^{CE}(\cdot)$ and $\hat{K}^{CE}(\cdot)$ are unbiased, whilst $\hat{J}^{CE}(\cdot)$ is ratio-unbiased when $\rho_g$ is known.
\end{proposition}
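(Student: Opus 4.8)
The plan is to show that each of $\mathbb{E}[\hat{F}^{E}(r)]$, $\mathbb{E}[\hat{D}^{CE}(r)]$ and $\mathbb{E}[\hat{K}^{CE}(r)]$ equals the corresponding population quantity by pushing the expectation through the appropriate Campbell-type identity and then using isotropy to transport everything to the typical point $\mathbf{o}$. Two structural facts will be used repeatedly. First, restricting the outer sum (or the grid) to the eroded window $W_{\ominus r}$ guarantees that for every retained location $\mathbf{x}$ the ball $B_{\mathbb{S}^2}(\mathbf{x},r)$ lies inside $W$, so the product or inner sum over \emph{observed} points of $X$ coincides with the one over the whole process; this is exactly what removes the edge bias. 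Second, because $C\cap E=\emptyset$, whenever $m_{\mathbf{x}}\in C$ the diagonal term $\mathbf{y}=\mathbf{x}$ contributes the factor $1$ in the $\hat{D}^{CE}$ product, so that product effectively ranges over $X\setminus(\mathbf{x},m_{\mathbf{x}})$. Throughout I take the reference and mark probability measures to coincide, so $\rho(\mathbf{x},m)=\rho_g$.

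For $\hat{F}^{E}$ I would write the inner product as the probability generating functional evaluated at $u=\mathbbm{1}_{B_{\mathbb{S}^2}(\mathbf{p},r)\times E}$, so that $\mathbb{E}[\prod(1-u)]=1-F^{E}(r;\mathbf{p})$; the invariance argument already given for $F^{E}$ in Section~\ref{subsec:func} makes this independent of the grid point $\mathbf{p}$ and equal to $1-F^{E}(r)$, and averaging over $I_{W_{\ominus r}}$ returns $1-F^{E}(r)$ exactly. For $\hat{D}^{CE}$ I would apply the Campbell--Mecke formula with $f\{(\mathbf{x},m),\phi\}=\mathbbm{1}\{(\mathbf{x},m)\in W_{\ominus r}\times C\}\prod_{(\mathbf{y},n)\in\phi}[1-\mathbbm{1}\{d_{\mathbb{S}^2}(\mathbf{x},\mathbf{y})<r,\,n\in E\}]$. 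This produces $\rho_g\int_{W_{\ominus r}}\int_{C} P^{!}_{(\mathbf{x},m)}[X\cap\{B_{\mathbb{S}^2}(\mathbf{x},r)\times E\}=\emptyset]\,\nu(dm)\lambda_{\mathbb{S}^2}(d\mathbf{x})$. Using the Palm rotation identity $O_{\mathbf{x}}X^{!}_{(\mathbf{o},m)}\EqualDist X^{!}_{(\mathbf{x},m)}$ together with the fact that $O_{\mathbf{x}}$ is an isometry carrying $B_{\mathbb{S}^2}(\mathbf{o},r)$ onto $B_{\mathbb{S}^2}(\mathbf{x},r)$, the inner probability equals its value at $\mathbf{o}$; the $\nu$-average over $C$ then collapses to $1-D^{CE}(r)$ by the definition (\ref{eq:nu:average:palm}) of $P^{!}_{\mathbf{o},C}$, and the remaining integral contributes $\lambda_{\mathbb{S}^2}(W_{\ominus r})$, cancelling the normalisation.

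For $\hat{K}^{CE}$ I would apply the second-order Campbell formula to the double sum and compare with the definition (\ref{eq:iso:sphere:K}). The key simplification is that $O_{\mathbf{x}}$ is an isometry with $O_{\mathbf{x}}\mathbf{o}=\mathbf{x}$, so $d_{\mathbb{S}^2}(O_{\mathbf{x}}^{T}\mathbf{y},\mathbf{o})=d_{\mathbb{S}^2}(\mathbf{y},\mathbf{x})$ and the event $(O_{\mathbf{x}}^{T}\mathbf{y},m_{\mathbf{y}})\in B_{\mathbb{S}^2}(\mathbf{o},r)\times E$ is exactly $\{d_{\mathbb{S}^2}(\mathbf{x},\mathbf{y})\le r,\ m_{\mathbf{y}}\in E\}$, matching the indicator in the estimator; choosing $A=W_{\ominus r}$ in (\ref{eq:iso:sphere:K}) (permissible since $K^{CE}$ is independent of $A$) and setting $\rho=\rho_g$ then identifies $\mathbb{E}[\hat{K}^{CE}(r)]$ with $K^{CE}(r)$. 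Ratio-unbiasedness of $\hat{J}^{CE}$ is then immediate, since the ratio of the expectations of numerator and denominator is $(1-D^{CE}(r))/(1-F^{E}(r))=J^{CE}(r)$; the proviso that $\rho_g$ be known is precisely what keeps $\mathbb{E}[1-\hat{D}^{CE}]$ equal to $1-D^{CE}$ without an extra estimated factor. I expect the main obstacle to be the $\hat{D}^{CE}$ step: carefully justifying the passage from the reduced Palm probability at $\mathbf{x}$ to that at $\mathbf{o}$ and the subsequent $\nu$-averaging, while simultaneously verifying that the border erosion legitimises replacing the observed product by the full-process product.
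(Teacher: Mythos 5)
Your proof is correct. For $\hat F^{E}$, $\hat D^{CE}$ and $\hat J^{CE}$ it is essentially the paper's own argument: the expectation of the product is a void probability that isotropy makes independent of the grid point $\mathbf{p}$; Campbell--Mecke plus the Palm rotation identity $O_{\mathbf{x}}X^{!}_{(\mathbf{o},m)}\EqualDist X^{!}_{(\mathbf{x},m)}$ and $\nu$-averaging over $C$ handle $\hat D^{CE}$; ratio-unbiasedness of $\hat J^{CE}$ then follows. Two details you make explicit are left implicit in the paper and are worth keeping: the erosion $W_{\ominus r}$ is what lets the product over \emph{observed} points coincide with the product over the full process, and the disjointness $C\cap E=\emptyset$ trivialises the diagonal factor $\mathbf{y}=\mathbf{x}$, so the product genuinely is a function of $X\setminus(\mathbf{x},m_{\mathbf{x}})$ --- a prerequisite for applying Campbell--Mecke at all.

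The one place you genuinely depart from the paper is $\hat K^{CE}$. You identify $\mathbb{E}\{\hat K^{CE}(r)\}$ directly with the defining expectation (\ref{eq:iso:sphere:K}) evaluated at $A=W_{\ominus r}$, using the invariance of $K^{CE}$ in $A$ (Appendix \ref{Kinvariance}) together with the isometry identity $d_{\mathbb{S}^2}(O_{\mathbf{x}}^{T}\mathbf{y},\mathbf{o})=d_{\mathbb{S}^2}(\mathbf{x},\mathbf{y})$ and $\rho(\mathbf{x},m)=\rho_g$. This needs no second-order Campbell formula and, notably, no independence of $X_C$ and $X_E$: it gives unbiasedness of $\hat K^{CE}$ for any isotropic process under the canonical mark measure. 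The paper instead applies the Campbell theorem with the independence factorisation $\rho^{(2)}\{(\mathbf{x},m_{\mathbf{x}}),(\mathbf{y},m_{\mathbf{y}})\}=\rho(\mathbf{x},m_{\mathbf{x}})\rho(\mathbf{y},m_{\mathbf{y}})$ for $m_{\mathbf{x}}\in C$, $m_{\mathbf{y}}\in E$, showing the expectation equals $2\pi\{1-\cos(r)\}$, which equals $K^{CE}(r)$ under independence by Proposition \ref{prop:iso:function}. Your route is shorter and strictly more general; the paper's has the side benefit of exhibiting the explicit null value $2\pi\{1-\cos(r)\}$ that the estimator targets under the proposition's hypotheses. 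Both are valid.
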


Proof: See Appendix \ref{proof:iso:bias}.

Note that when the process is completely observed over $\mathbb{S}^2$, erosion of the space is not needed and $W_{\ominus r}$ can be replaced by $\mathbb{S}^2$. Furthermore, it may be that $\nu$ is unknown and must be estimated. One approach is to use $\widehat{\nu(C)} = N_X(W\times C)/N_X(W\times \mathcal{M})$ as a plug in estimator. Using the canonical measure $\nu=K$ results in $\nu$ being a probability measure with $\nu(\mathcal{M})=1$. In this case $\mathbb{E}[N_X(W\times C)]=\rho_g\lambda_{\mathbb{S}^2}(W)\nu(C)$ and $\mathbb{E}[N_X(W\times \mathcal{M})]=\rho_g\lambda_{\mathbb{S}^2}(W)\nu(\mathcal{M})=\rho_g\lambda_{\mathbb{S}^2}(W)$ and hence $\widehat{\nu(C)}$ is ratio unbiased.

 For the discrete mark case of $\mathcal{M}=\{1,2,\dots,k\}$, let $\nu$ be the counting measure. Then $X$ has first order mark distribution $K(i)$ with respect to $\nu$. Let us define $ \rho_i\equiv\rho(\mathbf{x},i)=\rho_g K(i)$. Taking $C=\{i\}$ and $E=\{j\}$ ($i\neq j$), our estimators then simplify to
\begin{align*}
1-\hat{D}^{ij}(r) &= \frac{1}{\rho_i\lambda_{\mathbb{S}^2}(W_{\ominus r})}\sum_{\mathbf{x}\in X_{i}} \mathbbm{1}(\mathbf{x}\in W_{\ominus r})\prod_{\mathbf{y}\in X_{j}}\left[1- \mathbbm{1}\{d_{\mathbb{S}^2}(\mathbf{x},\mathbf{y}) < r\}\right],\\
1-\hat{F}^{j}(r) &=\frac{1}{|I_{W_{\ominus r}}|}\sum_{\mathbf{p}\in I_{W_{\ominus r}}} \prod_{\mathbf{x}\in X_j} \left[1-\mathbbm{1}\{d_{\mathbb{S}^2}(\mathbf{p},\mathbf{x})\leq r\}\right],\\
\hat{J}^{ij}(r) &= \frac{1-\hat{D}^{ij}(r)}{1-\hat{F}^{j}(r)},\quad \text{for } \hat{F}^{j}(r)< 1,\\
\hat{K}^{ij}(r) &= \frac{1}{\rho_i \rho_j\lambda_{\mathbb{S}^2}(W_{\ominus r})}\sum_{\mathbf{x}\in X_{i}}\sum_{\mathbf{y}\in X_j}\mathbbm{1}\{\mathbf{x}\in W_{\ominus r}, d_{\mathbb{S}^2}(\mathbf{x},\mathbf{y})\leq r\}.
\end{align*}
In this setting, when the discrete mark measure $K$ is unknown, then using the same reasoning as the general setting,  we can use $\widehat{K(i)}=N_{X_i}(W)/N_{X_g}(W)$ for $i \in \{1,2,\dots,k\}$.
\subsection{Testing for independence}
\label{test_isotropic}
Consider the null hypothesis $H$ that states $X_C$ and $X_E$ are independent. To test this using estimates of the functional summary statistics, we need to determine whether deviations from their theoretical value under $H$ are significant. The typical approach in the Euclidean setting is to compare these estimates against a sample constructed from simulates under the null hypothesis of independence \citep{Moller2004,Myllymaki2013}. Simulating from the null is complicated by the fact that its distribution is dependent on the marginal distributions of the $X_C$ and $X_E$, which are often unknown.

There are two common approaches used for stationary multi-type processes in $\mathbb{R}^d$, each of which can be adapted to the spheroidal setting. The toroidal shift approach developed by \cite{Lotwick1982} bootstraps the observed patterns, allowing for the unknown marginal distributions to be approximately maintained while breaking dependence though random shifts of the patterns. This can be readily adapted to multi-type spheroidal patterns where we instead randomly rotate on the sphere. A restriction to this rotation method is it requires a point pattern to have been observed over all of the sphere.

The alternative approach simulates from a specified null model, e.g. Poisson process \citep[e.g.][]{Rajala2018}. While this imposes additional distribution assumptions on the null hypothesis, which may be misspecified, previous works \citep{Wiegand2012,Moller2004,Rajala2018} have used this approach to provide evidence of dependency between components in a multi-type Euclidean process. Such an approach can be used here, even in the event of a partially observed spheroidal process. We explore both approaches through simulations and a real data study.

\section{Summary statistics for inhomogeneous processes on the sphere}
\label{inhomo_sphere}

\subsection{Intensity reweighted isotropic processes}

To proceed with developing functional summary statistics for inhomogeneous processes on the sphere, we must assume the process admits isotropic structure once reweighted with respect the intensity function.
\begin{definition}
	\label{def:IRWMI}
	Let $X$ be a simple marked processes on $\mathbb{S}^2$ whose intensity functions of all orders exist, and define $\bar{\rho}_E=\inf_{\mathbf{x}\in\mathbb{S}^2,m\in E}\rho(\mathbf{x},m)$. If $\bar{\rho}\equiv\bar{\rho}_{\mathcal{M}}>0$ and, for all $n\geq 1$, $(\mathbf{x}_i,m_i)\in\mathbb{S}^2\times\mathcal{M}$ and $O\in\mathcal{O}(3)$, the $n^{th}$-order correlation functions are isotropic in the sense
\begin{equation*}
\xi_{n}\{(\mathbf{x}_1,m_1),\dots,(\mathbf{x}_n,m_n)\}=\xi_{n}\{(O\mathbf{x}_1,m_1),\dots,(O\mathbf{x}_n,m_n)\},
\end{equation*}
 then $X$ is intensity-reweighted momented isotropic (IRWMI).
 \end{definition}
 This is the isotopy analogue of the intensity-reweighted momented stationary definition used by \cite{VanLieshout2011,Cronie2016} for inhomogeneity in the Euclidean setting. It is also the marked extension of the IRWMI definition in \cite{Ward2020} for unmarked spheroidal processes. It should be noted that any isotropic process is immediately IRWMI. Other IRWMI processes include (inhomogeneous) Poisson processes, multi-type processes with independent IRWMI components, or a ground process that is unmarked IRWMI \citep[in the sense of][]{Ward2020} with events having independent marks.

A weaker condition is an analogue of second order intensity reweighted stationary processes, defined in \cite{Cronie2016} for marked processes in Euclidean spaces.
\begin{definition}
	\label{def:SOIRWI}
	Let $X$ be a simple marked processes on $\mathbb{S}^2$. Let $\mathcal{K}^{CE}(B)$ for $A,B\subset \mathbb{S}^2$ and $C,E\subseteq \mathcal{M}$ be defined as
	\begin{equation}\label{eq:second:order:measure}
	\lambda_{\mathbb{S}^2}(A)\nu(C)\nu(E)\mathcal{K}^{CE}(B) =
	\mathbb{E}\left[\mathop{\sum\nolimits\sp{\ne}}_{(\mathbf{x},m_{\mathbf{x}}),(\mathbf{y},m_{\mathbf{y}})\in X}\frac{\mathbbm{1}\left\{(\mathbf{x},m_\mathbf{x})\in A\times C, (O_{\mathbf{x}}^T\mathbf{y},m_\mathbf{y})\in B\times E\right\}}{\rho(\mathbf{x},m_{\mathbf{x}}) \rho(\mathbf{y},m_{\mathbf{y}})}\right].
	\end{equation}
	If $\mathcal{K}^{CE}$ does not depend on $A$ then $X$ is second-order intensity reweighted isotropic (SOIRWI).
\end{definition}
It can be shown using the Campbell formula that an isotropic PCF is sufficient for Definition \ref{def:SOIRWI} to hold. By this, we mean $g(\mathbf{x},m_{\mathbf{x}},\mathbf{y},m_{\mathbf{y}})=g(O\mathbf{x},m_{\mathbf{x}},O\mathbf{y},m_{\mathbf{y}})$ for any $O\in\mathcal{O}(3)$. Considering $n=2$ in (\ref{eq:n:correl:fun}), it follows that $\xi_2\{(\mathbf{x},m_{\mathbf{x}}),(\mathbf{y},m_{\mathbf{y}})\} + 1 = g\{(\mathbf{x},m_{\mathbf{x}}),(\mathbf{y},m_{\mathbf{y}})\}$, and therefore SOIRWI is a weaker condition than IRWMI.

\subsection{Inhomogeneous cross $K$ function}
Similar to \citep[Definiton 4.8]{Moller2004}, we define an analogous inhomogeneous cross $K$ function for marked point patterns on a sphere.
\begin{definition}
	\label{def:K:inhom}
		Let $X$ be a marked SOIRWI processes on $\mathbb{S}^2$ with intensity function $\rho(\mathbf{x},m)$, where $\bar\rho>0$. Let $C,E\subset\mathcal{M}$, then for $r\in[0,\pi]$,
	\begin{equation*}
		K^{CE}_{\rm inhom}(r) = \mathcal{K}(B_{\mathbb{S}^2}(\mathbf{o},r)),
	\end{equation*}
	where $\mathcal{K}$ is defined in Definition \ref{def:SOIRWI}.
\end{definition}

The following propositions show that the inhomogeneous cross $K$ function has the same properties as their isotropic counterparts under the assumption of independence.

\begin{proposition}\label{prop:K:inhom:equal:2pi}
	Let $X$ be a marked SOIRWI point process on $\mathbb{S}^2$ and consider two disjoint sets $C,E\subset\mathcal{M}$ such that they have positive measure $\nu$. Further assume that $X_C$ and $X_E$ are independent. Then,
	\begin{equation*}
	K^{CE}_{\rm inhom}(r) = 2\pi\{1-\cos(r)\}.
	\end{equation*}
\end{proposition}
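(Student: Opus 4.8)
The plan is to compute $K^{CE}_{\rm inhom}(r)=\mathcal{K}^{CE}(B_{\mathbb{S}^2}(\mathbf{o},r))$ directly from its defining expectation in (\ref{eq:second:order:measure}) by exploiting the independence of $X_C$ and $X_E$ together with the SOIRWI property. First I would fix a convenient choice of $A$; since $X$ is SOIRWI the quantity $\mathcal{K}^{CE}(B)$ does not depend on $A$, so I am free to take $A=\mathbb{S}^2$ to simplify bookkeeping. The key observation is that because $C\cap E=\emptyset$, the inner double sum in (\ref{eq:second:order:measure}) ranges over pairs $(\mathbf{x},m_{\mathbf{x}})$ with $m_{\mathbf{x}}\in C$ and $(\mathbf{y},m_{\mathbf{y}})$ with $m_{\mathbf{y}}\in E$, i.e. one point is drawn from $X_C$ and the other from $X_E$, which are \emph{distinct} point patterns. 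This means the restricted sum is genuinely over all ordered pairs (no diagonal terms to delete), so the expectation factors through the second-order factorial moment measure of the pair $(X_C,X_E)$.

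The central step is to use independence to factorise the relevant second-order intensity. For a pair with one point of type in $C$ and one of type in $E$, independence of $X_C$ and $X_E$ gives $\rho^{(2)}\{(\mathbf{x},m_{\mathbf{x}}),(\mathbf{y},m_{\mathbf{y}})\}=\rho(\mathbf{x},m_{\mathbf{x}})\rho(\mathbf{y},m_{\mathbf{y}})$ whenever $m_{\mathbf{x}}\in C$, $m_{\mathbf{y}}\in E$; equivalently the cross pair correlation function is identically one. I would establish this via the Campbell formula applied to the product process, so that the weighting $1/\{\rho(\mathbf{x},m_{\mathbf{x}})\rho(\mathbf{y},m_{\mathbf{y}})\}$ in (\ref{eq:second:order:measure}) exactly cancels against $\rho^{(2)}$. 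Substituting this into the Campbell representation of the expectation collapses the double integral into a single integral of the indicator $\mathbbm{1}\{(\mathbf{x},m_{\mathbf{x}})\in A\times C,\,(O_{\mathbf{x}}^T\mathbf{y},m_{\mathbf{y}})\in B_{\mathbb{S}^2}(\mathbf{o},r)\times E\}$ against $\lambda_{\mathbb{S}^2}\otimes\nu$ in each coordinate.

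After cancellation, the mark integrals over $C$ and $E$ produce exactly the normalising factors $\nu(C)$ and $\nu(E)$, and the spatial integral over $\mathbf{x}\in A$ produces $\lambda_{\mathbb{S}^2}(A)$; these match the prefactor $\lambda_{\mathbb{S}^2}(A)\nu(C)\nu(E)$ on the left of (\ref{eq:second:order:measure}) and cancel. What remains is the integral over $\mathbf{y}$ of $\mathbbm{1}\{O_{\mathbf{x}}^T\mathbf{y}\in B_{\mathbb{S}^2}(\mathbf{o},r)\}$ with respect to $\lambda_{\mathbb{S}^2}$. Since $O_{\mathbf{x}}^T$ is a rotation and $\lambda_{\mathbb{S}^2}$ is rotation-invariant, this integral equals $\lambda_{\mathbb{S}^2}(B_{\mathbb{S}^2}(\mathbf{o},r))$, the spherical area of a geodesic cap of radius $r$. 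A standard computation of this cap area on the unit sphere gives $2\pi\{1-\cos(r)\}$, yielding the claim.

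The main obstacle I anticipate is making the factorisation argument fully rigorous at the level of moment \emph{measures} rather than densities. Independence of $X_C$ and $X_E$ must be translated into a statement about the cross second-order factorial moment measure factoring as a product of the two first-order intensity measures; one must confirm that the disjointness $C\cap E=\emptyset$ legitimately removes all diagonal contributions so that the $\sum^{\neq}$ in (\ref{eq:second:order:measure}) behaves like an unrestricted product sum over $X_C\times X_E$, and that the intensity-reweighting by $1/\rho$ is well defined given only $\bar\rho>0$. Once the independence is expressed as the cross pair correlation function being identically one, the remaining steps are routine cancellation and the cap-area computation.
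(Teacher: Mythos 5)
Your proposal is correct and follows essentially the same route as the paper: apply the Campbell formula to the defining expectation in (\ref{eq:second:order:measure}), use independence of $X_C$ and $X_E$ to conclude that the cross pair correlation function is identically one (equivalently, $\rho^{(2)}$ factorises for pairs with marks in $C$ and $E$), so the intensity weights cancel and what remains is the spherical cap area $\lambda_{\mathbb{S}^2}\{B_{\mathbb{S}^2}(\mathbf{o},r)\} = 2\pi\{1-\cos(r)\}$. Your version is somewhat more careful than the paper's terse proof in working directly from the SOIRWI definition (choosing $A$ freely) and in flagging the measure-level factorisation, which the paper itself establishes as equation (\ref{eq:density:independent}) in Appendix \ref{proof:J:inhom:equal:1}.
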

Proof: Under isotropy of the PCF and using the Campbell formula it can be shown that
\begin{equation*}
\lambda_{\mathbb{S}^2}(A)\nu(C)\nu(E)\mathcal{K}^{CE}(B) = \int_{C}\int_{B\times E} g\{(\mathbf{o},m_\mathbf{x}),(\mathbf{y},m_{\mathbf{y}})\} \nu(dm_{\mathbf{x}})d\lambda_{\mathbb{S}^2}(\mathbf{y})\nu(dm_{\mathbf{y}}),
\end{equation*}
where $\mathbf{o}$ is some arbitrary point in $\mathbb{S}^2$. Setting $B=B_{\mathbb{S}^2}(\mathbf{o},r)$ and noting that by the independence assumption $g\{(\mathbf{x},m_{\mathbf{x}}),(\mathbf{y},m_{\mathbf{y}})\}\equiv 1$ the proposition holds.

We again maintain the heuristic understanding that if $K^{CE}_{\rm inhom}(r)>2\pi\{1-\cos(r)\}$ then this indicates that points with marks in $C$ aggregate around points with marks in $E$, whilst they repel if $K^{CE}_{\rm inhom}(r)<2\pi\{1-\cos(r)\}$.

\subsection{Inhomogeneous cross nearest neighbour distribution and empty space functions}
The following definition extends the empty space function $F^E(r)$ and cross nearest neighbour distribution $D^{CE}(r)$ presented in Section \ref{subsec:func} to inhomogeneous processes. These definitions provide the spheroidal counterpart to those introduced by \cite{Cronie2016} for Euclidean point processes.

\begin{definition}
	\label{def:fss}
	Let $X$ be a marked IRWMI processes on $\mathbb{S}^2$ with intensity function $\rho(\mathbf{x},m)$, where $\bar\rho>0$. Let $C,E\subset\mathcal{M}$, and
	\begin{equation}\label{eq:u}
	u^r_{\mathbf{y},E}(\mathbf{x},m) = \frac{\bar{\rho}_E\mathbbm{1}\{(\mathbf{x},m)\in B_{\mathbb{S}^2}(\mathbf{y},r)\times E\}}{\rho(\mathbf{x},m)}, \quad \mathbf{y}\in\mathbb{S}^2,
	\end{equation}
	then for $r\in[0,\pi]$,
	\begin{align*}
	F^{E}_{\rm inhom}(r) & = 1 - G(1-u^r_{\mathbf{y},E}),	\\
	D^{CE}_{\rm inhom}(r) & = 1 - G^!_{\mathbf{y},C}(1-u^r_{\mathbf{y},E}).\\
	\end{align*}
\end{definition}

Definition \ref{def:fss} implicitly depends on the point $\mathbf{y}$, but Theorem \ref{thm:J:inhom:generating} will show that for IRWMI processes these functions are independent of the choice of $\mathbf{y}$.

The following proposition shows that when the process is isotropic $F^{E}_{\rm inhom}$ and $D^{CE}_{\rm inhom}$ collapse to the isotropic setting.

\begin{proposition}\label{prop:equiv:iso:inhom}
Let $X$ be an isotropic spheroidal point process, then $F^{E}_{\rm inhom}(r)$ and $D^{CE}_{\rm inhom}(r)$ as defined in Definition \ref{def:fss} are equal to those in Definition \ref{def:fss_iso}.
\end{proposition}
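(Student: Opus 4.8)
The plan is to show that, under isotropy, the intensity-reweighting encoded in $u^r_{\mathbf{y},E}$ degenerates into a plain set indicator, whereupon the generating-functional expressions of Definition \ref{def:fss} collapse onto the avoidance probabilities of Definition \ref{def:fss_iso}. First I would invoke the structural consequence of isotropy established at the start of Section \ref{Section:3}: adopting the canonical mark measure $\nu=K$, the mark distribution is independent of location and $f^{\mathbf{x}}\equiv 1$, so the intensity is the constant $\rho(\mathbf{x},m)=\rho_g$. Consequently $\bar\rho_E=\inf_{\mathbf{x}\in\mathbb{S}^2,\,m\in E}\rho(\mathbf{x},m)=\rho_g$, and substituting into (\ref{eq:u}) the reweighting factor $\bar\rho_E/\rho(\mathbf{x},m)$ is identically one on the relevant region, giving
\begin{equation*}
u^r_{\mathbf{y},E}(\mathbf{x},m)=\mathbbm{1}\{(\mathbf{x},m)\in B_{\mathbb{S}^2}(\mathbf{y},r)\times E\}.
\end{equation*}

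Next I would evaluate the generating functional at this indicator. Each factor $1-u^r_{\mathbf{y},E}(\mathbf{x},m)$ equals zero precisely when $(\mathbf{x},m)\in B_{\mathbb{S}^2}(\mathbf{y},r)\times E$ and one otherwise, so the product over the points of $X$ is the indicator that $X$ places no point in $B_{\mathbb{S}^2}(\mathbf{y},r)\times E$. Hence $G(1-u^r_{\mathbf{y},E})$ reduces to the void probability $P[X\cap\{B_{\mathbb{S}^2}(\mathbf{y},r)\times E\}=\emptyset]$, and $F^{E}_{\rm inhom}(r)=P[X\cap\{B_{\mathbb{S}^2}(\mathbf{y},r)\times E\}\neq\emptyset]$. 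By the same rotation argument used to show $F^{E}(r;\mathbf{x})=F^{E}(r)$ in Section \ref{subsec:func}, isotropy makes this void probability invariant under the rotation $O_{\mathbf{y}}$ with $O_{\mathbf{y}}\mathbf{o}=\mathbf{y}$; choosing $\mathbf{y}=\mathbf{o}$ then identifies it with $F^{E}(r)$ of (\ref{eq:iso:sphere:F}), confirming both the claimed equality and the $\mathbf{y}$-independence promised after Definition \ref{def:fss}.

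For $D^{CE}_{\rm inhom}$ I would repeat the indicator reduction inside the $\nu$-averaged Palm generating functional. For fixed $m$ the collapse of the product gives $\mathbb{E}^{!}_{(\mathbf{y},m)}$ of the void indicator, namely $P^{!}_{(\mathbf{y},m)}[X\cap\{B_{\mathbb{S}^2}(\mathbf{y},r)\times E\}=\emptyset]$; averaging over $m\in C$ against $\nu/\nu(C)$ and invoking the definition (\ref{eq:nu:average:palm}) of the $\nu$-averaged Palm measure yields $G^{!}_{\mathbf{y},C}(1-u^r_{\mathbf{y},E})=P^{!}_{\mathbf{y},C}[X\cap\{B_{\mathbb{S}^2}(\mathbf{y},r)\times E\}=\emptyset]$, so that $D^{CE}_{\rm inhom}(r)=P^{!}_{\mathbf{y},C}[X\cap\{B_{\mathbb{S}^2}(\mathbf{y},r)\times E\}\neq\emptyset]$. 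Setting $\mathbf{y}=\mathbf{o}$ recovers $D^{CE}(r)$ of (\ref{eq:iso:sphere:D}), where independence of the base point follows from the reduced-Palm isotropy relation $O_{\mathbf{y}}X^{!}_{(\mathbf{o},m)}\EqualDist X^{!}_{(\mathbf{y},m)}$ recorded in Section \ref{subsec:func}, applied before the $m$-averaging.

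The routine part is the indicator reduction of the product inside the functional; the step demanding the most care is the Palm computation, where one must legitimately interchange the averaging over $m\in C$ with the Palm expectation, recognise the result as $P^{!}_{\mathbf{y},C}$ through (\ref{eq:nu:average:palm}), and then justify independence of the base point via the reduced-Palm isotropy relation rather than the ordinary rotation invariance used for $F^{E}_{\rm inhom}$. A secondary point worth flagging is that the clean degeneration of $u^r_{\mathbf{y},E}$ to an indicator genuinely relies on the canonical choice $\nu=K$ (or, more generally, on $\rho$ being constant over $E$, as holds automatically in the single-mark multi-type case); without it the reweighting factor would survive and the functionals would differ.
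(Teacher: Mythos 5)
Your proposal is correct and follows essentially the same route as the paper's own proof: isotropy (with the canonical mark measure $\nu=K$) makes $\rho$ constant so that $\bar\rho_E/\rho\equiv 1$, the product inside the (Palm) generating functional collapses to the void indicator, and isotropy then removes the dependence on $\mathbf{y}$ so that setting $\mathbf{y}=\mathbf{o}$ recovers (\ref{eq:iso:sphere:F}) and (\ref{eq:iso:sphere:D}). Your explicit appeal to the reduced-Palm isotropy relation to justify the $\mathbf{y}$-independence of $D^{CE}_{\rm inhom}$ is a slightly more careful justification of a step the paper asserts directly, but it is not a different argument.
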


Proof: See Appendix \ref{proof:equiv:iso:inhom}.

\subsection{Inhomogeneous cross $J$ function}
To define the inhomogeneous $J$ function, and following the formulation of \cite{Cronie2016}, we first note that in the isotropic setting, the $J$ function admits a power series representation
\begin{equation}
\label{eq:JPS}
J^{CE}(r) = \frac{1}{\nu(C)}\left\{\nu(C)+\sum_{n=1}^{\infty}\frac{(-\rho_E)^n}{n!}J^{CE}_n(r)\right\},
\end{equation}
for $r\in[0,\pi]$, where
\begin{multline}
J^{CE}_n(r) = \int\displaylimits_C\phantom{A}\idotsint\displaylimits_{\{B_{\mathbb{S}^2}(\mathbf{o},r)\times E\}^n}  \xi_{n+1}\{(\mathbf{y},m),(O_{\mathbf{y}}\mathbf{x}_1,m_1),\dots,(O_{\mathbf{y}}\mathbf{x}_n,m_n)\} \\ \times\nu(m)\lambda_{\mathbb{S}^2}(d\mathbf{x}_1)\nu(m_1)\cdots\lambda_{\mathbb{S}^2}(d\mathbf{x}_n)\nu(m_n),
\label{eq:Jn}
\end{multline}
for $\mathbf{y}\in\mathbb{S}^2$. This is the spheroidal analogue of the infinite series expansion for the $J^{CE}$ function given by \cite{vanLieshout2006} for Euclidean processes. For isotropic process $X$, and for all $n\geq 1$, $J^{CE}_n(r)$ will be independent of the choice of $\mathbf{y}$, and by (\ref{eq:JPS}), so will $J^{CE}(r)$. In fact, $J^{CE}_n(r)$ will also be independent of the choice of $\mathbf{y}$ for marked IRWMI processes, and therefore we can define the inhomogeneous $J$ function in the same way.
\begin{definition}
		Let $X$ be a marked IRWMI processes on $\mathbb{S}^2$ with intensity function $\rho(\mathbf{x},m)$, where $\bar\rho>0$. Let $C,E\subset\mathcal{M}$ The inhomogeneous cross $J$ function between $C$ and $E$ is defined as
	\begin{equation*}
	J^{CE}_{\rm{inhom}}(r) = \frac{1}{\nu(C)}\left\{\nu(C)+\sum_{n=1}^{\infty}\frac{(-\bar{\rho}_E)^n}{n!}J^{CE}_n(r)\right\},
	\end{equation*}
	for $r\in[0,\pi]$ where $J^{CE}_n(r)$ is as in (\ref{eq:Jn}).
\end{definition}

It is clear from this definition that when $X$ is isotropic $J^{CE}_{\rm inhom}\equiv J^{CE}$ by the series representation of the $J^{CE}$ function \cite{vanLieshout2006}.

The following theorem shows the relationship between $F_{\rm inhom}^{E}$, $D_{\rm inhom}^{CE}$ and $J^{CE}_{\rm{inhom}}$ whilst highlighting that the choice of $\mathbf{y}$ in Definition \ref{def:fss} is merely a matter of convenience.

\begin{theorem} \label{thm:J:inhom:generating}
	Let $X$ be a marked IRWMI processes on $\mathbb{S}^2$. Under the further assumption that all the intensity functions of all orders of $X$ exist and that
	\begin{equation*}
	\limsup_{n\rightarrow\infty}\left(\frac{\bar{\rho}_E}{n!}\idotsint\displaylimits_{(B_{\mathbb{S}^2}(\mathbf{o},r)\times E)^n} \frac{\rho^{(n)}(\mathbf{x}_1,m_1),\dots,(\mathbf{x}_n,m_n)}{\rho(\mathbf{x}_1,m_1)\cdots\rho(\mathbf{x}_n,m_n)}\prod_{i=1}^n \lambda_{\mathbb{S}^2}(d\mathbf{x}_i)\nu(m_i)\right)^{1/n}<1,
	\end{equation*}
	then $D_{\rm inhom}^{CE}(\cdot)$, $F_{\rm inhom}^{E}(\cdot)$ and $J_{\rm inhom}^{CE}(\cdot)$ are independent of $\mathbf{y}$ and
	$$
	J^{CE}_{\rm inhom}(r) = \frac{1-D^{CE}_{\rm inhom}(r)}{1-F^E_{\rm inhom}(r)},
	$$
  for all $r\in [0,\pi]$ for which $F^{E}_{\rm inhom}(r)\neq 1$.
\end{theorem}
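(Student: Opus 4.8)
The plan is to write both $1-F^E_{\rm inhom}(r)$ and $1-D^{CE}_{\rm inhom}(r)$ as power series in $\bar{\rho}_E$, and to show that the first is the denominator and the second the numerator of a quotient that collapses to the series defining $J^{CE}_{\rm inhom}(r)$. With $u^r_{\mathbf{y},E}$ as in (\ref{eq:u}), substituting $\prod_i u^r_{\mathbf{y},E}(\mathbf{x}_i,m_i)=\bar{\rho}_E^{\,n}\mathbbm{1}\{\text{all }(\mathbf{x}_i,m_i)\in B_{\mathbb{S}^2}(\mathbf{y},r)\times E\}/\prod_i\rho(\mathbf{x}_i,m_i)$ into the generating functional of Definition \ref{def:fss} and expanding the product via Campbell's formula gives
\begin{equation*}
1-F^E_{\rm inhom}(r) = \sum_{n=0}^{\infty}\frac{(-\bar{\rho}_E)^n}{n!}\Phi_n(\mathbf{y}),\qquad
\Phi_n(\mathbf{y}) = \idotsint_{(B_{\mathbb{S}^2}(\mathbf{y},r)\times E)^n}\frac{\rho^{(n)}\{(\mathbf{x}_1,m_1),\dots,(\mathbf{x}_n,m_n)\}}{\prod_{i=1}^n\rho(\mathbf{x}_i,m_i)}\prod_{i=1}^n\lambda_{\mathbb{S}^2}(d\mathbf{x}_i)\nu(dm_i),
\end{equation*}
with $\Phi_0(\mathbf{y})=1$. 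The stated $\limsup$ hypothesis is exactly the Cauchy--Hadamard root-test bound ensuring this series converges absolutely; the same bound, together with the comparisons below, controls the two remaining series, legitimising every term-by-term rearrangement and the final division.

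For the numerator I would use the standard identity $\rho^{(n)}_{(\mathbf{y},m)}\{(\mathbf{x}_1,m_1),\dots\}=\rho^{(n+1)}\{(\mathbf{y},m),(\mathbf{x}_1,m_1),\dots\}/\rho(\mathbf{y},m)$ relating the intensity functions of the reduced Palm process $X^!_{(\mathbf{y},m)}$ to those of $X$. The same substitution in $G^!_{\mathbf{y},C}(u^r_{\mathbf{y},E})$, followed by the $\nu$-averaging over $C$ that defines $G^!_{\mathbf{y},C}$, yields
\begin{multline*}
1-D^{CE}_{\rm inhom}(r) = \sum_{n=0}^{\infty}\frac{(-\bar{\rho}_E)^n}{n!}\Psi_n(\mathbf{y}),\qquad\text{where}\\
\Psi_n(\mathbf{y}) = \frac{1}{\nu(C)}\int_C\idotsint_{(B_{\mathbb{S}^2}(\mathbf{y},r)\times E)^n}\frac{\rho^{(n+1)}\{(\mathbf{y},m),(\mathbf{x}_1,m_1),\dots,(\mathbf{x}_n,m_n)\}}{\rho(\mathbf{y},m)\prod_{i=1}^n\rho(\mathbf{x}_i,m_i)}\nu(dm)\prod_{i=1}^n\lambda_{\mathbb{S}^2}(d\mathbf{x}_i)\nu(dm_i),
\end{multline*}
with $\Psi_0(\mathbf{y})=1$. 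Reading the two displays as exponential generating series in $-\bar{\rho}_E$ and setting $J^{CE}_0:=\nu(C)$, the claimed identity $J^{CE}_{\rm inhom}=(1-D^{CE}_{\rm inhom})/(1-F^E_{\rm inhom})$ is equivalent to the binomial-convolution identity
\begin{equation*}
\Psi_n(\mathbf{y}) = \sum_{k=0}^{n}\binom{n}{k}\frac{J^{CE}_k(r)}{\nu(C)}\,\Phi_{n-k}(\mathbf{y}),\qquad n\geq 0,
\end{equation*}
with $J^{CE}_k$ as in (\ref{eq:Jn}); equivalently, the Cauchy product of the $F$-series with the $J^{CE}_{\rm inhom}$-series equals the $D$-series.

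The combinatorial core, which I expect to be the main obstacle, is to prove this convolution. I would substitute the cluster expansion (\ref{eq:n:correl:fun}), which expresses each normalised ratio $\rho^{(k)}/\prod_i\rho$ as a sum over set partitions of products of correlation functions $\xi_{|E_j|}$, into both $\Phi_n$ and $\Psi_n$. In $\Psi_n$ the distinguished Palm point $(\mathbf{y},m)$ belongs to exactly one block of every partition of $\{(\mathbf{y},m),(\mathbf{x}_1,m_1),\dots,(\mathbf{x}_n,m_n)\}$; grouping the partitions according to which $k$-subset of the $(\mathbf{x}_i,m_i)$ shares that block factorises the integrand into the contribution of the block containing $(\mathbf{y},m)$, namely $\xi_{k+1}$, times the contribution of an arbitrary partition of the remaining $n-k$ points, which re-assembles into the integrand of $\Phi_{n-k}$. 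Integrating and $\nu$-averaging, the distinguished-block term becomes $J^{CE}_k/\nu(C)$ and the residual term becomes $\Phi_{n-k}$, while $\binom{n}{k}$ counts the choices of the distinguished $k$-subset; this is precisely the displayed convolution, and it is what forces the quotient to telescope to $J^{CE}_{\rm inhom}(r)$.

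Finally, independence of $\mathbf{y}$ is handled uniformly for all three functions by the change of variables $\mathbf{x}_i\mapsto O_{\mathbf{y}}\mathbf{x}_i$, which maps $B_{\mathbb{S}^2}(\mathbf{o},r)$ isometrically onto $B_{\mathbb{S}^2}(\mathbf{y},r)$: under the IRWMI hypothesis of Definition \ref{def:IRWMI} every $\xi_n$ is rotation invariant, so each $\Phi_n(\mathbf{y})$ and $\Psi_n(\mathbf{y})$ equals its value at $\mathbf{y}=\mathbf{o}$, and the rotation $O_{\mathbf{y}}$ already built into (\ref{eq:Jn}) is exactly the one that turns the distinguished-block integral over $B_{\mathbb{S}^2}(\mathbf{y},r)$ into the $\mathbf{y}$-free quantity $J^{CE}_k(r)$. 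The only genuine analytic subtlety is the absolute convergence supplied by the stated $\limsup$ condition, which throughout permits the interchange of summation and integration and the division of the two series. This mirrors the Euclidean arguments of \cite{vanLieshout2006} and \cite{Cronie2016}, with rotations on $\mathbb{S}^2$ replacing translations.
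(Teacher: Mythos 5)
Your proposal is correct and takes essentially the same route as the paper's proof: the same series expansions of $G(1-u^r_{\mathbf{y},E})$ and $G^!_{\mathbf{y},C}(1-u^r_{\mathbf{y},E})$, the same Palm-to-ordinary intensity identity (which the paper isolates as a lemma and justifies via the Campbell--Mecke formula), the same cluster-expansion step grouping partitions by the block containing the distinguished Palm point with binomial counting, and the same appeal to rotation invariance of the $\xi_n$ under IRWMI for independence of $\mathbf{y}$. The only difference is organisational: you phrase the combinatorial core as a termwise binomial convolution $\Psi_n=\sum_{k=0}^{n}\binom{n}{k}\{J^{CE}_k(r)/\nu(C)\}\Phi_{n-k}$ and conclude by a Cauchy product, whereas the paper performs the equivalent re-summation and factorisation directly on the full series.
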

Proof: See Appendix \ref{proof:J:inhom:generating} which adapts the proof of Theorem 1 in \cite{Cronie2016}.
\begin{proposition}\label{prop:J:inhom:equal:1}
	Let $X$ be a marked IRWMI point process on $\mathbb{S}^2$ and consider two disjoint sets $C,E\subset\mathcal{M}$ such that they have positive measure $\nu$. Further assume that $X_C$ and $X_E$ are independent and that the assumptions of Theorem \ref{thm:J:inhom:generating} hold. Then,
	\begin{equation*}
	D^{CE}_{\rm inhom}(r) = F^{E}_{\rm inhom}(r)
	\end{equation*}
	and so $J^{CE}_{\rm inhom}(r)\equiv 1$.
\end{proposition}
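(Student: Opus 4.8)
The plan is to reduce the entire statement to the single identity $D^{CE}_{\rm inhom}(r) = F^{E}_{\rm inhom}(r)$: once this is established, the conclusion $J^{CE}_{\rm inhom}(r) \equiv 1$ is immediate from the ratio formula of Theorem \ref{thm:J:inhom:generating}, valid at every $r$ for which $F^E_{\rm inhom}(r)\neq 1$ (the remaining case being covered by the convention already in force). By Definition \ref{def:fss} the two quantities are $1 - D^{CE}_{\rm inhom}(r) = G^!_{\mathbf{y},C}(1 - u^r_{\mathbf{y},E})$ and $1 - F^{E}_{\rm inhom}(r) = G(1 - u^r_{\mathbf{y},E})$, so everything comes down to showing that the ordinary generating functional and the $\nu$-averaged reduced Palm generating functional coincide when evaluated at the particular test function $1 - u^r_{\mathbf{y},E}$.

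The decisive structural observation is that $u^r_{\mathbf{y},E}$ is supported on $B_{\mathbb{S}^2}(\mathbf{y},r)\times E$, so the product $\prod_{(\mathbf{z},n)\in X}\{1 - u^r_{\mathbf{y},E}(\mathbf{z},n)\}$ is a bounded measurable functional of $X_E = X\cap(\mathbb{S}^2\times E)$ alone; note that $0\le u^r_{\mathbf{y},E}\le 1$ because $\rho\ge\bar\rho_E$ on $E$, so this is a legitimate generating-functional argument. Writing $h(X_E)$ for this functional, the identity I must prove becomes $\mathbb{E}^!_{\mathbf{y},C}\{h(X_E)\} = \mathbb{E}\{h(X_E)\}$, since $G^!_{\mathbf{y},C}(1-u^r_{\mathbf{y},E}) = \mathbb{E}^!_{\mathbf{y},C}\{h(X_E)\}$ and $G(1-u^r_{\mathbf{y},E}) = \mathbb{E}\{h(X_E)\}$.

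I would establish this through the reduced Campbell measure. For any $A\subseteq\mathbb{S}^2\times C$ and any event $N$ depending only on the $E$-marked points, a point $(\mathbf{x},m)\in A$ has $m\in C$; since $C\cap E=\emptyset$, deleting it leaves $X_E$ unchanged, so $\mathbbm{1}[X\setminus(\mathbf{x},m)\in N]$ collapses to $\mathbbm{1}[X_E\in N_E]$. The independence of $X_C$ and $X_E$ then factorises $C^!(A\times N) = \mu(A)\,\mathbb{P}(X_E\in N_E)$. Comparing this with $C^!(A\times N)=\int_A P^!_{(\mathbf{x},m)}(N)\,\mu(d\mathbf{x},dm)$ and letting $A$ range over all subsets of $\mathbb{S}^2\times C$ yields $P^!_{(\mathbf{x},m)}(N) = \mathbb{P}(X_E\in N_E)$ for $\mu$-almost every $(\mathbf{x},m)$ with $m\in C$. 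A monotone-class / simple-function approximation upgrades this from events $N$ to the bounded $[0,1]$-valued functional $h$, giving $\mathbb{E}^!_{(\mathbf{x},m)}\{h(X_E)\}=\mathbb{E}\{h(X_E)\}$; averaging over $m\in C$ against $\nu$ as in the definition of $G^!_{\mathbf{y},C}$ cancels the $1/\nu(C)$ factor and produces the desired equality.

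The hard part will be the interplay between the almost-everywhere nature of the Radon--Nikodym identity and the evaluation at a fixed base point $\mathbf{y}$: the Campbell-measure argument only controls $P^!_{(\mathbf{x},m)}$ for $\mu$-a.e.\ $(\mathbf{x},m)$, whereas Definition \ref{def:fss} fixes $\mathbf{y}$. I would resolve this by invoking Theorem \ref{thm:J:inhom:generating}, which already guarantees that $D^{CE}_{\rm inhom}$ and $F^{E}_{\rm inhom}$ are independent of the choice of $\mathbf{y}$ for IRWMI processes: the established equality then holds for almost every $\mathbf{y}$, and base-point invariance promotes it to the chosen $\mathbf{y}$. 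The remaining ingredients—measurability of $h$ with respect to the $\sigma$-algebra generated by $X_E$ and the admissibility of interchanging the $\nu$-average with the expectation—are routine given the uniform bound on $u^r_{\mathbf{y},E}$.
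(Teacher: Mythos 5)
Your proposal is correct, but it establishes the key identity $G^!_{\mathbf{y},C}(1-u^r_{\mathbf{y},E})=G(1-u^r_{\mathbf{y},E})$ by a genuinely different mechanism than the paper. The paper stays inside the analytic framework of Theorem \ref{thm:J:inhom:generating}: it expands both generating functionals as infinite series in the higher-order intensity functions and then invokes the factorisation (\ref{eq:density:independent}), i.e.\ $\rho^{(n+1)}\{(\mathbf{o},a),(\mathbf{x}_1,m_1),\dots,(\mathbf{x}_n,m_n)\}=\rho(\mathbf{o},a)\,\rho^{(n)}\{(\mathbf{x}_1,m_1),\dots,(\mathbf{x}_n,m_n)\}$ for $a\in C$ and $m_i\in E$ (a consequence of independence and uniqueness of product densities), so that the factor $\rho(\mathbf{o},a)$ cancels term by term and the two series coincide. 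You instead argue at the level of distributions: the reduced Campbell measure factorises as $C^!(A\times N)=\mu(A)\,\mathbb{P}(X_E\in N_E)$ for $A\subseteq\mathbb{S}^2\times C$ and $X_E$-measurable events $N$, because deleting a $C$-marked point cannot alter $X_E$ and $X_C$, $X_E$ are independent; hence the reduced Palm distribution restricted to the $\sigma$-algebra generated by $X_E$ is the unconditional law of $X_E$ for $\mu$-a.e.\ point with mark in $C$, and a monotone-class upgrade plus the $\nu$-average over $m\in C$ gives the claim. Your route is conceptually sharper: it exposes the probabilistic content (Palm conditioning on a point of an independent component does nothing to $X_E$) and needs neither the series expansion nor, for the core identity $D^{CE}_{\rm inhom}=F^{E}_{\rm inhom}$, the convergence hypothesis; the paper's route is shorter given that the series machinery and Lemma \ref{lemma:app:1} are already in place from the proof of Theorem \ref{thm:J:inhom:generating}. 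Two caveats on yours, both of which you correctly anticipate and resolve: the exceptional $\mu$-null set must be made uniform over a countable generating class before averaging over $m\in C$, and the passage from $\lambda_{\mathbb{S}^2}$-a.e.\ $\mathbf{y}$ to the fixed $\mathbf{y}$, as well as the conversion of $D^{CE}_{\rm inhom}=F^{E}_{\rm inhom}$ into $J^{CE}_{\rm inhom}\equiv 1$, genuinely require Theorem \ref{thm:J:inhom:generating}, so your argument cannot be decoupled from its hypotheses --- but since the proposition explicitly assumes them, this is legitimate.
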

Proof: see Appendix \ref{proof:J:inhom:equal:1}, which adapts the proof of Proposition 2 in \cite{Cronie2016}.

\subsection{Estimating functional summary statistics}
\label{subsec:est:inhomo}

In order to estimate the inhomogeneous functional summary statistics we follow \cite{Cronie2016} to obtain estimates for $J^{CE}_{\rm inhom}$ whilst estimates for $K^{CE}_{\rm inhom}$ follow similarly to the isotropic setting. Let us suppose that we have observed our spheroidal point process $X$ through some window $W\subseteq\mathbb{S}^2$. Then we can estimate our functional summary statistics as,

\begin{align}
1-\hat{D}^{CE}_{\rm inhom}(r) &= \frac{1}{\lambda_{\mathbb{S}^2}(W_{\ominus r})\nu(C)}\sum_{(\mathbf{x},m_{\mathbf{x}})\in X} \frac{\mathbbm{1}\{(\mathbf{x},m_{\mathbf{x}})\in W_{\ominus r}\times C\}}{\rho(\mathbf{x},m_\mathbf{x})} \nonumber \\ &  \qquad\qquad\qquad\times\prod_{(\mathbf{y},m_{\mathbf{y}})\in X}\left[1- \frac{\bar{\rho}_E\mathbbm{1}\{d_{\mathbb{S}^2}(\mathbf{x},\mathbf{y}) < r,m_{\mathbf{y}}\in E\}}{\rho(\mathbf{y},m_{\mathbf{y}})}\right]\label{eq:inhom:estimator:D}\\
1-\hat{F}^{E}_{\rm inhom}(r) &=\frac{1}{|I_{W_{\ominus r}}|}\sum_{\mathbf{p}\in I_{W_{\ominus r}}} \prod_{(\mathbf{x},m)\in X} \left[1-\frac{\bar{\rho}_E \mathbbm{1}\{d_{\mathbb{S}^2}(\mathbf{p},\mathbf{x})\leq r, m\in E\}}{\rho(\mathbf{x},m_\mathbf{x})}\right]\label{eq:inhom:estimator:F}\\
\hat{J}^{CE}_{\rm inhom}(r) &= \frac{1-\hat{D}^{CE}(r)}{1-\hat{F}^{E}(r)}\quad \text{for } \hat{F}^{E}(r)< 1\label{eq:inhom:estimator:J}\\
\hat{K}^{CE}_{\rm inhom}(r) &= \frac{1}{\lambda_{\mathbb{S}^2}(W_{\ominus r})\nu(C)\nu(E)} \times \nonumber\\ & \qquad\qquad \sum_{(\mathbf{x},m_{\mathbf{x}})\in X}\sum_{(\mathbf{y},m_{\mathbf{y}})\in X\setminus (\mathbf{x},m_{\mathbf{x}})}\frac{\mathbbm{1}\{(\mathbf{x},m_\mathbf{x})\in W_{\ominus r}\times C, d_{\mathbb{S}^2}(\mathbf{x},\mathbf{y})\leq r, m_{\mathbf{y}}\in E\}}{\rho(\mathbf{x},m_\mathbf{x})\rho(\mathbf{y},m_\mathbf{y})}\label{eq:inhom:estimator:K}
\end{align}

\begin{proposition}\label{prop:unbiased:inhom}
	Let $X$ be a marked process on $\mathbb{S}^2$. When the assumptions of Theorem \ref{thm:J:inhom:generating} hold, $\hat{D}^{CE}_{\rm inhom}$ and $\hat{F}^{E}_{\rm inhom}$ are unbiased whilst $\hat{J}^{CE}_{\rm inhom}$ is ratio unbiased for known $\rho$. Furthermore, when $X$ is SOIRWI, $\hat{K}^{CE}_{\rm inhom}$ is unbiased for known $\rho$.
\end{proposition}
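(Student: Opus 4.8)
The plan is to handle each estimator in turn, taking expectations and matching the result to the defining object of the corresponding summary statistic: the generating functional of Definition \ref{def:fss} for $\hat{F}^{E}_{\rm inhom}$ and $\hat{D}^{CE}_{\rm inhom}$, and the second-order measure (\ref{eq:second:order:measure}) for $\hat{K}^{CE}_{\rm inhom}$. Throughout, $\rho$ is taken to be the true (known) intensity plugged into the estimators. The invariance results already established --- independence of the base point $\mathbf{y}$ from Theorem \ref{thm:J:inhom:generating}, and independence of the set $A$ built into SOIRWI --- are precisely what let the averaging in each estimator collapse to the target value. Ratio-unbiasedness of $\hat{J}^{CE}_{\rm inhom}$ then follows at once from the unbiasedness of its numerator and denominator.

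For $\hat{F}^{E}_{\rm inhom}$ in (\ref{eq:inhom:estimator:F}) I would use that $I_{W_{\ominus r}}$ is a deterministic grid, so expectation passes through the finite average. Each summand is $\prod_{(\mathbf{x},m)\in X}\{1-u^r_{\mathbf{p},E}(\mathbf{x},m)\}$ with $u^r_{\mathbf{p},E}$ as in (\ref{eq:u}), whose expectation is exactly the generating functional appearing in Definition \ref{def:fss} and hence equals $1-F^{E}_{\rm inhom}(r)$; by Theorem \ref{thm:J:inhom:generating} this value is the same at every grid point $\mathbf{p}$, so the average returns $1-F^{E}_{\rm inhom}(r)$, giving unbiasedness. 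For $\hat{J}^{CE}_{\rm inhom}$ in (\ref{eq:inhom:estimator:J}), ratio-unbiasedness is then immediate: once $\mathbb{E}[1-\hat{D}^{CE}_{\rm inhom}(r)]=1-D^{CE}_{\rm inhom}(r)$ and $\mathbb{E}[1-\hat{F}^{E}_{\rm inhom}(r)]=1-F^{E}_{\rm inhom}(r)$ are in hand, the ratio of these expectations is $J^{CE}_{\rm inhom}(r)$ by Theorem \ref{thm:J:inhom:generating}.

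For $\hat{D}^{CE}_{\rm inhom}$ in (\ref{eq:inhom:estimator:D}) the main tool is the Campbell--Mecke formula. I would first write the summand as a function of $(\mathbf{x},m_{\mathbf{x}})$ and $X\setminus(\mathbf{x},m_{\mathbf{x}})$, so that Campbell--Mecke replaces the sum by $\int \mathbb{E}^{!}_{(\mathbf{x},m_{\mathbf{x}})}[\cdots]\,\mu(d\mathbf{x},dm_{\mathbf{x}})$. The factor $\mathbbm{1}\{(\mathbf{x},m_{\mathbf{x}})\in W_{\ominus r}\times C\}/\rho(\mathbf{x},m_{\mathbf{x}})$ is non-random given $(\mathbf{x},m_{\mathbf{x}})$, so it leaves the Palm expectation, and since $\mu(d\mathbf{x},dm_{\mathbf{x}})=\rho(\mathbf{x},m_{\mathbf{x}})\lambda_{\mathbb{S}^2}(d\mathbf{x})\nu(dm_{\mathbf{x}})$ the intensity cancels. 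Integrating the remaining Palm expectation of $\prod_{(\mathbf{y},m_{\mathbf{y}})\in X}\{1-u^r_{\mathbf{x},E}(\mathbf{y},m_{\mathbf{y}})\}$ over $m_{\mathbf{x}}\in C$ and normalising by $\nu(C)$ produces exactly the $\nu$-averaged reduced Palm generating functional $G^{!}_{\mathbf{x},C}$ of Definition \ref{def:fss}, which equals $1-D^{CE}_{\rm inhom}(r)$, so that
\[
\mathbb{E}[1-\hat{D}^{CE}_{\rm inhom}(r)]=\frac{1}{\lambda_{\mathbb{S}^2}(W_{\ominus r})}\int_{W_{\ominus r}} G^{!}_{\mathbf{x},C}(1-u^r_{\mathbf{x},E})\,\lambda_{\mathbb{S}^2}(d\mathbf{x}).
\]
Theorem \ref{thm:J:inhom:generating} makes the integrand independent of $\mathbf{x}$, so the integral equals $1-D^{CE}_{\rm inhom}(r)$. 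One detail to verify is the self-term $\mathbf{y}=\mathbf{x}$ in the estimator's product: it contributes a factor $1$ because a point with mark in $C$ cannot have mark in the disjoint set $E$, so the product over $X$ matches the product over the reduced Palm process appearing in $G^{!}_{\mathbf{x},C}$.

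For $\hat{K}^{CE}_{\rm inhom}$ in (\ref{eq:inhom:estimator:K}), the observation to exploit is that $O_{\mathbf{x}}\mathbf{o}=\mathbf{x}$ and rotations preserve geodesic distance, whence
\[
\mathbbm{1}\{d_{\mathbb{S}^2}(\mathbf{x},\mathbf{y})\leq r\}=\mathbbm{1}\{O_{\mathbf{x}}^T\mathbf{y}\in B_{\mathbb{S}^2}(\mathbf{o},r)\}.
\]
This rewrites the double sum in the estimator into precisely the right-hand side of (\ref{eq:second:order:measure}) with $A=W_{\ominus r}$ and $B=B_{\mathbb{S}^2}(\mathbf{o},r)$; taking expectations therefore yields, by the defining relation (\ref{eq:second:order:measure}), the quantity $\lambda_{\mathbb{S}^2}(W_{\ominus r})\nu(C)\nu(E)\mathcal{K}^{CE}(B_{\mathbb{S}^2}(\mathbf{o},r))$, and because SOIRWI forces $\mathcal{K}^{CE}$ to be independent of $A=W_{\ominus r}$ the normalisation leaves $\mathcal{K}^{CE}(B_{\mathbb{S}^2}(\mathbf{o},r))=K^{CE}_{\rm inhom}(r)$. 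I expect the main obstacle to be the $\hat{D}^{CE}_{\rm inhom}$ step: correctly recasting the estimator's product over all of $X$ into a Campbell--Mecke-admissible function, matching it to the reduced Palm generating functional $G^{!}_{\mathbf{x},C}$ rather than to the ordinary $G$, and disposing of the self-interaction factor; once the $\mathbf{y}$- and $A$-invariance results are invoked, everything else is bookkeeping.
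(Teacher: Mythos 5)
Your proposal is correct and follows essentially the same route as the paper's proof: expectation of the grid average equals the generating functional $G(1-u^r_{\mathbf{p},E})$ for $\hat{F}^{E}_{\rm inhom}$, Campbell--Mecke with cancellation of $\rho$ against $\mu(d\mathbf{x},dm)$ yielding $G^{!}_{\mathbf{x},C}(1-u^r_{\mathbf{x},E})$ for $\hat{D}^{CE}_{\rm inhom}$, invariance in the base point via Theorem \ref{thm:J:inhom:generating}, ratio-unbiasedness of $\hat{J}^{CE}_{\rm inhom}$ as an immediate consequence, and matching $\hat{K}^{CE}_{\rm inhom}$ to the defining relation (\ref{eq:second:order:measure}) under SOIRWI. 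Your two added details --- the self-term $\mathbf{y}=\mathbf{x}$ contributing a factor of $1$ because $C$ and $E$ are disjoint, and the rotation identity $\mathbbm{1}\{d_{\mathbb{S}^2}(\mathbf{x},\mathbf{y})\leq r\}=\mathbbm{1}\{O_{\mathbf{x}}^T\mathbf{y}\in B_{\mathbb{S}^2}(\mathbf{o},r)\}$ --- are points the paper glosses over, and both are handled correctly.
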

Proof: see Appendix \ref{proof:unbiased:inhom} that adapts the proof of Lemma 1 \cite{Cronie2016}.

In most circumstances the intensity function $\rho$ is unknown, requiring a plug-in estimator to be used. When needed, we deploy the kernel intensity estimator of \cite{Ward2023} developed for point processes on Riemannian manifolds.

It may also be the case that $\nu$ is unknown.  \cite{Stoyan2000} advocate replacing $\lambda_{\mathbb{S}^2}(B)\nu(C)$ with
	\begin{equation}\label{eq:stoyan:stoyan:estimator}
	\sum_{(\mathbf{x},m)\in X\cap(B\times C)}\frac{1}{\rho(\mathbf{x},m)}.
	\end{equation}
	The Campbell formula shows this is an unbiased estimator for $\lambda_{\mathbb{S}^2}(B)\nu(C)$. Furthermore, for $\hat{K}^{CE}_{\rm inhom}$,  \cite[][Appendix B]{Iftimi2019} discuss a number of approaches to estimating $\lambda_{\mathbb{S}^2}(B)\nu(C)\nu(E)$. One such approach recognises
	\begin{equation}\label{eq:stoyan:stoyan:nu:estimator}
	\widehat{\nu(C)}=\frac{1}{\lambda_{\mathbb{S}^2}(B)}\sum_{(\mathbf{x},m)\in X\cap(B\times C)}\frac{1}{\rho(\mathbf{x},m)}
	\end{equation}
	is an unbiased estimator for $\nu(C)$.

In practice, our analysis can be re-cast in the terms of multi-type processes, for which marks in $C$ are labelled with $i$ and marks in $E$ are labelled $j$. For this discrete mark case of $\mathcal{M}=\{1,\dots,k\}$,  $\rho(\mathbf{x},i)=\nu(i)\rho_i(\mathbf{x})=\rho_i(\mathbf{x})$ for $i\in\mathcal{M}$. For $i,j\in\mathcal{M}$ $(i\neq j)$, the estimators are
\begin{align*}
1-\hat{D}^{ij}_{\rm inhom}(r) &= \frac{1}{\lambda_{\mathbb{S}^2}(W_{\ominus r})}\sum_{\mathbf{x}\in X_i} \frac{\mathbbm{1}(\mathbf{x}\in W_{\ominus r})}{\rho_i(\mathbf{x})}\prod_{\mathbf{y}\in X_j}\left[1- \frac{\bar{\rho}_j\mathbbm{1}\{d_{\mathbb{S}^2}(\mathbf{x},\mathbf{y}) < r\}}{\rho_j(\mathbf{y})}\right]\\
1-\hat{F}^{j}_{\rm inhom}(r) &=\frac{1}{|I_{W_{\ominus r}}|}\sum_{\mathbf{p}\in I_{W_{\ominus r}}} \prod_{\mathbf{x}\in X_j} \left[1-\frac{\bar{\rho}_j \mathbbm{1}\{d_{\mathbb{S}^2}(\mathbf{p},\mathbf{x})\leq r\}}{\rho_j(\mathbf{x})}\right]\\
\hat{J}^{ij}_{\rm inhom}(r) &= \frac{1-\hat{D}^{ij}(r)}{1-\hat{F}^{j}(r)}\quad \text{for } \hat{F}^{j}(r)< 1\\
\hat{K}^{ij}_{\rm inhom}(r) &= \frac{1}{\lambda_{\mathbb{S}^2}(W_{\ominus r})}\sum_{\mathbf{x}\in X_i}\sum_{\mathbf{y}\in X_j}\frac{\mathbbm{1}\{\mathbf{x}\in W_{\ominus r}, d_{\mathbb{S}^2}(\mathbf{x},\mathbf{y})\leq r\}}{\rho_i(\mathbf{x})\rho_j(\mathbf{y})}.
\end{align*}
Again, plug-in estimators for $\rho_i$ and $\rho_j$ can be computed using the approach of \cite{Ward2023}.

\subsection{Testing for independence}
\label{test_inhom}
Consider again the null hypothesis $H$ that states $X_C$ and $X_E$ are independent. In Section \ref{test_isotropic} we considered random rotations of isotropic patterns to create samples of the functional summary statistics from the null. To handle inhomogeneous patterns on the sphere, we will show we can rotate both the component patterns and the intensity function to sample the functional summary statistics from the null. This is similar to inhomogeneous processes in the Euclidean setting, where \cite{Cronie2016} translate the pattern and the intensity function. 

First consider the random measure $\Xi^X$ which places mass $1/\rho(\mathbf{x},m)$ at points $(\mathbf{x},m)\in X$,
\begin{equation*}
\Xi^X = \sum_{(\mathbf{x},m)\in X} \frac{\delta_{(\mathbf{x},m)}}{\rho(\mathbf{x},m)}.
\end{equation*}
Then for $B\times C\in \mathcal{B}(\mathbb{S}^2\times \mathcal{M})$,
\begin{equation*}
\Xi^X(B\times C) = \sum_{(\mathbf{x},m)\in X} \frac{\mathbbm{1}\{(\mathbf{x},m)\in B\times C\}}{\rho(\mathbf{x},m)}.
\end{equation*}
We define $\Xi_{O}^X(B\times C)$, the rotation of $\Xi^X$ by $O\in\mathcal{O}(3)$ as
$$
\Xi_{O}^X(B\times C) = \Xi^X(OB\times C)= \sum_{(\mathbf{x},m)\in X} \frac{\mathbbm{1}\{(O^{T}\mathbf{x},m)\in B\times C\}}{\rho(\mathbf{x},m)}= \sum_{(\mathbf{x},m)\in O^TX} \frac{\mathbbm{1}\{(\mathbf{x},m)\in B\times C\}}{\rho(O\mathbf{x},m)}.
$$
We say that $\Xi^X$ is isotropic if $\Xi^X\EqualDist \Xi_O^X$ for any $O\in\mathcal{O}(3)$, and if $X$ is IRWMI then the factorial moment measures of $\Xi^X$ and $\Xi_O^X$ are identical for any $O\in\mathcal{O}(3)$.

\begin{proposition}\label{prop:rotation:inhom}
	Let $X$ be a IRWMI marked point process on $\mathbb{S}^2$ for which the assumptions of Theorem \ref{thm:J:inhom:generating} hold and let $C,E$ be disjoint subsets of $\mathcal{M}$ with strictly positive $\nu$ measure. If $X_C$ and $X_E$ are independent and the measure $\Xi^X$ is isotropic then the estimators $\hat D^{CE}_{\rm{inhom}}, \hat F^{E}_{\rm{inhom}}, \hat J^{CE}_{\rm{inhom}},$ and $\hat K^{CE}_{\rm{inhom}}$ can be written as a function of $(\Xi^{X_C},\Xi^{X_E})$ and $(\Xi^{X_C},\Xi_O^{X_E})\EqualDist (\Xi^{X_C},\Xi^{X_E})$.
\end{proposition}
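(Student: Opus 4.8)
The plan is to prove the two assertions in turn: first that each of the four estimators is a measurable functional of the pair of reweighted random measures $(\Xi^{X_C},\Xi^{X_E})$, and second the distributional identity $(\Xi^{X_C},\Xi^{X_E}_O)\EqualDist(\Xi^{X_C},\Xi^{X_E})$, which I would derive from isotropy of $\Xi^X$ together with the independence of $X_C$ and $X_E$. Combined with measurability of the functionals, these two facts immediately give that the estimators recomputed on a rotated $E$-component have the same (null) distribution as the originals.

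For the first assertion I would inspect (\ref{eq:inhom:estimator:D})--(\ref{eq:inhom:estimator:K}) and observe that, because $C\cap E=\emptyset$, every summation range and every product collapses onto atoms of a single component. In (\ref{eq:inhom:estimator:D}) the outer sum retains only points with mark in $C$, each contributing its location $\mathbf{x}$ and weight $1/\rho(\mathbf{x},m_{\mathbf{x}})$, which is exactly the atom mass carried by $\Xi^{X_C}$; the inner product retains only points with mark in $E$ (the term at $\mathbf{y}=\mathbf{x}$ contributing a factor $1$ since $m_{\mathbf{x}}\notin E$), each entering through $\mathbf{y}$ and the atom mass $1/\rho(\mathbf{y},m_{\mathbf{y}})$ of $\Xi^{X_E}$. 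The same reading shows that $\hat F^{E}_{\rm inhom}$ in (\ref{eq:inhom:estimator:F}) depends only on $\Xi^{X_E}$ (the grid $I_{W_{\ominus r}}$ being deterministic), that $\hat K^{CE}_{\rm inhom}$ in (\ref{eq:inhom:estimator:K}) is the integral of $\mathbbm{1}\{\mathbf{x}\in W_{\ominus r},\,d_{\mathbb{S}^2}(\mathbf{x},\mathbf{y})\le r\}$ against $\Xi^{X_C}\otimes\Xi^{X_E}$, and that $\hat J^{CE}_{\rm inhom}$ in (\ref{eq:inhom:estimator:J}) is their ratio. The constants $\nu(C),\nu(E),\bar\rho_E$ and the eroded window $W_{\ominus r}$ are fixed; in particular $\bar\rho_E$ is an infimum over all of $\mathbb{S}^2$ and so is unchanged when the intensity is rotated. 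The bookkeeping point I would stress is that the reweighting masses $1/\rho$ are stored as the atom masses of $\Xi^{X_C}$ and $\Xi^{X_E}$, so evaluating an estimator at $\Xi^{X_E}_O$ automatically employs the correspondingly rotated intensity at the rotated locations, i.e. it is precisely the estimator recomputed on the jointly rotated $E$-pattern and intensity.

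For the second assertion I would first transfer isotropy from $\Xi^X$ to the $E$-marginal. Since a rotation $O$ acts on the spatial coordinate only and fixes the mark, restriction to the mark set $E$ commutes with the map $\Xi\mapsto\Xi_O$: one checks directly that $(\Xi^X_O)|_E=\Xi^{X_E}_O$ while $\Xi^X|_E=\Xi^{X_E}$. As restriction to a fixed mark set is measurable, the assumed equality $\Xi^X\EqualDist\Xi^X_O$ descends to $\Xi^{X_E}\EqualDist\Xi^{X_E}_O$ for every $O\in\mathcal{O}(3)$, and identically $\Xi^{X_C}\EqualDist\Xi^{X_C}_O$. Next I would use independence: because $\Xi^{X_C}$ is a functional of $X_C$ and $\Xi^{X_E}$, hence $\Xi^{X_E}_O$ (as $O$ is a fixed deterministic operation on $X_E$), is a functional of $X_E$, independence of $X_C$ and $X_E$ makes $\Xi^{X_C}$ independent of each of $\Xi^{X_E}$ and $\Xi^{X_E}_O$. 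Therefore the law of $(\Xi^{X_C},\Xi^{X_E}_O)$ factorises as the product of the law of $\Xi^{X_C}$ with that of $\Xi^{X_E}_O$, which by the previous step equals the product of the law of $\Xi^{X_C}$ with that of $\Xi^{X_E}$, namely the law of $(\Xi^{X_C},\Xi^{X_E})$; this is the claimed identity.

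I expect the first assertion to be routine bookkeeping and the distributional identity to be the substantive content, with the single genuine subtlety being the passage from isotropy of the full measure $\Xi^X$ to isotropy of the component measure $\Xi^{X_E}$. The disjointness $C\cap E=\emptyset$ is what makes restriction-to-$E$ and rotation commute cleanly and what lets the two components appear as genuinely separate and, under the hypothesis, independent random measures. I would also flag that isotropy of $\Xi^X$ is assumed rather than derived here, since the IRWMI property only guarantees equality of the factorial moment measures of $\Xi^X$ and $\Xi^X_O$, not equality in distribution.
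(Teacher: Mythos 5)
Your proposal is correct and takes essentially the same two-step route as the paper: show each estimator is a functional of $(\Xi^{X_C},\Xi^{X_E})$ (the paper does this more formally, expanding the products by inclusion--exclusion into factorial power measures and writing $\hat D^{CE}_{\rm inhom}$ and $\hat K^{CE}_{\rm inhom}$ as integrals against $\Xi^{X_C}$), then combine isotropy of $\Xi$ with independence of the components to conclude $(\Xi^{X_C},\Xi^{X_E}_O)\EqualDist(\Xi^{X_C},\Xi^{X_E})$. Your spelled-out argument for the distributional identity and your closing observation that isotropy of $\Xi^X$ must be assumed (IRWMI only yields equality of factorial moment measures) are both consistent with the paper, whose only additional content is handling unknown $\nu$ via the estimators (\ref{eq:stoyan:stoyan:estimator}) and (\ref{eq:stoyan:stoyan:nu:estimator}), which are themselves functions of $\Xi$.
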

Proof: see Appendix \ref{proof:rotation:inhom} that adapts the proof to \cite[Proposition 4][]{Cronie2016}.

Proposition \ref{prop:rotation:inhom} implies samples can be generated from the null distribution by randomly rotating one or both of $X_C$ and $X_E$, whilst also rotating the intensity function. Although this approach does not require any assumptions on the marginals, it does, as with its isotropic counterpart outlined in Section \ref{test_isotropic}, require the pattern to be completely observed on the sphere. A curiosity when analysing inhomogeneous processes with this random rotation method occurs at $r=\pi$. While, theoretically, $K_{\rm inhom}^{12}(\pi) = 4\pi$ always, in the inhomogeneous case $\hat{K}_{\rm inhom}^{12}(\pi)\neq 4\pi$ and in fact exhibits sampling variance due to its dependence on the pattern. For an observed bivariate point pattern $X=(X_1,X_2)$, suppose that we rotate $X_1$ with a random $O\in\mathcal{O}(3)$ to give a null simulate $\tilde{X} = (OX_1,X_2)$, then by construction $\hat{K}_{\tilde{X}}^{12}(\pi)=\hat{K}_{X}^{12}(\pi)$. Hence all null simulates will converge to the same value at $r=\pi$, which does not reflect the true variance of the estimator. Envelope plots should therefore be treated with caution in this region.

In scenarios where the point process is not completely observed or distant interactions are important we may consider imposing further assumptions on the null (e.g. Poisson) and sampling independent component patterns directly.

\section{Summary statistics for point processes on convex shapes}

\subsection{Mapping spatial point processes from $\mathbb{D}$ to $\mathbb{S}^2$}
Following the discussion given by \cite{Ward2020}, there are a number of impracticalities to defining functional summary statistics directly on a convex shape due to its potential asymmetric nature. This asymmetry means analogous notions of translation and rotational invariance cannot be well-defined, which in turn means functional summary statistics cannot be constructed directly. Instead we follow the procedure of \cite{Ward2020} that maps the point process from its original space onto the sphere where rotational symmetries can now be utilised to construct well-defined functional summary statistics. The following theorem shows that marked point processes on convex shapes can be mapped to a point process on the sphere under an appropriately transformed intensity function.

\begin{theorem}\label{thm:mapping}
	Let $X$ be a marked point process on $\mathbb{D}$ with intensity function $\rho$ and $f:\mathbb{D}\mapsto\mathbb{S}^2$ be a bijective measurable mapping. Then the point process $Y=f(X)\equiv\left[\{f(\mathbf{x}),m\}:(\mathbf{x},m)\in X\right]$, is a marked point process on $\mathbb{S}^2$ with intensity function,
	\begin{equation*}
	\rho^*(\mathbf{x},m) = \rho\{f^{-1}(\mathbf{x}),m\}|J_{f}(\mathbf{x})| \quad \mathbf{x}\in\mathbb{S}^2,
	\end{equation*}
	where $|J_{f}(\mathbf{x})|$ is the determinant of the Jacobian of the transformation $f$.
\end{theorem}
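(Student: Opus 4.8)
The plan is to push the intensity representation of $X$ forward through $f$ and then perform a change of variables between the two surfaces, identifying the transformed intensity by uniqueness of the Radon--Nikodym derivative. First I would check that $Y=f(X)$ is a bona fide marked point process on $\mathbb{S}^2$. Since $f$ acts only on the spatial coordinate and is a bijection, distinct events of $X$ are sent to distinct events of $Y$, so simplicity is preserved; since $\mathbb{D}$ is bounded and $f$ is measurable, for any bounded $A=B\times M\in\mathcal{B}(\mathbb{S}^2\times\mathcal{M})$ we have $|Y\cap A|=|X\cap(f^{-1}(B)\times M)|<\infty$, so $Y$ is locally finite; and measurability of $\omega\mapsto f(X(\omega))$ as a map into $(N_{lf},\mathcal{N})$ follows from measurability of $f$ on the generating sets of $\mathcal{N}$.

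Next I would compute the intensity measure of $Y$ directly. Because $f$ leaves marks unchanged, an event $(\mathbf{y},m)\in Y$ lies in $B\times M$ exactly when the corresponding event $(f^{-1}(\mathbf{y}),m)\in X$ lies in $f^{-1}(B)\times M$, so the counting measures satisfy $N_Y(B\times M)=N_X(f^{-1}(B)\times M)$ pointwise in $\omega$. Taking expectations and inserting the intensity representation of $\mu$ for $X$ gives
\[
\mu_Y(B\times M)=\mathbb{E}\{N_X(f^{-1}(B)\times M)\}=\int_{f^{-1}(B)}\int_M \rho(\mathbf{x},m)\,\nu(dm)\,\lambda_{\mathbb{D}}(d\mathbf{x}).
\]

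The crux is then the substitution $\mathbf{x}=f^{-1}(\mathbf{y})$, $\mathbf{y}\in\mathbb{S}^2$. Regarding $\mathbb{D}$ and $\mathbb{S}^2$ as two-dimensional submanifolds of $\mathbb{R}^3$ equipped with their induced surface measures (with $\lambda_{\mathbb{D}}$ corresponding to the area element $d\mathbb{D}$ of Section~\ref{Section:2}), the smooth area formula yields $\lambda_{\mathbb{D}}(d\mathbf{x})=|J_f(\mathbf{y})|\,\lambda_{\mathbb{S}^2}(d\mathbf{y})$, where $|J_f(\mathbf{y})|$ is the Jacobian determinant relating the two area elements at $\mathbf{y}=f(\mathbf{x})$. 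The inner $\nu$-integral is untouched since marks are carried through, so
\[
\mu_Y(B\times M)=\int_{B}\int_M \rho\{f^{-1}(\mathbf{y}),m\}\,|J_f(\mathbf{y})|\,\nu(dm)\,\lambda_{\mathbb{S}^2}(d\mathbf{y}).
\]
As $B\times M\in\mathcal{B}(\mathbb{S}^2\times\mathcal{M})$ was arbitrary, uniqueness of the Radon--Nikodym derivative of $\mu_Y$ with respect to $\lambda_{\mathbb{S}^2}\otimes\nu$ identifies the integrand as the intensity function of $Y$, giving $\rho^*(\mathbf{y},m)=\rho\{f^{-1}(\mathbf{y}),m\}|J_f(\mathbf{y})|$ up to a $\lambda_{\mathbb{S}^2}\otimes\nu$-null set.

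The main obstacle is the change-of-variables step, for two reasons. First, the Jacobian $|J_f|$ presupposes differentiability, whereas the statement only assumes $f$ bijective and measurable; I would therefore make the implicit regularity hypothesis explicit, requiring $f$ to be a diffeomorphism between the surfaces so that the classical area formula applies and $|J_f|$ is defined $\lambda_{\mathbb{S}^2}$-a.e. Second, some care is needed as to \emph{which} map's Jacobian appears: the factor is the density $d\lambda_{\mathbb{D}}/d\lambda_{\mathbb{S}^2}$, i.e. the Jacobian of $f^{-1}$ evaluated at the sphere point (denoted $|J_f|$ in the statement), and it must be computed intrinsically on the two-dimensional surfaces via their induced metrics rather than as a naive $3\times3$ determinant. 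Once this factor is pinned down, the remainder is routine bookkeeping.
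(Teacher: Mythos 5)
Your proof is correct and takes essentially the same route the paper relies on: the paper gives no self-contained argument but simply cites \cite[Theorem 5.1]{Last2017} (the mapping theorem) as adaptable to the marked setting, and your argument is precisely that adaptation written out in full --- pushforward of the counting measure, $N_Y(B\times M)=N_X(f^{-1}(B)\times M)$, followed by the surface change of variables and identification of the density via Radon--Nikodym uniqueness. Your two caveats are also well taken and correspond to hypotheses the paper leaves implicit: $f$ must in fact be (piecewise) diffeomorphic rather than merely a measurable bijection for $|J_f|$ to exist, and $|J_f(\mathbf{x})|$ for $\mathbf{x}\in\mathbb{S}^2$ must be read as the area density $d\lambda_{\mathbb{D}}/d\lambda_{\mathbb{S}^2}$ (the Jacobian of $f^{-1}$ computed intrinsically on the surfaces), which is exactly what makes the stated formula consistent.
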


Proof: See for example, \cite[Theorem 5.1]{Last2017}, which can be adapted to the marked point process setting.

\cite{Ward2020} suggest using the pragmatic choice of $f(\mathbf{x}) = \mathbf{x}/\|\mathbf{x}\|$ for the bijective mapping to take the point pattern from $\mathbb{D}$ onto $\mathbb{S}^2$ as this applicable to all convex shapes. It is possible to use simpler alternatives when the geometry of the shape permits, for example when $\mathbb{D}$ is an ellipsoid we may take $f(x_1,x_2,x_3)=(x_1/a,x_1/b,x_1/c)$ as is done in the numerical examples of Section \ref{section:num:examples}. Estimators of the functional summary statistics as given in (\ref{eq:inhom:estimator:D}-\ref{eq:inhom:estimator:K}) are then computed, where instead of $\rho$ we use $\rho^*$ as given by Theorem \ref{thm:mapping}.

When mapping a point process from our convex shape to the sphere, the assumptions of IRWMI and/or SOIRWI of the resulting process may not hold. However, we will show through simulations that the functional summary statistics are still able to capture attractive and repulsive behaviour of a process. We note that there are some special cases in which the process, when mapped from $\mathbb{D}$ to $\mathbb{S}^2$, is known to be IRWMI and/or SOIRWI. These are:
\begin{enumerate}
	\item a Poisson process with an independent marking scheme is IRWMI;
	\item a multi-type Poisson process with independent components is IRWMI;
	\item a multi-type process with independent components is SOIRWI, see e.g \cite[Proposition 4.4][]{Moller2004}.
\end{enumerate}
Special attention should be taken for the Poisson case, since this assumption implies that the mapped spheroidal process is IRWMI. This in turn means we can test the hypothesis of independence using random rotations assuming the process is complete observed, i.e. the Poisson hypothesis is a subset of the IRWMI hypothesis when using random rotations to simulate from the null hypothesis.

\subsection{Computing the functional summary statistics}

To implement the spherical summary functional statistics, we need to construct an estimator of the unknown intensity function once the process has been mapped onto $\mathbb{S}^2$. Let $X$ be a marked point process on $\mathbb{D}$ with intensity function $\rho(\mathbf{x},m)>0$ for $(\mathbf{x},m)\in\mathbb{D}\times\mathcal{M}$ with respect to $\lambda_{\mathbb{D}}$ and some mark reference measure $\nu$. Suppose we have some estimator $\hat{\rho}$ of $\rho$ on the original space $\mathbb{D}$, for example we could use the kernel estimator of \cite{Ward2023}. Then by Theorem \ref{thm:mapping}, the intensity function when mapped to $\mathbb{S}^2$ is estimated as
\begin{equation}
\label{eq:rhoestinhom}
\hat{\rho}^*(\mathbf{x},m) = \hat{\rho}(\mathbf{x},m)|J_{f}(\mathbf{x})|.
\end{equation}

For the discrete mark case $\mathcal{M} = \{1,2,...,k\}$, we can take our reference measure to be the counting measure, giving $\rho(\mathbf{x},i)=\rho_i(\mathbf{x})$ for $i\in\mathcal{M}$. Estimates of each $\rho_i^\ast$ can then be considered individually. When the process is assumed homogenous on $\mathbb{D}$, this is $\hat\rho^\ast_i(\mathbf{x}) = N_{X_i}(\mathbb{D})|J_f(\mathbf{x})|/\lambda_{\mathbb{D}}(\mathbb{D})$. When using an estimator $\hat{\rho}_i(\mathbf{x})$ on $\mathbb{D}$ for the inhomogeneous case it becomes $\hat\rho^\ast_i(\mathbf{x}) = \hat{\rho}_i(\mathbf{x})|J_f(\mathbf{x})|$.

Testing for independence is conducted using the mapped patterns on the sphere, and hence proceeds as described in Section \ref{test_inhom}, with again two possible approaches for simulating from the null.

\section{Numerical examples}\label{section:num:examples}
In the following studies, we consider point patterns sampled from a bivariate point process $X=(X_1,X_2)$ on the surface of an ellipsoid $(x_1,x_2,x_3) = \{a\sin(\theta)\cos(\phi),b\sin(\theta)\cos(\phi),c\cos(\theta)\},$ where $\theta = [0,\pi)$ and $\phi\in [0,2\pi)$. We set $a=b=0.8$ and $c$ to be such that the total surface area equals $4\pi$ ($c=1.44$ to 3.s.f.). We compute functional summary statistics and use these to provide evidence of independence between $X_1$ and $X_2$. In each example, the process being analysed is inhomogeneous with the intensity estimated using the kernel intensity estimator of \cite{Ward2023}. The estimated intensity on the sphere is therefore computed as in (\ref{eq:rhoestinhom}), with $f(x_1,x_2,x_3) = (x_1/a,x_2/b,x_3/c)$ mapping $\mathbb{D}$ to $\mathbb{S}^2$.

In each case, $\hat K^{12}(r)$ is computed, from which we plot $\hat P^{12}(r) = \{\hat K^{12}(r)\}^{1/2} - \left\{2\pi (1-\cos(r))\right\}^{1/2}$. This is the natural estimator for $ P^{12}(r)=\{K^{12}(r)\}^{1/2} - \left\{2\pi (1-\cos(r))\right\}^{1/2}$, the spheroidal analogue to the Euclidean $L$-function \citep{Besag1977} that transforms the $K$-function to give a value of zero for independent marginal processes. We note that, by construction, $K^{12}(r) = K^{21}(r)$, and in the event of a completely observed process over $\mathbb{D}$ so are their estimators, as such we only consider one of these in the analysis. In the event that the process is only partially observed we find that $\hat{K}^{12}(r) = \hat{K}^{21}(r)$ are equivalent only for small $r$, whilst for moderate to large $r$ the window causes minor differences between the estimators. We also plot estimators for $J^{12}(r)$ and $J^{21}(r)$, which in general are not equal to one another.

Given that point patterns are simulated and observed on the whole surface, we elect to implement the random rotation method for sampling the null distribution. The alternative method of sampling from the null, simulating independent Poisson processes, is demonstrated in the real data example of Section \ref{sec:rd}. Simulation envelopes are computed using 199 simulations from the null.
\begin{figure}
	\centering
	\includegraphics[width=0.98\linewidth]{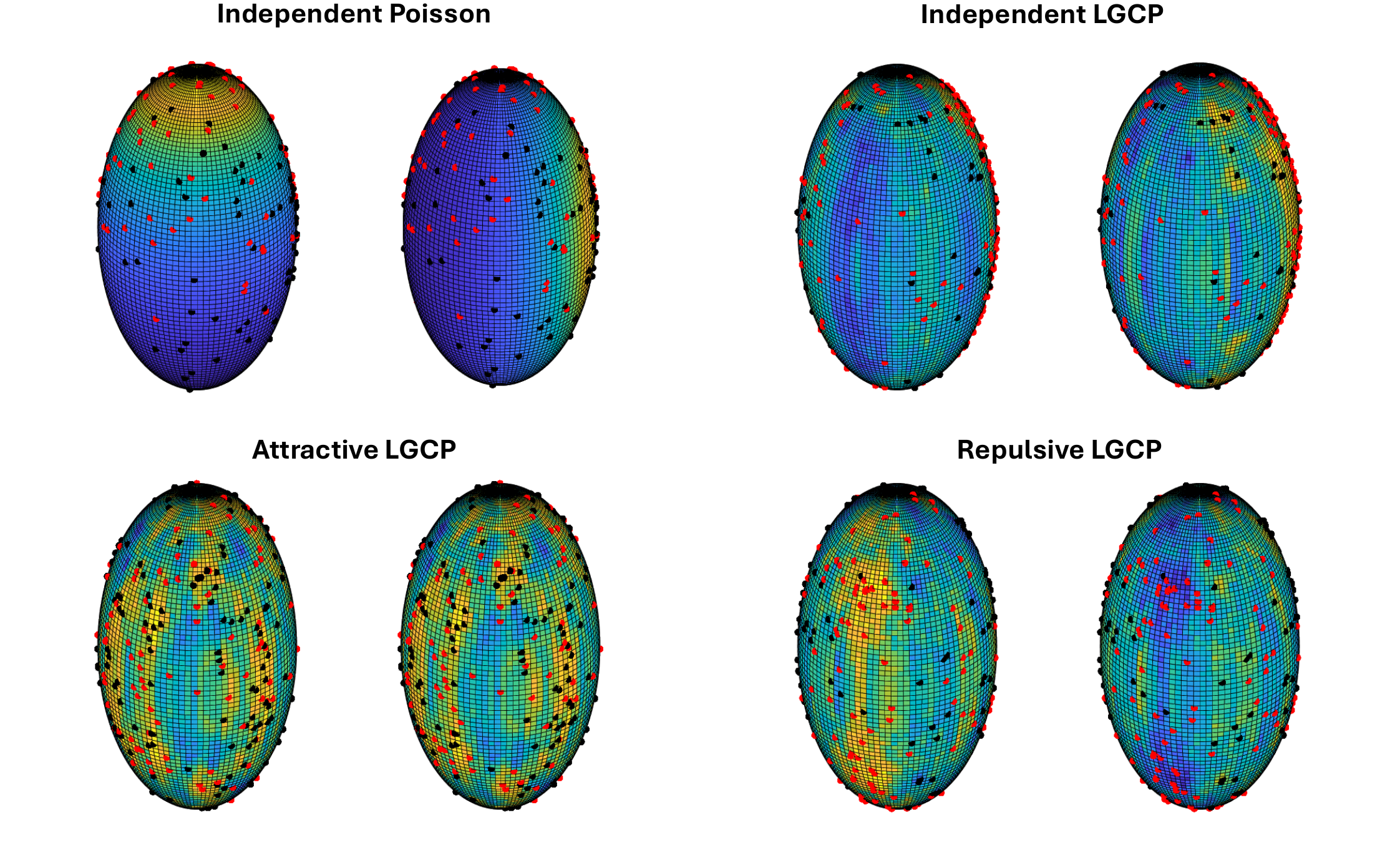}
	\caption{Example realisations of point patterns generated in the numerical examples. Red points are events from $X_1$ and black points are events from $X_2$. For each setting, the point patterns are the same and the ellipsoid is viewed from the same perspective. The heatmap shows the intensity of $X_1$ and $X_2$ in the left and right, respectively for the independent Poisson process example and the conditional GRF in the case of the LGCP examples. \label{fig:ellip}}
\end{figure}

\subsection{Independent Poisson processes on an ellipsoid}
A bivariate Poisson process with independent marginals is simulated on the surface of the ellipsoid using the method of \cite{Ward2023}. The intensities of the marginal processes are $\rho_1(x_1,x_2,x_3) = \exp\{\log(6) + x_3\}$ and  $\rho_2(x_1,x_2,x_3) = \exp\{\log(6)+2x_1\}$, for $\mathbf{x} = (x_1,x_2,x_3)\in\mathbb{D}$. An example realisation of the pattern, along with heat-maps for the intensity functions, is shown in Figure \ref{fig:ellip}. Figure \ref{fig:Pois} demonstrates the functional summary statistics fall within the 95\% simulation envelopes, as expected.
\begin{figure}
	\centering
	\includegraphics[width=0.98\linewidth]{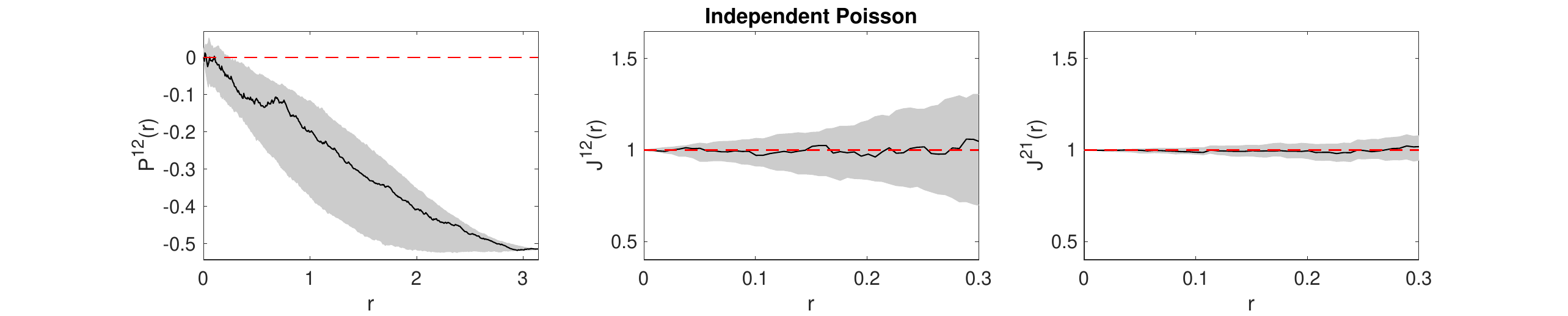}
	\caption{Estimated functional summary statistics $\hat P^{12}(r)$, $\hat J^{12}(r)$ and $\hat J^{21}(r)$ for independent inhomogenous Poisson processes. Grey bands depict the 95\% simulation envelopes for the null distribution using the rotation method. The theoretical value under the null is shown with the red dotted line. \label{fig:Pois}}
\end{figure}

\subsection{Log-Gaussian Cox processes on an ellipsoid}
The bivariate log-Gaussian Cox process (LGCP) is used to model attractive and repulsive interactions between two point processes \citep{Brix2001,Moller2004}. On spheres, univariate LGCP are discussed in depth by \cite{Cuevas-Pacheco2018} whilst multivariate LGCP are discussed by \cite{Jun2019}. Here, we extend the formulation to the ellipsoid $\mathbb{D}$.

To define a bivariate LGCP on $\mathbb{D}$, first define a bivariate Gaussian random field (GRF) on $\mathbb{D}$. The random function $U\equiv(U_1,U_2):\mathbb{D}\times\mathbb{D}\rightarrow \mathbb{R}^2$ is a GRF if for any $n\in\mathbb{N}$, $\mathbf{x}_{1},...,\mathbf{x}_{n}\in\mathbb{D}$ and $\mathbf{a}_{1},...,\mathbf{a}_{n}\in\mathbb{R}^2$, $\sum_{i=1}^n \mathbf{a}_{i}^T U(\mathbf{x}_{i})$ is normally distributed. For bivariate GRF $U$, define the bivariate random field $Z:\mathbb{D}\times\mathbb{D}\rightarrow\mathbb{R}_+^2$ as $Z = \left[\exp\{U_1\},\exp\{U_2\}\right]$. Point process $X=(X_1,X_2)$ is then said to be bivariate LGCP on $\mathbb{D}$ if $X$ given $Z=(z_1,z_2)$ is a bivariate Poisson process with intensity $(z_1,z_2)$.

The GRF is defined via its mean $\mu_i(\mathbf{x}) \equiv E\{U_i(\mathbf{x})\}$ and auto- and cross-covariance functions $c_{ij}(\mathbf{x},\mathbf{y})\equiv\cov\left\{U_i(\mathbf{x}),U_j(\mathbf{y})\right\}$ ($i,j=1,2$). These in-turn completely define the distribution of $X$, whose intensity functions and pair-correlation functions are given as
$$
\rho_i(\mathbf{x}) = \exp\left\{\mu_i(\mathbf{x}) + \frac{c_{ii}(\mathbf{x},\mathbf{x})}{2}\right\}, \qquad g_{ij}(\mathbf{x},\mathbf{y}) = \exp\left\{c_{ij}(\mathbf{x},\mathbf{y})\right\},
$$
respectively, for $i,j = 1,2$.

To simulate a LGCP on the ellipsoid $\mathbb{D}$, we trivially extend to a bivariate setting the procedure provided in the Supplementary Material of \cite{Ward2023} and is also akin to the approach of \cite{Cronie2020} who simulate LGCP on $\mathbb{R}^2$ linear networks by first simulating over $\mathbb{R}^2$ and then restricting the GRF to only those points that coincide with the linear network. Let $\tilde U$ be a bivariate GRF on $\mathbb{R}^3$, then defining $U(\mathbf{x}) = \tilde U(\mathbf{x})$ for $\mathbf{x}\in\mathbb{D}$, $U$ is a GRF on $\mathbb{D}$. Therefore, to simulate a LGCP on $\mathbb{D}$, we simulate $\tilde u$ from $\tilde U$, and take realized GRF $u=(u_1,u_2)$ to be $\tilde u$ restricted to $\mathbb{D}$. A bivariate Poisson process is then simulated on $\mathbb{D}$ with intensity $z=\{\exp(u_1),\exp(u_2)\}$.

A LGCP simulated in this manner will adopt from the original GRF $\tilde U$ the required independence, attractive or repulsive properties. We consider bivariate GRFs $\tilde U$ with covariance functions of the form
\begin{align}\label{eq:cvf1}
		c_{11}(r) = c_{22}(r) & = \sigma^2 \exp(-r/\gamma^2),  \\  \label{eq:cvf2}
		c_{12}(r) = c_{21}(r) & = a_{12}\sigma^2 \exp(-r/\gamma^2),
	\end{align}
noting that since $\tilde U$ is a GRF over $\mathbb{R}^3$ then $r$ in Equations \ref{eq:cvf1} and \ref{eq:cvf2} refer to the Euclidean distance between points. The correlation parameter $a_{12}$ controls the nature of the dependencies between $X_1$ and $X_2$; positive $a_{12}$ gives attractive interactions, negative $a_{12}$ gives repulsive interactions, and $a_{12}=0$ results in $X_1$ and $X_2$ being independent. Parameters $\sigma^2$ and $\gamma^2$ control the overall strength and range of interactions, respectively.

We consider three cases, all of which have covariance functions of the form (\ref{eq:cvf1}) and (\ref{eq:cvf2}), with $\sigma^2=1$ and $\gamma^2 = 0.2$. In the first example, the mean function for $\tilde{U}$ is $\mu_1(x_1,x_2,x_3) = \mu_2(x_1,x_2,x_3) = \log(6)+x_1$, and $a_{12}=0$ to give independence between the two marginal point processes. In the second example, $\mu_1(x_1,x_2,x_3) = \mu_2(x_1,x_2,x_3) = \log(6)+x_1^2$, and $a_{12} = 1$ to give attraction between the marginal point processes. In the third example, we set $\mu_1(x_1,x_2,x_3) = \log(6) + x_2^2$, $\mu_2(x_1,x_2,x_3) = \log(6) + x_1^2$, and $a_{12} = -1$, to give repulsion between the marginal point processes. An example realisation of the pattern, along with heat-maps for the realized intensity functions, is shown in Figure \ref{fig:ellip} The computed functional summary statistics, together with the 95\% simulation envelopes for the null are shown in Figure \ref{fig:LGCP}, displaying the expected behaviour in each case.

\begin{figure}
	\centering
	\includegraphics[width=0.98\linewidth]{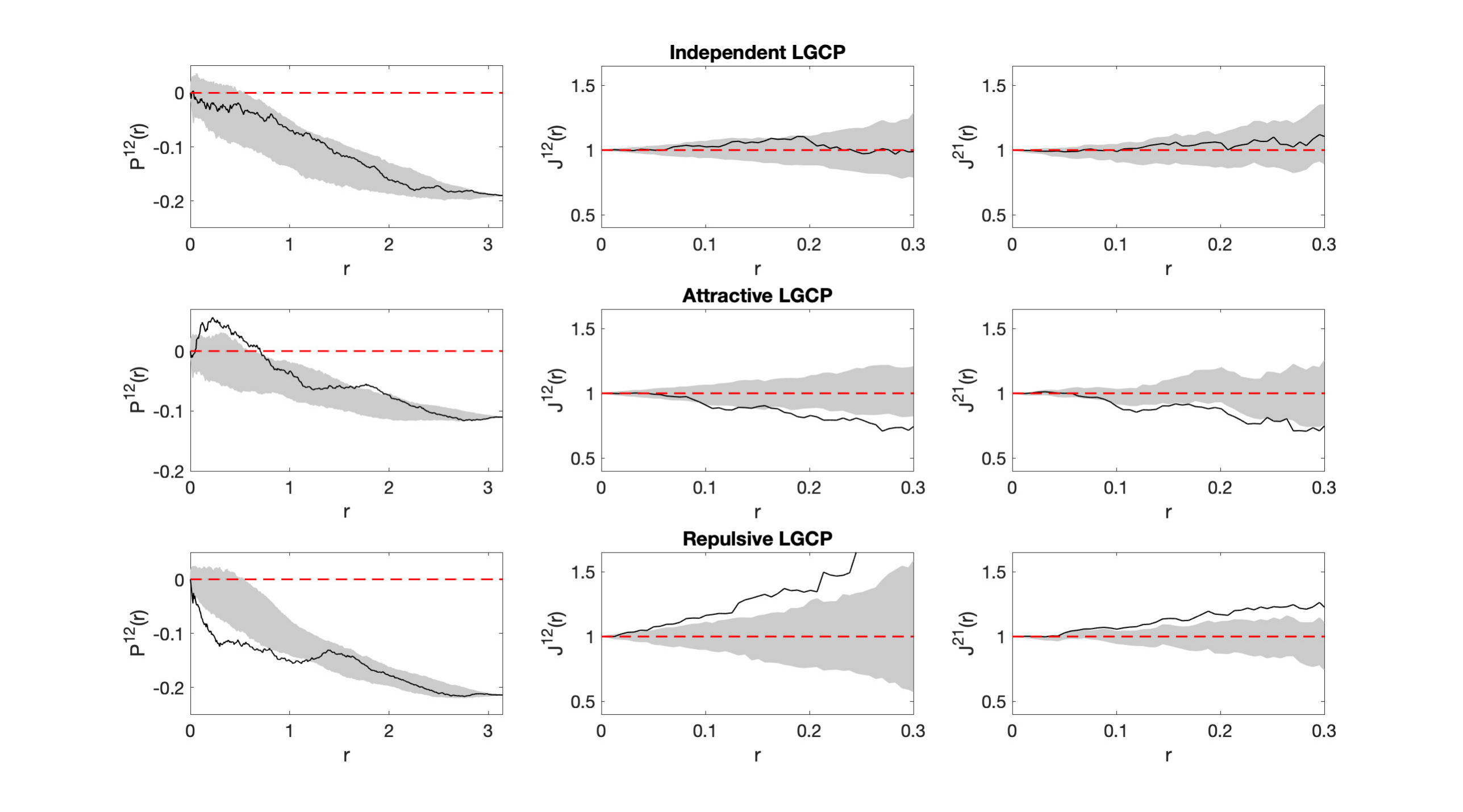}
	\caption{Estimated functional summary statistics $\hat P^{12}(r)$, $\hat J^{12}(r)$ and $\hat J^{21}(r)$ for inhomogeneous LGCPs. Grey bands depict the 95\% simulation envelopes for the null distribution using the rotation method. The theoretical value under the null is shown with the red dotted line.\label{fig:LGCP}}
\end{figure}

\section{Real data example}
\label{sec:rd}
We consider the Revised New General Catalogue and Index Catalogue (RNGC/IC) \citep{galaxydata}, which maps the sky positions of 10,608 galaxies. This data set has been analysed extensively in previous methodological works for point patterns on a sphere \citep{Lawrence2016,Cuevas-Pacheco2018}, but only from a univariate perspective. In these papers, additional descriptions of each galaxy were ignored to create a single-type point pattern. Here,we take into account marks that label the type (shape) of each galaxy. These marks are hierarchical, and as such, we consider just the first level of that hierarchy which indicates the broad shape of each galaxy. We further restrict ourselves to a bivariate dataset of the two most common galaxy types, spiral (6,674 galaxies) and elliptical (1,231 galaxies), as shown in Figure \ref{fig:galaxies}. We follow \cite{Lawrence2016} and \cite{Cuevas-Pacheco2018} in removing the Milky Way band, in which galaxies are mostly obscured. Furthermore, the data is rotated such that this band is orientated along the equator.


It has been noted in \cite{Cuevas-Pacheco2018} that this point pattern is inhomogeneous. Instead of adopting their approach of applying a thinning procedure to homogenise the data, we treat it as an inhomogeneous point pattern and make use the intensity estimator of \cite{Ward2023} to plug into the inhomogeneous versions of the functional summary statistics, as presented in Section \ref{inhomo_sphere}.

\begin{figure}
	\centering
	\begin{minipage}{\textwidth}
		\centering
		\subcaptionbox{The RNGC galaxy point pattern observed from two different viewpoints. Spiral galaxies are shown in cyan and elliptical galaxies are shown in magenta. The point pattern has been rotated to give an equatorial orientation to the milky way, and events within the milky way band (shown with the thick black lines at $\pm 12\deg$ latitude) have been removed. \label{fig:galaxies}}{
			\includegraphics[width=0.9\textwidth]{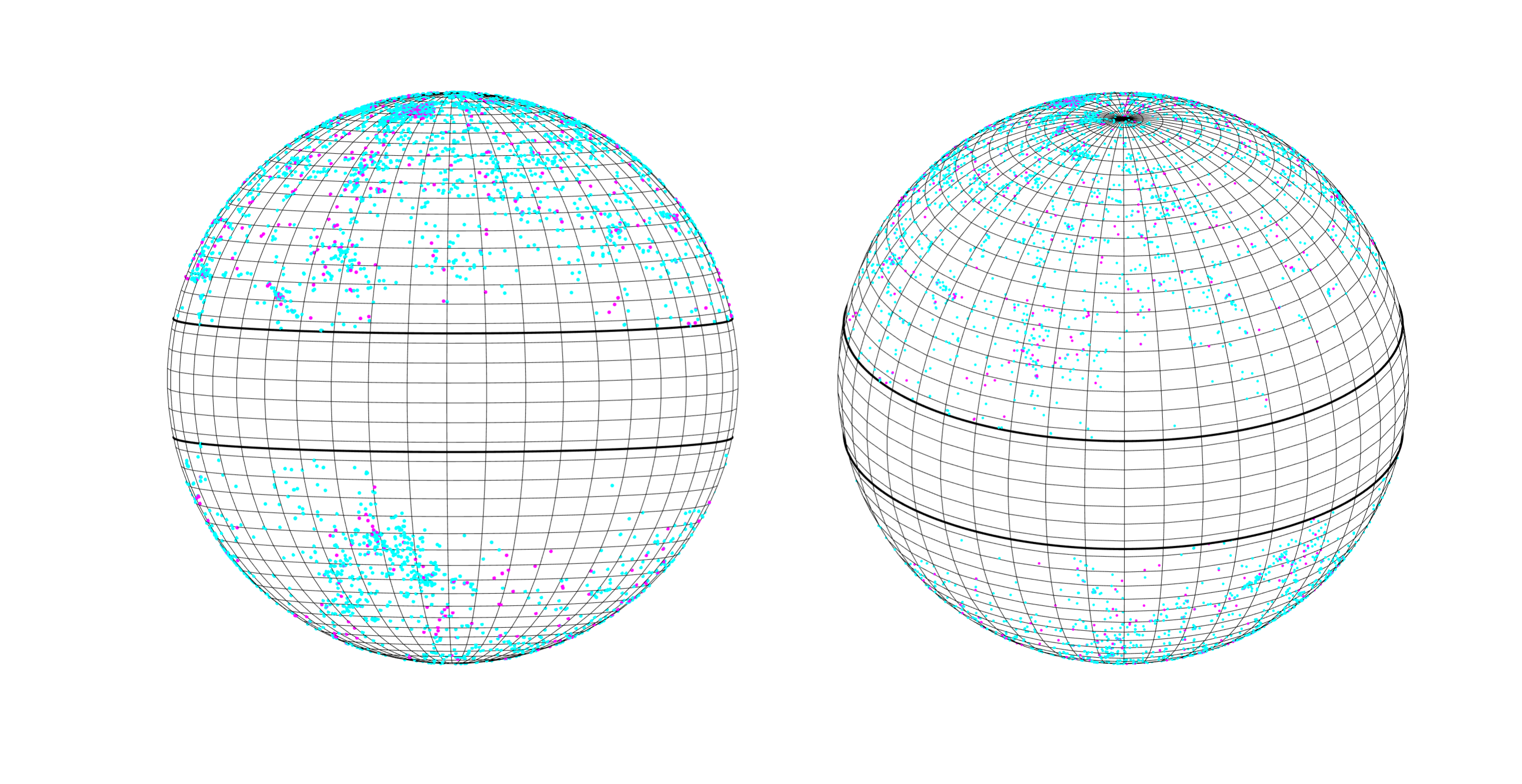}
		}
	\end{minipage}

	\begin{minipage}{\textwidth}
		\centering
		\subcaptionbox{Estimated functional summary statistics $\hat P^{12}(r)$, $\hat J^{12}(r)$ and $\hat J^{21}(r)$ for the RNGC galaxy point pattern. Grey bands depict the 95\% simulation envelopes for the null distribution using the Poisson simulation method. The theoretical value under the null is shown with the red dotted line.\label{fig:FSS_galaxy}}{
			\includegraphics[width=0.95\textwidth]{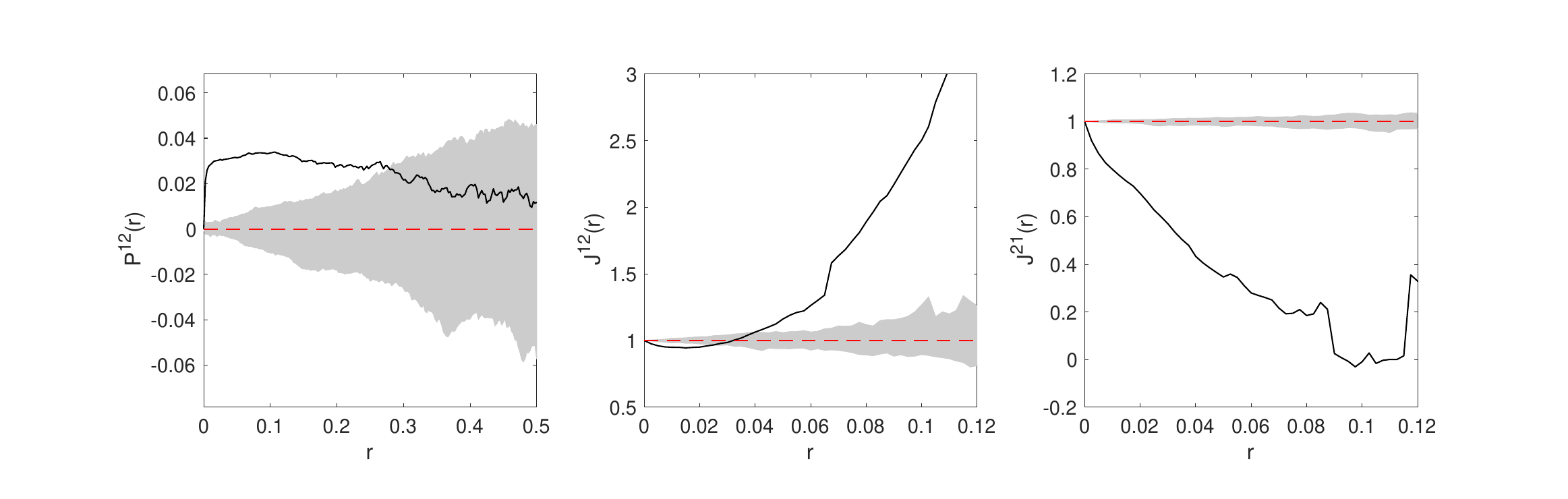}
		}
	\end{minipage}

	\caption{RNGC galaxy point pattern analysis.}
	\label{fig:main}
\end{figure}

Figure \ref{fig:FSS_galaxy} plots estimates of the functional summary statistics $P^{12}(\cdot)$, $J^{12}(\cdot)$ and $J^{21}(\cdot)$. Simulation intervals for the null hypothesis of independence were computed using 199 simulates of two independent Poisson processes, each with intensity functions as estimated for the spiral and elliptical galaxy point patterns, respectively.

As shown, $\hat P^{12}(\cdot)$ lies significantly above the simulation intervals, and $\hat J^{21}(\cdot)$ lies significantly below the simulation envelopes. These results are consistent with each other, providing compelling evidence that spiral and elliptical galaxies show attractive dependence, even when accounting for inhomogeneity in the individual processes. Note that $\hat J^{12}(\cdot)$ also displays behaviour at small $r$ that is consistent with this inference, however the reason why it goes above the simulation interval for larger distances is unexplained and motivates further analysis.



\section{Discussion}
To aid the exposition, we have elected to restrict the presented work to the surface of convex shapes. It is worth noting that there is a straight-forward extension to a class of non-convex shapes which we call \emph{star-shapes}. The set $\mathbb{D}_{int}$ is said to be a star domain if and only if there exists an $\mathbf{x}_0\in \mathbb{D}_{int}$ such that for any $\mathbf{x}\in \mathbb{D}_{int}$, the line-segment $\{\mathbf{z}\in\mathbb{R}^3:\mathbf{z}=\mathbf{x}_0+\gamma(\mathbf{x}-\mathbf{x}_0),\; \gamma\in (0,1)\}\subset \mathbb{D}_{int}$. We say $\mathbb{D}_{int}$ is star-shaped at $\mathbf{x}_0$. We further define $\mathbb{D}$ to be star-shaped if and only if its interior $\mathbb{D}_{int}$ is a star domain. Note that all convex shapes are star domains.

If $\mathbb{D}_{int}$ is star-shaped at $\mathbf{x}_0$, then $\mathbb{D}_{int}-\mathbf{x}_0$ is star-shaped at $\mathbf{0}$. We can therefore assume, without loss of generality, that a star domain is star-shaped at $\mathbf{0}\in \mathbb{D}_{int}$. In this setting $f(\mathbf{x}) = \mathbf{x}/\|\mathbf{x}\|$ is a bijective mapping from $\mathbb{D}$ to $\mathbb{S}^2$, and Theorem \ref{thm:mapping} still holds. Analysis can then proceed in the usual way. The complication of working with star-shapes over convex shapes comes form formulating mathematical representations for $\mathbb{D}$ and simulating point patterns on it.

\section*{Acknowledgements}
Scott Ward is supported by Wellcome Trust grant 203799/Z/16/Z.

\newpage

\appendix

\section{Invariance of $K^{CE}$ to the typical point}
\label{Kinvariance}
Representing $K^{CE}(\cdot)$ as an expectation of the reduced Palm distribution of $X$ using the Campbell-Mecke formula,
\begin{align*}
K^{CE}(r) &= \frac{1}{\lambda_{\mathbb{S}^2}(A)\nu(C)\nu(E)}\mathbb{E}\left[\mathop{\sum\nolimits\sp{\ne}}_{(\mathbf{x},m_{\mathbf{x}}),(\mathbf{y},m_{\mathbf{y}})\in X}\frac{\mathbbm{1}\left\{(\mathbf{x},m_\mathbf{x})\in A\times C, (O^T_{\mathbf{x}}\mathbf{y},m_\mathbf{y})\in B_{\mathbb{S}^2}(\mathbf{o},r)\times E\right\}}{\rho(\mathbf{x},m_{\mathbf{x}}) \rho(\mathbf{y},m_{\mathbf{y}})}\right]\\
&= \frac{1}{\lambda_{\mathbb{S}^2}(A)\nu(C)\nu(E)}\mathbb{E}\left[\mathop{\sum\nolimits\sp{\ne}}_{(\mathbf{x},m_{\mathbf{x}}),(\mathbf{y},m_{\mathbf{y}})\in X}\frac{\mathbbm{1}\left\{(\mathbf{x},m_\mathbf{x})\in A\times C, (\mathbf{y},m_\mathbf{y})\in B_{\mathbb{S}^2}(\mathbf{x},r)\times E\right\}}{\rho(\mathbf{x},m_{\mathbf{x}}) \rho(\mathbf{y},m_{\mathbf{y}})}\right]\\
&=\frac{1}{\lambda_{\mathbb{S}^2}(A)\nu(C)\nu(E)}\int_{A}\int_{C}\mathbb{E}^{!}_{(\mathbf{x},m)}\left[\mathop{\sum\nolimits}_{(\mathbf{y},m_{\mathbf{y}})\in X}\frac{\mathbbm{1}\left\{(\mathbf{y},m_\mathbf{y})\in B_{\mathbb{S}^2}(\mathbf{x},r)\times E\right\}}{\rho(\mathbf{x},m_{\mathbf{x}}) \rho(\mathbf{y},m_{\mathbf{y}})}\right]\rho(\mathbf{x},m)d\nu(m)d\lambda_{\mathbb{S}^2}(\mathbf{x}).
\end{align*}
By definition, $K^{CE}(\cdot)$ is independent of $A$ and hence the integrand must be constant with respect to $\mathbf{x}$. Therefore, $\mathbf{x}$ can be taken as any point in $\mathbb{S}^2$. In particular we can set it to $\mathbf{o}$,
\begin{align*}
K^{CE}(r)&=\frac{1}{\nu(C)\nu(E)}\int_{C}\mathbb{E}^{!}_{(\mathbf{o},m)}\left[\mathop{\sum\nolimits}_{(\mathbf{y},m_{\mathbf{y}})\in X}\frac{\mathbbm{1}\left\{(\mathbf{y},m_\mathbf{y})\in B_{\mathbb{S}^2}(\mathbf{o},r)\times E\right\}}{ \rho(\mathbf{y},m_{\mathbf{y}})}\right]d\nu(m)\\
&=\frac{1}{\nu(E)}\mathbb{E}^{!}_{\mathbf{o},C}\left[\mathop{\sum\nolimits}_{(\mathbf{y},m_{\mathbf{y}})\in X}\frac{\mathbbm{1}\left\{(\mathbf{y},m_\mathbf{y})\in B_{\mathbb{S}^2}(\mathbf{o},r)\times E\right\}}{ \rho(\mathbf{y},m_{\mathbf{y}})}\right].
\end{align*}

\section{Proof of Proposition \ref{prop:iso:function}}\label{proof:iso:function}
\begin{proof}
	This proof follows that of \cite[Proposition 4.5]{Moller2004}. Clearly $J^{CE}(r)=1$ if $D^{CE}(r)=F^{E}(r)$. We can rewrite $D^{CE}$ as an expectation over the point process $X$ rather then its reduced Palm process $X^{!}_{\mathbf{o},C}$. Let our finite reference measure $\nu$ over $\mathcal{M}$ coincide with the mark distribution and so under isotropy we have that $\rho(\mathbf{x},m)=\rho_g$ and $A$ be an arbitrary subset of $\mathbb{S}^2$ such that $\lambda_{\mathbb{S}^2}(A)>0$ then,
	\begin{align}
	D^{CE}(r) &= P[X^!_{\mathbf{o},C}\cap\{B_{\mathbb{S}^2}(\mathbf{o},r)\times E\}\neq \emptyset]\nonumber\\
	&= P[X^!_{\mathbf{x},C}\cap\{B_{\mathbb{S}^2}(\mathbf{x},r)\times E\}\neq \emptyset]\label{eq:appendix:A:1}\\
	&=\frac{1}{\lambda_{\mathbb{S}^2}(A)}\int_A P[X^!_{\mathbf{x},C}\cap\{B_{\mathbb{S}^2}(\mathbf{x},r)\times E\}\neq \emptyset] \lambda_{\mathbb{S}^2}(d\mathbf{x})\nonumber\\
	&=\frac{1}{\lambda_{\mathbb{S}^2}(A)\nu(C)}\int_A\int_C P[X^!_{\mathbf{x},C}\cap\{B_{\mathbb{S}^2}(\mathbf{x},r)\times E\}\neq \emptyset] \lambda_{\mathbb{S}^2}(d\mathbf{x})\nu(dm)\label{eq:appendix:A:2}\\
	&=\frac{1}{\rho\lambda_{\mathbb{S}^2}(A)\nu(C)}\int_A\int_C P[X^!_{\mathbf{x},C}\cap\{B_{\mathbb{S}^2}(\mathbf{x},r)\times E\}\neq \emptyset]\rho \lambda_{\mathbb{S}^2}(d\mathbf{x})\nu(dm)\nonumber\\
	&=\frac{1}{\rho\lambda_{\mathbb{S}^2}(A)\nu(C)}\mathbb{E}\left(\sum_{(\mathbf{x},m)\in X}\mathbbm{1}[(\mathbf{x},m)\in A\times C, \{X\setminus (\mathbf{x},m)\}\cap\{B_{\mathbb{S}^2}(\mathbf{x},r)\times E\}\neq \emptyset]\right),\label{eq:appendix:A:3}
	\end{align}
	where (\ref{eq:appendix:A:1}) follows by isotropy of $X$, (\ref{eq:appendix:A:2}) follows by definition of $P^!_{\mathbf{x},C}$ and (\ref{eq:appendix:A:3}) by the Campbell-Mecke theorem. Let us define $X_{A\times C}= X\cap (A\times C)$ for $A\subset\mathbb{S}^2$ and $C\subset \mathcal{M}$. Noting that $C\cap E = \emptyset$, $(\mathbf{x},m)\in X_{A\cap C}$ implies that the event $\{X\setminus (\mathbf{x},m)\}\cap\{B_{\mathbb{S}^2}(\mathbf{x},r)\times E\}\neq \emptyset$ is equivalent to $X\cap\{B_{\mathbb{S}^2}(\mathbf{x},r)\times E\}\neq \emptyset$. This expectation can therefore be rewritten as,
	\begin{align}	D^{CE}(r)&=\mathbb{E}\left(\sum_{(\mathbf{x},m)\in X_{A\times C}}\mathbb{E}\left[\mathbbm{1}\{X_{B_{\mathbb{S}^2}(\mathbf{x},r)\times E}\neq \emptyset\}|X_C\right]\right)\label{eq:appendix:a:4}\\
	&= \frac{1}{\rho\lambda_{\mathbb{S}^2}(A)\nu(C)}\mathbb{E}\left[\sum_{(\mathbf{x},m)\in X_{A\times C}}P\{X_{B_{\mathbb{S}^2}(\mathbf{x},r)\times E}\neq \emptyset\}\right]\label{eq:appendix:a:5}\\
	&= \frac{1}{\rho\lambda_{\mathbb{S}^2}(A)\nu(C)}\mathbb{E}\left\{\sum_{(\mathbf{x},m)\in X_{A\times C}}F^{E}(r)\right\}\nonumber\\
	&= \frac{\rho\lambda_{\mathbb{S}^2}(A)\nu(C)}{\rho\lambda_{\mathbb{S}^2}(A)\nu(C)}F^{E}(r)\nonumber\\
	& = F^{E}(r)\nonumber,
	\end{align}
	where (\ref{eq:appendix:a:4}) follows from the law of iterated expectation and (\ref{eq:appendix:a:5}) follows from independence of $X_C$ and $X_E$.

	To show that $K^{CE}(r)=2\pi\{1-\cos(r)\}$ we note that under independence $\rho^{(2)}\{(\mathbf{x},m_{\mathbf{x}}),(\mathbf{y},m_{\mathbf{y}})\}=\rho(\mathbf{x},m_{\mathbf{x}})\rho(\mathbf{y},m_{\mathbf{y}})$ when $m_{\mathbf{x}}\in C$ and $m_{\mathbf{y}}\in E$. The result follows by application of the Campbell theorem and noting that $\lambda_{\mathbb{S}^2}\{B_{\mathbb{S}^2}(\mathbf{o},r)\}=2\pi\{1-\cos(r)\}$.
\end{proof}

\section{Proof of Proposition \ref{prop:iso:bias}}\label{proof:iso:bias}
\begin{proof}
	Starting with $\hat{F}^{E}(\cdot)$ we have that,
	\begin{align*}
	\mathbb{E}\{1-\hat{F}^{E}(r)\} &= \frac{1}{|I_{W_{\ominus r}}|}\sum_{\mathbf{p}\in I_{W_{\ominus r}}}\mathbb{E}\left(\prod_{(\mathbf{x},m)\in X} \left[1-\mathbbm{1}\{d_{\mathbb{S}^2}(\mathbf{p},\mathbf{x})\leq r, m\in E\}\right]\right)\\
	&=\frac{1}{|I_{W_{\ominus r}}|}\sum_{\mathbf{p}\in I_{W_{\ominus r}}} P[X\cap \{B_{\mathbb{S}^2}(\mathbf{p},r)\times E\} = \emptyset]\\
	&=\frac{1}{|I_{W_{\ominus r}}|}\sum_{\mathbf{p}\in I_{W_{\ominus r}}} P[X\cap \{B_{\mathbb{S}^2}(\mathbf{o},r)\times E\} = \emptyset]\\
	&=P[X\cap \{B_{\mathbb{S}^2}(\mathbf{o},r)\times E\} = \emptyset]\\
	&=1-F^{E}(r).
	\end{align*}
	For $\hat{D}^{CE}(\cdot)$, by the Campbell-Mecke Theorem,
	\begin{align*}
	\mathbb{E}\{1-\hat{D}^{CE}(r)\}&=\mathbb{E}\left\{\frac{1}{\rho_g \lambda_{\mathbb{S}^2}(W_{\ominus r})\nu(C)}\right.\\
	&\qquad\left. \sum_{(\mathbf{x},m_{\mathbf{x}})\in X_{W\ominus r}} \mathbbm{1}\{(\mathbf{x},m_{\mathbf{x}})\in W_{\ominus r}\times C\}\prod_{(\mathbf{y},m_{\mathbf{y}})\in X}\left[1- \mathbbm{1}\{d_{\mathbb{S}^2}(\mathbf{x},\mathbf{y}) < r,m_{\mathbf{y}}\in E\}\right]\right\}\\
	&=\frac{1}{\rho_g \lambda_{\mathbb{S}^2}(W_{\ominus r})\nu(C)}\int_{W_{\ominus r}}\int_C\\
	&\phantom{AAAA}\mathbb{E}^!_{(\mathbf{x},m)}\left(\prod_{(\mathbf{y},m_{\mathbf{y}})\in X}\left[1- \mathbbm{1}\{d_{\mathbb{S}^2}(\mathbf{x},\mathbf{y}) < r,m_{\mathbf{y}}\in E\} \right] \rho(\mathbf{x},m)\lambda_{\mathbb{S}^2}(d\mathbf{x})\nu(m)\right).\\
	\intertext{Since $X$ is isotropic, and we have taken the reference measure $\nu$ to be the probability measure over $\mathcal{M}$, we have that $\rho(\mathbf{x},m)=\rho_g$ and so}
	&=\frac{1}{\lambda_{\mathbb{S}^2}(W_{\ominus r})\nu(C)}\int_{W_{\ominus r}}\int_C\mathbb{E}^!_{(\mathbf{x},m)}\left\{\prod_{(\mathbf{y},m_{\mathbf{y}})\in X}\left[1- \mathbbm{1}\{d_{\mathbb{S}^2}(\mathbf{x},\mathbf{y}) < r,m_{\mathbf{y}}\in E\}\right] \right\} \lambda_{\mathbb{S}^2}(d\mathbf{x})\nu(m)\\
	&=\frac{1}{\lambda_{\mathbb{S}^2}(W_{\ominus r})}\int_{W_{\ominus r}}\mathbb{E}^!_{\mathbf{x},C}\left\{\prod_{(\mathbf{y},m_{\mathbf{y}})\in X}\left[1- \mathbbm{1}\{d_{\mathbb{S}^2}(\mathbf{x},\mathbf{y}) < r,m_{\mathbf{y}}\in E\}\right] \right\} \lambda_{\mathbb{S}^2}(d\mathbf{x})\\
	&=\frac{1}{\lambda_{\mathbb{S}^2}(W_{\ominus r})}\int_{W_{\ominus r}}P[X^!_{\mathbf{x},C}\cap\{B_{\mathbb{S}^2}(\mathbf{x},r)\times E\}= \emptyset]\lambda_{\mathbb{S}^2}(d\mathbf{x})\\
	&=\frac{1}{\lambda_{\mathbb{S}^2}(W_{\ominus r})}\int_{W_{\ominus r}}1-P[X^!_{\mathbf{x},C}\cap\{B_{\mathbb{S}^2}(\mathbf{x},r)\times E\}\neq \emptyset]\lambda_{\mathbb{S}^2}(d\mathbf{x})\\
	&=\frac{1}{\lambda_{\mathbb{S}^2}(W_{\ominus r})}\int_{W_{\ominus r}}1-P[X^!_{\mathbf{o},C}\cap\{B_{\mathbb{S}^2}(\mathbf{x},r)\times E\}\neq \emptyset]\lambda_{\mathbb{S}^2}(d\mathbf{x})\\
	&=1-P[X^!_{\mathbf{o},C}\cap\{B_{\mathbb{S}^2}(\mathbf{x},r)\times E\}\neq \emptyset]\\
	&=1-D^{CE}(r).
	\end{align*}
	For $\hat{K}^{CE}(\cdot)$, unbiasedness follows by application of the Campbell Theorem and the fact the independence between $X_C$ and $X_E$ implies $\rho^{(2)}\{(\mathbf{x},m_{\mathbf{x}}),(\mathbf{y},m_{\mathbf{y}})\}=\rho(\mathbf{x},m_{\mathbf{x}})\rho(\mathbf{y},m_{\mathbf{y}})$ when $m_1\in C$ and $m_2\in E$. Ratio-unbiasedness of $\hat{J}^{CE}(\cdot)$ follows by unbiasedness of $\hat{F}^{E}(\cdot)$ and $\hat{D}^{CE}(\cdot)$.
\end{proof}

\section{Proof of Proposition \ref{prop:equiv:iso:inhom}}\label{proof:equiv:iso:inhom}
\begin{proof}
	Starting with $D^{CE}_{\rm inhom}(\cdot)$, suppose that $X$ is now isotropic and that the reference measure $\nu$ takes the canonical mark measure, then $D^{CE}_{\rm inhom}(\cdot)$ becomes,
	\begin{align*}
	D^{CE}_{\rm inhom}(r) &= 1 - G^!_{\mathbf{y},C}(1-u^r_{\mathbf{y},E})\\
	&= 1 - \frac{1}{\nu(C)}\int_C \mathbb{E}^!_{(\mathbf{y},m)}\left(\prod_{(\mathbf{x},n)\in X}\left[1-\frac{\bar{\rho}_E\mathbbm{1}\{(\mathbf{x},n)\in B_{\mathbb{S}^2}(\mathbf{y},r)\times E\}}{\rho(\mathbf{x},n)}\right]\right)\nu(dm)\\
	\intertext{since the process is isotopic $\rho(\mathbf{x},m)\equiv\rho\in\mathbb{R}_+$ is a constant and $\bar{\rho}_E=\rho$}
	&= 1 - \frac{1}{\nu(C)}\int_C \mathbb{E}^!_{(\mathbf{y},m)}\left(\prod_{(\mathbf{x},n)\in X}\left[1-\mathbbm{1}\{(\mathbf{x},n)\in B_{\mathbb{S}^2}(\mathbf{y},r)\times E\}\right]\right)\nu(dm)\\
	&= 1 - \frac{1}{\nu(C)}\int_C \mathbb{E}^!_{(\mathbf{y},m)}\left[\prod_{(\mathbf{x},n)\in X} \mathbbm{1}\{(\mathbf{x},n)\notin B_{\mathbb{S}^2}(\mathbf{y},r)\times E\}\right]\nu(dm)\\
	&= 1 - \frac{1}{\nu(C)}\int_C \mathbb{E}^!_{(\mathbf{y},m)}\left( \mathbbm{1}[X\cap\{ B_{\mathbb{S}^2}(\mathbf{y},r)\times E\}=\emptyset]\right)\nu(dm)\\
	&= 1 - \frac{1}{\nu(C)}\int_C \mathbb{P}^!_{(\mathbf{y},m)}\left[X\cap\{ B_{\mathbb{S}^2}(\mathbf{y},r)\times E\}=\emptyset\right]\nu(dm)\\
	&= 1 - \mathbb{P}^!_{\mathbf{y},C}\left[X\cap\{ B_{\mathbb{S}^2}(\mathbf{y},r)\times E\}=\emptyset\right]\\
	&= \mathbb{P}^!_{\mathbf{y},C}\left[X\cap\{ B_{\mathbb{S}^2}(\mathbf{y},r)\times E\}\neq\emptyset\right].\\
	\intertext{Since $X$ is isotropic, this probability does not depend on $\mathbf{y}$ and we have}
	D^{CE}_{\rm inhom}(r)&= \mathbb{P}^!_{\mathbf{o},C}\left(X\cap( B_{\mathbb{S}^2}(\mathbf{o},r)\times E)\neq\emptyset\right)
	\end{align*}
	which is precisely (\ref{eq:iso:sphere:D}).

	Next consider $F^{E}_{\rm inhom}(\cdot)$. To see why this is an extension of the isotropic $F$-function given by (\ref{eq:iso:sphere:F}), consider $X$ to be an istropic spheroidal mark process. Then,
	\begin{align*}
	F^{E}_{\rm inhom}(r) &= 1 - G(1-u^r_{\mathbf{y},E})\\
	&= 1 - \mathbb{E}\left(\prod_{(\mathbf{x},m)\in X}\left[1-\frac{\bar{\rho}_E\mathbbm{1}\{(\mathbf{x},m)\in B_{\mathbb{S}^2}(\mathbf{y},r)\times E\}}{\rho(\mathbf{x},m)}\right]\right).\\
	\intertext{Since the process is isotopic, $\rho(\mathbf{x},m)\equiv\rho\in\mathbb{R}_+$ is a constant and $\bar{\rho}_E=\rho$. Therefore}
	F^{E}_{\rm inhom}(r)	&= 1 - \mathbb{E}\left(\prod_{(\mathbf{x},m)\in X}\left[1-\mathbbm{1}\{(\mathbf{x},m)\in B_{\mathbb{S}^2}(\mathbf{y},r)\times E\}\right]\right)\\
	&= 1 - \mathbb{E}\left[\prod_{(\mathbf{x},m)\in X} \mathbbm{1}\{(\mathbf{x},m)\notin B_{\mathbb{S}^2}(\mathbf{y},r)\times E\}\right]\\
	&= 1 - \mathbb{E}\left(\mathbbm{1}[X\cap \{B_{\mathbb{S}^2}(\mathbf{y},r)\times E\}=\emptyset]\right)\\
	&= 1 - \mathbb{P}(X\cap (B_{\mathbb{S}^2}(\mathbf{y},r)\times E)=\emptyset)\\
	&= \mathbb{P}[X\cap \{B_{\mathbb{S}^2}(\mathbf{y},r)\times E\}\neq\emptyset].
	\intertext{Since $X$ is isotropic, this probability does not depend on $\mathbf{y}$, and we have}
	&= \mathbb{P}[X\cap \{B_{\mathbb{S}^2}(\mathbf{o},r)\times E\}\neq\emptyset],
	\end{align*}
	which is precisely (\ref{eq:iso:sphere:F}).

\end{proof}

\section{Proof of Theorem \ref{thm:J:inhom:generating}}\label{proof:J:inhom:generating}

Before providing the proof we state and prove the following lemma, similar to a portion of the proof to Theorem 1 by \cite{Cronie2016}.
\begin{lemma}\label{lemma:app:1}
	Let $X$ be a marked point process on $\mathbb{S}^2$ that is IRWMI. Then,
	\begin{align*}
	\int_B&\left\{\int_C\mathbb{E}^{!}_{(\mathbf{y},a)}\left[{\sum\nolimits\sp{\ne}_{(\mathbf{x}_1,m_1),\dots,(\mathbf{x}_1,m_1)\in X}}\prod_{i=1}^n\frac{\mathbbm{1}\{(\mathbf{x}_i,m_i)\in B_{\mathbb{S}^2}(\mathbf{y},r)\times E\}}{\rho(\mathbf{x}_i,m_i)} \right]\nu(da)\right\}\lambda(d\mathbf{y})\\
	&=\int_B\left\{\int_C\left[\int_{\{B_{\mathbb{S}^2}(\mathbf{o},r)\times E\}^n}\frac{\rho^{(n+1)}\{(\mathbf{o},a),(\mathbf{x}_1,m_1),\dots,(\mathbf{x}_n,m_n)\}}{\rho(\mathbf{o},a)\rho(\mathbf{x}_1,m_1)\cdots\rho(\mathbf{x}_n,m_n)}\prod_{i=1}^n\lambda_{\mathbb{S}^2}(d\mathbf{x}_i)\nu(dm_i)\right]\nu(da)\right\}\lambda(d\mathbf{y}),
	\end{align*}
	and so the integrand in the curly brackets are almost $\lambda_{\mathbb{S}^2}$ everywhere equal, and the integrand on the left hand side is independent of $\mathbf{y}\in\mathbb{S}^2$.
\end{lemma}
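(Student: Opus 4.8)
The plan is to collapse the reduced-Palm expectation on the left-hand side into an $(n+1)$-fold product-density integral, and then to use the IRWMI property, via a rotation change of variables, to show the resulting integrand is constant in $\mathbf{y}$. The first and central step is the pointwise identity relating the reduced Palm expectation to the higher-order intensity $\rho^{(n+1)}$: for $\mu$-almost every $(\mathbf{y},a)$,
\begin{equation*}
\mathbb{E}^{!}_{(\mathbf{y},a)}\left[\mathop{\sum\nolimits\sp{\ne}}_{(\mathbf{x}_1,m_1),\dots,(\mathbf{x}_n,m_n)\in X}\prod_{i=1}^n\frac{\mathbbm{1}\{(\mathbf{x}_i,m_i)\in B_{\mathbb{S}^2}(\mathbf{y},r)\times E\}}{\rho(\mathbf{x}_i,m_i)}\right] = \int_{\{B_{\mathbb{S}^2}(\mathbf{y},r)\times E\}^n}\frac{\rho^{(n+1)}\{(\mathbf{y},a),(\mathbf{x}_1,m_1),\dots,(\mathbf{x}_n,m_n)\}}{\rho(\mathbf{y},a)\prod_{i=1}^n\rho(\mathbf{x}_i,m_i)}\prod_{i=1}^n\lambda_{\mathbb{S}^2}(d\mathbf{x}_i)\nu(dm_i).
\end{equation*}
I would establish this by an iterated Campbell--Mecke argument: testing both sides against an arbitrary bounded measurable $k(\mathbf{y},a)$ and integrating against $\mu(d\mathbf{y},da)=\rho(\mathbf{y},a)\lambda_{\mathbb{S}^2}(d\mathbf{y})\nu(da)$, the Campbell--Mecke formula turns the left side into the expectation of a sum over pairwise-distinct $(n+1)$-tuples $\{(\mathbf{y},a),(\mathbf{x}_1,m_1),\dots,(\mathbf{x}_n,m_n)\}$, which by the Campbell formula defining $\rho^{(n+1)}$ equals the corresponding $\rho^{(n+1)}$-integral; since $k$ is arbitrary the integrands agree $\mu$-a.e. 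Integrating this identity over $a\in C$ against $\nu$ reproduces exactly the integrand in the curly braces on the left-hand side, now written as a triple integral in the normalised product density.

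Second, I would show this integrand is independent of $\mathbf{y}$. Although Definition \ref{def:IRWMI} is phrased through the correlation functions $\xi_k$, the partition expansion (\ref{eq:n:correl:fun}) expresses the normalised product density $\rho^{(n+1)}\{(\mathbf{y},a),(\mathbf{x}_1,m_1),\dots\}/(\rho(\mathbf{y},a)\prod_i\rho(\mathbf{x}_i,m_i))$ as a finite sum of products of $\xi_{|E_j|}$ evaluated on sub-collections of the $n+1$ points; since each $\xi_{|E_j|}$ is rotation invariant under IRWMI, so is the entire normalised density. I would then change variables $\mathbf{x}_i\mapsto O_{\mathbf{y}}\mathbf{x}_i$, where $O_{\mathbf{y}}\in\mathcal{O}(3)$ is the rotation with $O_{\mathbf{y}}\mathbf{o}=\mathbf{y}$. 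Being an isometry fixing the pair $(\mathbf{o},\mathbf{y})$, the map $O_{\mathbf{y}}$ carries $B_{\mathbb{S}^2}(\mathbf{o},r)$ onto $B_{\mathbb{S}^2}(\mathbf{y},r)$ and leaves $\lambda_{\mathbb{S}^2}$ invariant; together with rotation invariance of the normalised density and $\mathbf{y}=O_{\mathbf{y}}\mathbf{o}$, the integral over $\{B_{\mathbb{S}^2}(\mathbf{y},r)\times E\}^n$ transforms into the identical integral over $\{B_{\mathbb{S}^2}(\mathbf{o},r)\times E\}^n$ with $\mathbf{o}$ in place of $\mathbf{y}$. This shows the two integrands in the curly brackets coincide for almost every $\mathbf{y}$; integrating over $\mathbf{y}\in B$ against $\lambda_{\mathbb{S}^2}$ then yields the displayed equality, and the $\mathbf{y}$-independence of the left-hand integrand is exactly what the change of variables has demonstrated.

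The main obstacle I anticipate is the first step: justifying the Palm-to-product-density identity in the marked setting with the $1/\rho$ weights present, ensuring the measurability and integrability needed for the test-function argument, and then faithfully translating Definition \ref{def:IRWMI} into rotation invariance of the normalised product density through (\ref{eq:n:correl:fun}). By contrast, the change of variables itself is routine once invariance of the integrand is in hand; the only genuinely geometric point to verify there is that $O_{\mathbf{y}}$ maps the geodesic ball of radius $r$ about $\mathbf{o}$ exactly onto the ball of radius $r$ about $\mathbf{y}$, which is immediate from $O_{\mathbf{y}}$ being a distance-preserving rotation with $O_{\mathbf{y}}\mathbf{o}=\mathbf{y}$.
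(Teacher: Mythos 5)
Your proposal is correct and follows essentially the same route as the paper: both sides of the lemma are obtained by evaluating one quantity via the Campbell--Mecke formula (giving the reduced-Palm integral) and via the higher-order Campbell formula (giving the $\rho^{(n+1)}$ integral), with IRWMI used to replace $B_{\mathbb{S}^2}(\mathbf{y},r)$ by $B_{\mathbb{S}^2}(\mathbf{o},r)$; the paper packages this by computing $\mathbb{E}\bigl[\sum_{(\mathbf{y},a)\in X} h_{r,B}\{(\mathbf{y},a),X\setminus(\mathbf{y},a)\}\bigr]$ two ways with the indicator of $B\times C$ built into the test function, whereas you first extract the $\mu$-a.e.\ pointwise Palm-to-product-density identity with general test functions and then integrate, which is the same argument in disintegrated form. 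If anything, your write-up is more explicit than the paper's at the one step it leaves implicit, namely how the $\xi_k$-based IRWMI definition yields rotation invariance of the normalised product density (via the partition expansion (\ref{eq:n:correl:fun})) and the ensuing change of variables $\mathbf{x}_i\mapsto O_{\mathbf{y}}\mathbf{x}_i$.
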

\begin{proof}
	Define the following function $h:(\mathbb{S}^2\times \mathcal{M}) \times N_{lf}\mapsto \mathbb{R}_+$,
	\begin{align*}
	h_{r,B}\{(\mathbf{y},a), \phi\} = \frac{\mathbbm{1}\{(\mathbf{y},a)\in B\times C\}}{\rho(\mathbf{y},a)}{\sum\nolimits\sp{\ne}_{(\mathbf{x}_1,m_1),\dots,(\mathbf{x}_1,m_1)\in \phi}}\prod_{i=1}^n\frac{\mathbbm{1}\{(\mathbf{x}_i,m_i)\in B_{\mathbb{S}^2}(\mathbf{y},r)\times E\}}{\rho(\mathbf{x}_i,m_i)}.
	\end{align*}
	Then by taking expectations of $h_{r,B}$ we have,
	\begin{align*}
	\mathbb{E}&\left[\sum_{(\mathbf{y},a)\in X} h_{r,B}\{(\mathbf{y},a), X\setminus (\mathbf{y},a)\}\right]\\
	&=\mathbb{E}\left[\sum_{(\mathbf{y},a)\in X}\frac{\mathbbm{1}\{(\mathbf{y},a)\in B\times C\}}{\rho(\mathbf{y},a)}{\sum\nolimits\sp{\ne}_{(\mathbf{x}_1,m_1),\dots,(\mathbf{x}_1,m_1)\in X\setminus (\mathbf{y},a)}}\prod_{i=1}^n\frac{\mathbbm{1}\{(\mathbf{x}_i,m_i)\in B_{\mathbb{S}^2}(\mathbf{y},r)\times E\}}{\rho(\mathbf{x}_i,m_i)}\right]\\
	&=\mathbb{E}\left[{\sum\nolimits\sp{\ne}_{(\mathbf{y},a),(\mathbf{x}_1,m_1),\dots,(\mathbf{x}_1,m_1)\in X}} \frac{\mathbbm{1}\{(\mathbf{y},a)\in B\times C\}}{\rho(\mathbf{y},a)}\prod_{i=1}^n\frac{\mathbbm{1}\{(\mathbf{x}_i,m_i)\in B_{\mathbb{S}^2}(\mathbf{y},r)\times E\}}{\rho(\mathbf{x}_i,m_i)}\right]\\
	&=\int_B\int_C\left\{\int_{(B_{\mathbb{S}^2}(\mathbf{y},r)\times E)^n} \frac{\rho^{(n+1)}\{(\mathbf{y},a),(\mathbf{x}_1,m_1),\dots,(\mathbf{x}_n,m_n)\}}{\rho(\mathbf{y},a)\rho(\mathbf{x}_1,m_1)\cdots\rho(\mathbf{x}_n,m_n)} \prod_{i=1}^n \lambda_{\mathbb{S}^2}(d\mathbf{x}_i)\nu(dm_i) \right\}\lambda_{\mathbb{S}^2}(d\mathbf{y})\nu(da).\\
	\intertext{By the IRWMI assumption, we can remove the dependency on $\mathbf{y}$ to give,}
	&=\int_B\int_C\left\{\int_{(B_{\mathbb{S}^2}(\mathbf{o},r)\times E)^n} \frac{\rho^{(n+1)}\{(\mathbf{o},a),(\mathbf{x}_1,m_1),\dots,(\mathbf{x}_n,m_n)\}}{\rho(\mathbf{o},a)\rho(\mathbf{x}_1,m_1)\cdots\rho(\mathbf{x}_n,m_n)} \prod_{i=1}^n \lambda_{\mathbb{S}^2}(d\mathbf{x}_i)\nu(dm_i) \right\}\lambda_{\mathbb{S}^2}(d\mathbf{y})\nu(da).
	\end{align*}
	We can also use the Campbell-Mecke Theorem to obtain,
	\begin{align*}
	\mathbb{E}&\left[\sum_{(\mathbf{y},a)\in X} h_{r,B}\{(\mathbf{y},a), X\setminus (\mathbf{y},a)\}\right]\\
	&=\int_B\left\{\int_C\mathbb{E}^{!}_{(\mathbf{y},a)}\left[{\sum\nolimits\sp{\ne}_{(\mathbf{x}_1,m_1),\dots,(\mathbf{x}_1,m_1)\in X}}\prod_{i=1}^n\frac{\mathbbm{1}\{(\mathbf{x}_i,m_i)\in B_{\mathbb{S}^2}(\mathbf{y},r)\times E\}}{\rho(\mathbf{x}_i,m_i)} \right]\nu(da)\right\}\lambda(d\mathbf{y}),
	\end{align*}
	and hence the lemma holds.
\end{proof}

We now commence with the proof of Theorem \ref{thm:J:inhom:generating}
\begin{proof}
	Our proof is analagous to that of \cite[Theorem 1][]{Cronie2016} where instead of considering $\mathbb{R}^n$ we consider $\mathbb{S}^2$. We can consider an infinite series expansion of $F^{E}_{\text{inhom}}(\cdot)$ by using the infinite series representation of the generating functional,
	\begin{align*}
	1-F^{E}_{\text{inhom}}(r) &= G(1-u^r_{\mathbf{y},E})\\
	&= 1+\sum_{n=1}^{\infty}\frac{(-1)^n}{n!}\int_{\mathbb{S}^2\times\mathcal{M}}\cdots\int_{\mathbb{S}^2\times\mathcal{M}}\prod_{i=1}^n \frac{\bar{\rho}_E\mathbbm{1}\{(\mathbf{x}_i,m_i)\in B_{\mathbb{S}^2}(\mathbf{y},r)\times E\}}{\rho(\mathbf{x}_i,m_i)}\\
	&\rho^{(n)}\left\{(\mathbf{x}_1,m_1),\dots,(\mathbf{x}_n,m_n)\right\}\lambda_{\mathbb{S}^2}(d\mathbf{x}_i)\nu(m_i),\\
	&= 1+\sum_{n=1}^{\infty}\frac{(-\bar{\rho}_E)^n}{n!}\int_{B_{\mathbb{S}^2}(\mathbf{y},r)\times E}\cdots\int_{B_{\mathbb{S}^2}(\mathbf{y},r)\times E} \frac{ \rho^{(n)}\left\{(\mathbf{x}_1,m_1),\dots,(\mathbf{x}_n,m_n)\right\}}{\rho(\mathbf{x}_1,m)\cdots\rho(\mathbf{x}_n,m)}\prod_{i=1}^n\lambda_{\mathbb{S}^2}(d\mathbf{x}_i)\nu(m_i).\\
	\intertext{Using e.g. \cite[p. 126]{chiu2013stochastic}, the above is equivalent to}
	&= \exp\left\{\sum_{n=1}^{\infty}\frac{(-\bar{\rho}_E)^n}{n!}\int_{B_{\mathbb{S}^2}(\mathbf{y},r)\times E}\cdots\int_{B_{\mathbb{S}^2}(\mathbf{y},r)\times E}\xi_{n}((\mathbf{x}_1,m_1),\dots,(\mathbf{x}_n,m_n))\prod_{i=1}^n\lambda_{\mathbb{S}^2}(d\mathbf{x}_i)\nu(m_i)\right\}\\
	&= \exp\left\{\sum_{n=1}^{\infty}\frac{(-\bar{\rho}_E)^n}{n!}\int_{B_{\mathbb{S}^2}(\mathbf{o},r)\times E}\cdots\int_{B_{\mathbb{S}^2}(\mathbf{o},r)\times E}\xi_{n}((\mathbf{x}_1,m_1),\dots,(\mathbf{x}_n,m_n))\prod_{i=1}^n\lambda_{\mathbb{S}^2}(d\mathbf{x}_i)\nu(m_i)\right\},
	\end{align*}
	where the final equality is by IRWMI of $X$. Hence $F^E_{\text{inhom}}(\cdot)$ is independent of the arbitrary point $\mathbf{y}$.

Next we look at $D^{CE}_{\text{inhom}}(\cdot)$,
	\begin{align*}
	1-D^{CE}_{\text{inhom}}(r) &= G^!_{\mathbf{y},C}(1-u^r_{\mathbf{y},E})\\
	&=\frac{1}{\nu(C)}\int_C \mathbb{E}^{!}_{(\mathbf{x},a)}\left[\prod_{(\mathbf{y},m)\in X}\left\{1-\frac{\bar{\rho}_E\mathbbm{1}\{(\mathbf{x}_i,m_i)\in B_{\mathbb{S}^2}(\mathbf{y},r)\times E\}}{\rho(\mathbf{x}_i,m_i)}\right\}\right]\nu(da)\\
	&=1+\frac{1}{\nu(C)}\sum_{n=1}^{\infty}\frac{(-\bar{\rho}_E)^n}{n!}\\
	&\int_C\mathbb{E}^{!}_{(\mathbf{x},a)}\left[{\sum\nolimits\sp{\ne}_{(\mathbf{x}_1,m_1),\dots,(\mathbf{x}_1,m_1)\in X}}\prod_{i=1}^n\frac{\mathbbm{1}\{(\mathbf{x}_i,m_i)\in B_{\mathbb{S}^2}(\mathbf{y},r)\times E\}}{\rho(\mathbf{x}_i,m_i)} \right]\nu(da),\\
	\intertext{where the final equality holds due to local finitedness of $X$ and using an inclusion-exclusion argument. From Lemma \ref{lemma:app:1} we have,}
	1-D^{CE}_{\text{inhom}}(r)&=1+\frac{1}{\nu(C)}\sum_{n=1}^{\infty}\frac{(-\bar{\rho}_E)^n}{n!}\\
	&\int_C\int_{\{B_{\mathbb{S}^2}(\mathbf{o},r)\times E\}^n}\frac{\rho^{(n+1)}\{(\mathbf{o},a),(\mathbf{x}_1,m_1),\dots,(\mathbf{x}_n,m_n)\}}{\rho(\mathbf{o},a)\rho(\mathbf{x}_1,m_1)\cdots\rho(\mathbf{x}_n,m_n)}\prod_{i=1}^n\lambda_{\mathbb{S}^2}(d\mathbf{x}_i)\nu(dm_i)\nu(da)\\
	&=1+\frac{1}{\nu(C)}\sum_{n=1}^{\infty}\frac{(-\bar{\rho}_E)^n}{n!}\int_C\\
	&\left(\int_{\{B_{\mathbb{S}^2}(\mathbf{o},r)\times E\}^n}\sum_{k=1}^{n+1}\sum_{E_1,\dots,E_k}\prod_{j=1}^{k}\xi_{|E_j|}[\{(\mathbf{x}_i,m_i):i\in E_j\}]\prod_{i=2}^{n+1}\lambda_{\mathbb{S}^2}(d\mathbf{x}_i)\nu(dm_i)\right)\nu(da),\\
	\intertext{where $(\mathbf{x}_1,m_1)=(\mathbf{o},a)$. Define $I_n=\int_{\{B_{\mathbb{S}^2}(\mathbf{o},r)\times E\}^n}\xi_{n}\{(\mathbf{x}_1,m_1),\dots,(\mathbf{x}_n,m_n)\}\prod_{i=1}^{n}\lambda_{\mathbb{S}^2}(d\mathbf{x}_i)\nu(dm_i)$ and pull out all terms containing $(\mathbf{x}_1,m_1)=(\mathbf{o},a)$, we obtain}
	&=1+\frac{1}{\nu(C)}\sum_{n=1}^{\infty}\frac{(-\bar{\rho}_E)^n}{n!}\sum_{\Pi\subset\{1,\dots,n\}}J_{|\Pi|}^{CE}(r) \sum_{k=1}^{n-|\Pi|}\sum_{\substack{E_1,\dots,E_k\neq\emptyset, \text{disjoint}\\ \cup_{j=1}^k E_j=\{1,\dots,n\}\setminus\Pi}}\prod_{j=1}^k I_{|E_j|} ,\\
	\intertext{where $|\cdot|$ denotes cardinality and $\mathcal{P}_n$ is the power set of $\{1,\dots,n\}$. This can be expressed as}
	1-D^{CE}_{\text{inhom}}(r)&=1+\\
	&\frac{1}{\nu(C)}\sum_{n=1}^{\infty}\frac{(-\bar{\rho}_E)^n}{n!}J_{|\emptyset|}^{CE}(r)  \sum_{k=1}^{n-|\emptyset|}\sum_{\substack{E_1,\dots,E_k\neq\emptyset, \text{disjoint}\\ \cup_{j=1}^k E_j=\{1,\dots,n\}\setminus\emptyset} }\prod_{j=1}^k I_{|E_j|} +\\
	&\frac{1}{\nu(C)}\sum_{n=1}^{\infty}\frac{(-\bar{\rho}_E)^n}{n!}J_{|\{1,\dots,n\}|}^{CE}(r) \sum_{k=1}^{n-|\{1,\dots,n\}|}\sum_{\substack{E_1,\dots,E_k\neq\emptyset, \text{disjoint}\\ \cup_{j=1}^k E_j=\{1,\dots,n\}\setminus\{1,\dots,n\}} }\prod_{j=1}^k I_{|E_j|}+\\
	&\frac{1}{\nu(C)}\sum_{n=1}^{\infty}\frac{(-\bar{\rho}_E)^n}{n!}\sum_{\substack{\Pi\subset\{1,\dots,n\}\\\Pi\notin\{\emptyset,\{1,\dots,n\}\}}}J_{|\Pi|}^{CE}(r) \sum_{k=1}^{n-|\Pi|}\sum_{\substack{E_1,\dots,E_k\neq\emptyset, \text{disjoint}\\ \cup_{j=1}^k E_j=\{1,\dots,n\}\setminus\Pi} }\prod_{j=1}^k I_{|E_j|}.\\
	\intertext{Since $\xi_1\equiv 1$ then $J_0^{CE}(r)=\nu(C)$ by construction, and under the convention $\sum_{k=1}^0\cdot=1$, we have}
	&	1-D^{CE}_{\text{inhom}}(r)=1+\\
	&\frac{1}{\nu(C)}\sum_{n=1}^{\infty}\frac{(-\bar{\rho}_E)^n}{n!} \sum_{k=1}^{n}\sum_{\substack{E_1,\dots,E_k\neq\emptyset, \text{disjoint}\\ \cup_{j=1}^k E_j=\{1,\dots,n\}} }\prod_{j=1}^k I_{|E_j|}  + \\
	&\frac{1}{\nu(C)}\sum_{n=1}^{\infty}\frac{(-\bar{\rho}_E)^n}{n!}J_{n}^{CE}(r) + \\
	&\frac{1}{\nu(C)}\sum_{n=1}^{\infty}\frac{(-\bar{\rho}_E)^n}{n!}\sum_{\substack{\Pi\subset\{1,\dots,n\}\\\Pi\notin\{\emptyset,\{1,\dots,n\}\}}}J_{|\Pi|}^{CE}(r) \sum_{k=1}^{n-|\Pi|}\sum_{\substack{E_1,\dots,E_k\neq\emptyset, \text{disjoint}\\ \cup_{j=1}^k E_j=\{1,\dots,n\}\setminus\Pi} }\prod_{j=1}^k I_{|E_j|}.\\
	\intertext{Since there are ${n \choose d}$ ways to select $d$ elements of the set $\{1,\dots,n\}$, the second summand of the final term can be simplified as}
		1-D^{CE}_{\text{inhom}}(r)&=1+\\
	&\frac{1}{\nu(C)}\sum_{n=1}^{\infty}\frac{(-\bar{\rho}_E)^n}{n!} \sum_{k=1}^{n}\sum_{\substack{E_1,\dots,E_k\neq\emptyset, \text{disjoint}\\ \cup_{j=1}^k E_j=\{1,\dots,n\}} }\prod_{j=1}^k I_{|E_j|} + \\
	&\frac{1}{\nu(C)}\sum_{n=1}^{\infty}\frac{(-\bar{\rho}_E)^n}{n!}J_{n}^{CE}(r) + \\
	&\frac{1}{\nu(C)}\sum_{n=1}^{\infty}\frac{(-\bar{\rho}_E)^n}{n!}\sum_{d=1}^{n-1}{n \choose d}J_{d}^{CE}(r) \sum_{k=1}^{n-d}\sum_{\substack{E_1,\dots,E_k\neq\emptyset, \text{disjoint}\\ \cup_{j=1}^k E_j=\{1,\dots,n-d\}} }\prod_{j=1}^k I_{|E_j|}.\\
	\intertext{Switching the order of summation and some basic manipulation gives}
		1-D^{CE}_{\text{inhom}}(r)&=1+\\
	&\frac{1}{\nu(C)}\sum_{n=1}^{\infty}\frac{(-\bar{\rho}_E)^n}{n!} \sum_{k=1}^{n}\sum_{\substack{E_1,\dots,E_k\neq\emptyset, \text{disjoint}\\ \cup_{j=1}^k E_j=\{1,\dots,n\}} }\prod_{j=1}^k I_{|E_j|}  + \\
	&\frac{1}{\nu(C)}\sum_{n=1}^{\infty}\frac{(-\bar{\rho}_E)^n}{n!}J_{n}^{CE}(r) + \\
	&\frac{1}{\nu(C)}\left\{\sum_{n=1}^{\infty}\frac{(-\bar{\rho}_E)^n}{n!}J_{n}^{CE}(r)\right\} \left\{\sum_{l=1}^{\infty} \frac{(-\bar{\rho}_E)^l}{l!}\sum_{k=1}^{l}\sum_{\substack{E_1,\dots,E_k\neq\emptyset, \text{disjoint}\\ \cup_{j=1}^k E_j=\{1,\dots,l\}} }\prod_{j=1}^k I_{|E_j|} \right\},\\
	\intertext{which can be factorised as,}
	&=\frac{1}{\nu(C)}\left\{\nu(C) + \sum_{n=1}^{\infty}\frac{(-\bar{\rho}_E)^n}{n!}J_{n}^{CE}(r)\right\}\left\{1+\sum_{l=1}^{\infty} \frac{(-\bar{\rho}_E)^l}{l!}\sum_{k=1}^{l}\sum_{\substack{E_1,\dots,E_k\neq\emptyset, \text{disjoint}\\ \cup_{j=1}^k E_j=\{1,\dots,l\}} }\prod_{j=1}^k I_{|E_j|}\right\}\\
	&=J^{CE}_{\text{inhom}}(r)G(1-u^r_{\mathbf{y},E}).
	\end{align*}
This gives the identity. Also by IRWMI $J^{CE}_{\text{inhom}}(\cdot)$ is independent of the typical point and so are $F^E_{\text{inhom}}(\cdot)$ and $D^{CE}_{\text{inhom}}(\cdot)$.
\end{proof}

\section{Proof of Proposition \ref{prop:J:inhom:equal:1}}\label{proof:J:inhom:equal:1}
This proof follows analgously to Proposition 2 of \cite{Cronie2016} but in the context of spherical marked point processes. Prior to proving the proposition, it can be shown by the Campbell formula that when $X_C$ is independent of $X_E$ then
\begin{align}
\rho^{n_C+n_E}&\{(\mathbf{x}_1,m_1),\dots,(\mathbf{x}_{n_C},m_{n_C}),(\tilde{\mathbf{x}}_1,\tilde{m}_1),\dots,(\tilde{\mathbf{x}}_{n_E},\tilde{m}_{n_E})\}=\nonumber\\
&\rho^{n_C}\{(\mathbf{x}_1,m_1),\dots,(\mathbf{x}_{n_C},m_{n_C})\}\rho^{n_E}\{(\tilde{\mathbf{x}}_1,\tilde{m}_1),\dots,(\tilde{\mathbf{x}}_{n_E},\tilde{m}_{n_E})\},\label{eq:density:independent}
\end{align}
where $(\mathbf{x}_i,m_i)\in\mathbb{S}^2\times C$ and $(\tilde{\mathbf{x}}_i,\tilde{m}_i)\in\mathbb{S}^2\times E$. This identity follows due to uniqueness of the $n^{\text{th}}$-order intensity functions.

We now provide the proof of Proposition \ref{prop:J:inhom:equal:1}.
\begin{proof}
	\begin{align*}
	G^{!}_{\mathbf{y},C}(1-u^r_{\mathbf{y},E})&=1+\frac{1}{\nu(C)}\sum_{n=1}^{\infty}\frac{(-\bar{\rho}_E)^n}{n!}\\
	&\int_C\int_{\{B_{\mathbb{S}^2}(\mathbf{o},r)\times E\}^n}\frac{\rho^{(n+1)}\{(\mathbf{o},a),(\mathbf{x}_1,m_1),\dots,(\mathbf{x}_n,m_n)\}}{\rho(\mathbf{o},a)\rho(\mathbf{x}_1,m_1)\cdots\rho(\mathbf{x}_n,m_n)}\prod_{i=1}^n\lambda_{\mathbb{S}^2}(d\mathbf{x}_i)\nu(dm_i)\nu(da).\\
	\intertext{By (\ref{eq:density:independent}) we have}
	G^{!}_{\mathbf{y},C}(1-u^r_{\mathbf{y},E})&=1+\frac{1}{\nu(C)}\sum_{n=1}^{\infty}\frac{(-\bar{\rho}_E)^n}{n!}\\
	&\int_C\int_{\{B_{\mathbb{S}^2}(\mathbf{o},r)\times E\}^n}\frac{\rho(\mathbf{o},a)\rho^{(n)}\{(\mathbf{x}_1,m_1),\dots,(\mathbf{x}_n,m_n)\}}{\rho(\mathbf{o},a)\rho(\mathbf{x}_1,m_1)\cdots\rho(\mathbf{x}_n,m_n)}\prod_{i=1}^n\lambda_{\mathbb{S}^2}(d\mathbf{x}_i)\nu(dm_i)\nu(da)\\
	&=1+\sum_{n=1}^{\infty}\frac{(-\bar{\rho}_E)^n}{n!}\\
	&\int_{\{B_{\mathbb{S}^2}(\mathbf{o},r)\times E\}^n}\frac{\rho^{(n)}\{(\mathbf{x}_1,m_1),\dots,(\mathbf{x}_n,m_n)\}}{\rho(\mathbf{x}_1,m_1)\cdots\rho(\mathbf{x}_n,m_n)}\prod_{i=1}^n\lambda_{\mathbb{S}^2}(d\mathbf{x}_i)\nu(dm_i)\\
	&=G(1-u^r_{\mathbf{y},E}).
	\end{align*}
	Hence the equality is proven and $J^{CE}_{\text{inhom}}(r)\equiv 1$ when $X_C$ and $X_E$ are independent.
\end{proof}

\section{Proof of Proposition \ref{prop:unbiased:inhom}}\label{proof:unbiased:inhom}

\begin{proof}
	This proof follows analagously to the proof of Lemma 1 by \cite{Cronie2016} but in the context of spherical point processes instead of Euclidean ones. The proof of unbiasedness for $K^{CE}_{\text{inhom}}(\cdot)$ holds trivially by taking expectations of the estimator and equating Definition \ref{def:K:inhom}.

	For $\hat{F}^{E}_{\text{inhom}}(\cdot)$, by taking expectations we have,
	\begin{align*}
	\mathbb{E}\left\{1-\hat{F}^{E}_{\text{inhom}}(r)\right\} &= \frac{1}{|I_{W_{\ominus r}}|}\sum_{\mathbf{p}\in I_{W_{\ominus r}}} \mathbb{E}\left( \prod_{(\mathbf{x},m)\in X} \left[1-\frac{\bar{\rho}_E \mathbbm{1}\{d_{\mathbb{S}^2}(\mathbf{p},\mathbf{x})\leq r, m\in E\}}{\rho(\mathbf{x},m_\mathbf{x})}\right]\right)\\
	&=\frac{1}{|I_{W_{\ominus r}}|}\sum_{\mathbf{p}\in I_{W_{\ominus r}}} G(1-u^r_{\mathbf{p},E}).\\
	\intertext{By Theorem \ref{thm:J:inhom:generating}, the generating functional is independent of the point $\mathbf{p}$ and so,}
	&=\frac{1}{|I_{W_{\ominus r}}|}\sum_{\mathbf{p}\in I_{W_{\ominus r}}} G(1-u^r_{\mathbf{o},E})\\
	&=G(1-u^r_{\mathbf{o},E}),
	\end{align*}
	hence the unbiasedness of $\hat{F}^{E}_{\text{inhom}}(\cdot)$.

	For $\hat{D}^{CE}_{\text{inhom}}(\cdot)$, by taking expectations we have,
	\begin{align*}
	\lambda_{\mathbb{S}^2}&(W_{\ominus r})\nu(C)\mathbb{E}\left\{1-\hat{D}^{CE}_{\rm inhom}(r)\right\}\\
	&=\mathbb{E}\left(\sum_{(\mathbf{x},m_{\mathbf{x}})\in X} \frac{\mathbbm{1}\{(\mathbf{x},m_{\mathbf{x}})\in W_{\ominus r}\times C\}}{\rho(\mathbf{x},m_\mathbf{x})}\prod_{(\mathbf{y},m_{\mathbf{y}})\in X}\left[1- \frac{\bar{\rho}_E\mathbbm{1}\{d_{\mathbb{S}^2}(\mathbf{x},\mathbf{y}) < r,m_{\mathbf{y}}\in E\}}{\rho(\mathbf{y},m_{\mathbf{y}})}\right]\right).\\
	\intertext{By the Campbell-Mecke Theorem,}
	&=\int_{W_{\ominus r}}\int_C \mathbb{E}^!_{\mathbf{x},m_{\mathbf{x}}}\left(\frac{1}{\rho(\mathbf{x},m_\mathbf{x})}\prod_{(\mathbf{y},m_{\mathbf{y}})\in X}\left[1- \frac{\bar{\rho}_E\mathbbm{1}\{d_{\mathbb{S}^2}(\mathbf{x},\mathbf{y}) < r,m_{\mathbf{y}}\in E\}}{\rho(\mathbf{y},m_{\mathbf{y}})}\right]\right)\rho(\mathbf{x},m_{\mathbf{x}})\lambda_{\mathbb{S}^2}(d\mathbf{x})\nu(dm_{\mathbf{x}})\\
	&=\int_{W_{\ominus r}}\int_C \mathbb{E}^!_{\mathbf{x},m_{\mathbf{x}}}\left(\prod_{(\mathbf{y},m_{\mathbf{y}})\in X}\left[1- \frac{\bar{\rho}_E\mathbbm{1}\{d_{\mathbb{S}^2}(\mathbf{x},\mathbf{y}) < r,m_{\mathbf{y}}\in E\}}{\rho(\mathbf{y},m_{\mathbf{y}})}\right]\right)\lambda_{\mathbb{S}^2}(d\mathbf{x})\nu(dm_{\mathbf{x}})\\
	&=\nu(C)\int_{W_{\ominus r}}\left[\frac{1}{\nu(C)}\int_C \mathbb{E}^!_{\mathbf{x},m_{\mathbf{x}}}\left(\prod_{(\mathbf{y},m_{\mathbf{y}})\in X}\left[1- \frac{\bar{\rho}_E\mathbbm{1}\{d_{\mathbb{S}^2}(\mathbf{x},\mathbf{y}) < r,m_{\mathbf{y}}\in E\}}{\rho(\mathbf{y},m_{\mathbf{y}})}\right]\right)\right]\lambda_{\mathbb{S}^2}(d\mathbf{x})\\
	&=\nu(C)\int_{W_{\ominus r}}G^!_{\mathbf{x},C}(1-u^r_{\mathbf{x},E})\lambda_{\mathbb{S}^2}(d\mathbf{x}).\\
	\intertext{Theorem \ref{thm:J:inhom:generating} dictates the generating functional is independent of the typical point $\mathbf{x}$, and so}
	&=\lambda_{\mathbb{S}^2}(W_{\ominus r})\nu(C)G^!_{\mathbf{o},C}(1-u^r_{\mathbf{o},E}).
	\end{align*}
	This gives the unbiasedness of $D^{CE}_{\text{inhom}}(\cdot)$. Ratio unbiasedness of $J^{CE}_{\text{inhom}}(\cdot)$ follows trivially from unbiasedness of $F^{E}_{\text{inhom}}(\cdot)$ and $D^{CE}_{\text{inhom}}(\cdot)$, and the assumption that $\nu$ is known.
\end{proof}

\section{Proof of Proposition \ref{prop:rotation:inhom}}\label{proof:rotation:inhom}

\begin{proof}
	This proof follows analogously to the proof of Proposition 4 by \cite{Cronie2016}, except here in the context of rotations of spherical point processes. We show that the estimators can be written as a function of $(\Xi^{X_C},\Xi^{X_E})$. Then, by the independence of $X_C$ and $X_E$, and rotational invariance of $\Xi$ we have that $(\Xi^{X_C},\Xi^{X_E})\stackrel{d}{=}(\Xi^{X_C},\Xi^{X_E}_O)$ for any $O\in\mathcal{O}(3)$.

	For any $\mathbf{x}\in\mathbb{S}^2$, by the local finiteness of $X$ and an inclusion-exclusion argument, we have that
	\begin{align*}
	\prod_{(\mathbf{y},m_{\mathbf{y}})\in X}&\left[1- \frac{\bar{\rho}_E\mathbbm{1}[(\mathbf{y},m_{\mathbf{y}})\in \{B_{\mathbb{S}^2}(\mathbf{x},r)\times E\}]}{\rho(\mathbf{y},m_{\mathbf{y}})}\right]\\
	&=1+\sum_{n=1}^{\infty}\frac{(-\bar{\rho}_E)^n}{n!}\left[\sum^{\neq}_{(\mathbf{x_1},m_{1}),\dots,(\mathbf{x_n},m_{n})\in X}\prod_{i=1}^n\frac{\mathbbm{1}[(\mathbf{y},m_{\mathbf{y}})\in \{B_{\mathbb{S}^2}(\mathbf{x},r)\times E\}]}{\rho(\mathbf{y},m_{\mathbf{y}})}\right]\\
	&=1+\sum_{n=1}^{\infty}\frac{(-\bar{\rho}_E)^n}{n!}\xi_{X_E}^{[n]}\{B_{\mathbb{S}^2}(\mathbf{x},r)\},
	\end{align*}
	where $\xi_{X_E}^{[n]}\{B_{\mathbb{S}^2}(\mathbf{x},r)\}=\sum^{\neq}_{(\mathbf{x_1},m_{1}),\dots,(\mathbf{x_n},m_{n})\in X}\prod_{i=1}^n\frac{\mathbbm{1}[(\mathbf{y},m_{\mathbf{y}})\in \{B_{\mathbb{S}^2}(\mathbf{x},r)\times E\}]}{\rho(\mathbf{y},m_{\mathbf{y}})}$ is the factorial power measure of $\Xi^{X_C}$, and hence is a function of $\Xi^{X_C}$. This implies that $\hat{F}^{E}_{\text{inhom}}(\cdot)$ can be written as a function of $\Xi^{X_C}$ since,
	\begin{align*}
	1-\hat{F}^{E}_{\text{inhom}}(r) &= \frac{1}{|I_{W_{\ominus r}}|}\sum_{\mathbf{p}\in I_{W_{\ominus r}}} \prod_{(\mathbf{x},m)\in X} \left[1-\frac{\bar{\rho}_E \mathbbm{1}\{d_{\mathbb{S}^2}(\mathbf{p},\mathbf{x})\leq r, m\in E\}}{\rho(\mathbf{x},m_\mathbf{x})}\right]\\
	&= \frac{1}{|I_{W_{\ominus r}}|}\sum_{\mathbf{p}\in I_{W_{\ominus r}}} \left[1+\sum_{n=1}^{\infty}\frac{(-\bar{\rho}_E)^n}{n!}\xi_{X_E}^{[n]}\{B_{\mathbb{S}^2}(\mathbf{p},r)\}\right].
	\end{align*}

	Since we assume $C$ and $E$ are disjoint, and by the local finiteness of $X$, we have $\lambda_{\mathbb{S}^2}(W_{\ominus r})\nu(C)\{1-\hat{D}^{CE}_{\rm inhom}(r)\}$ can be written as,
	\begin{equation*}
	\int_{W_{\ominus r} \times C}\left(1+\sum_{n=1}^{\infty}\frac{-\bar{\rho}_E^n}{n!}\xi_{X_E}^{[n]}\{B_{\mathbb{S}^2}(\mathbf{x},r)\}\right)\Xi^{X_C}(d\mathbf{x},dm)
	\end{equation*}

	Thus in the event that $\nu$ is known, both $\hat{F}^{E}_{\text{inhom}}(\cdot)$ and $\hat{D}^{CE}_{\rm inhom}(\cdot)$ can be written as a function of $(\Xi^{X_C},\Xi^{X_E})$ and hence $\hat{F}^{E}_{\text{inhom}}(\cdot)$, $\hat{D}^{CE}_{\rm inhom}(\cdot)$, and $\hat{J}^{CE}_{\rm inhom}(\cdot)$ are distributionally invariant under rotations of either $\Xi^{X_C}$ or $\Xi^{X_E}$.

	In the event that $\nu$ is unknown, we can use the estimator (\ref{eq:stoyan:stoyan:estimator}). In this case it can be rewritten as,
	\begin{equation*}
	\sum_{(\mathbf{x},m)\in X\cap (W_{\ominus r}\times C)} \frac{1}{\rho(\mathbf{x},m)}=\Xi^{X_C}(W_{\ominus r}\times C).
	\end{equation*}
	Hence, we still retain rotational distributional invariance of $\hat{F}^{E}_{\text{inhom}}(\cdot)$, $\hat{D}^{CE}_{\rm inhom}(\cdot)$, and $\hat{J}^{CE}_{\rm inhom}(\cdot)$.

	Furthermore, note that $\hat{K}^{CE}_{\text{inhom}}(\cdot)$ can be written as,
	\begin{equation*}
	\hat{K}^{CE}_{\text{inhom}}(r) = \frac{1}{\lambda_{\mathbb{S}^2}(W_{\ominus r})\nu(C)\nu(E)}\int_{W_{\ominus r} \times C}\Xi^{X_E}\{B_{\mathbb{S}^2}(\mathbf{x},r)\}\Xi^{X_C}(d\mathbf{x},dm),
	\end{equation*}
and hence is distributionally invariant under rotations of either $\Xi^{X_C}$ or $\Xi^{X_E}$. As before, in the event that $\nu$ is unknown we can use the estimator (\ref{eq:stoyan:stoyan:nu:estimator}) which is a function of $\Xi$. For example,
	\begin{align*}
	\widehat{\nu(C)} &= \frac{1}{\lambda_{\mathbb{S}^2}(W_{\ominus r})}\sum_{(\mathbf{x},m)\in X\cap (W_{\ominus r}\times C)} \frac{1}{\rho(\mathbf{x},m)}\\
	&=\frac{1}{\lambda_{\mathbb{S}^2}(W_{\ominus r})}\Xi^{X_C}(W_{\ominus r}\times C).
	\end{align*}
	Therefore 
	\begin{align*}
\frac{1}{\lambda_{\mathbb{S}^2}(W_{\ominus r})}\Xi^{X_C}(W_{\ominus r}\times C)\Xi^{X_E}(W_{\ominus r}\times E)
	\end{align*}
	is an estimator for $\lambda_{\mathbb{S}^2}(W_{\ominus r})\nu(C)\nu(E)$, and hence still maintains distributional invariance under rotations of $\Xi$.
\end{proof}

\bibliographystyle{chicago}
\bibliography{ref}

\begin{thebibliography}{}

\bibitem[\protect\citeauthoryear{Ang, Baddeley, and Nair}{Ang
  et~al.}{2012}]{Ang2012}
Ang, Q.~W., A.~Baddeley, and G.~Nair (2012).
\newblock Geometrically corrected second order analysis of events on a linear
  network, with applications to ecology and criminology.
\newblock {\em Scandinavian Journal of Statistics\/}~{\em 39\/}(4), 591--617.

\bibitem[\protect\citeauthoryear{Besag}{Besag}{1977}]{Besag1977}
Besag, J.~E. (1977).
\newblock {Discussion of the paper by Ripley}.
\newblock {\em Journal of the Royal Statistical Society B\/}~{\em 39},
  193--195.

\bibitem[\protect\citeauthoryear{Brix and Moller}{Brix and
  Moller}{2001}]{Brix2001}
Brix, A. and J.~Moller (2001).
\newblock Space-time multi type log {G}aussian {C}ox processes with a view to
  modelling weeds.
\newblock {\em Scandinavian Journal of Statistics\/}~{\em 28\/}(3), 471--488.

\bibitem[\protect\citeauthoryear{Chiu, Stoyan, Kendall, and Mecke}{Chiu
  et~al.}{2013}]{chiu2013stochastic}
Chiu, S.~N., D.~Stoyan, W.~S. Kendall, and J.~Mecke (2013).
\newblock {\em {Stochastic Geometry and its Applications}}.
\newblock John Wiley {\&} Sons.

\bibitem[\protect\citeauthoryear{Cronie, Moradi, and Mateu}{Cronie
  et~al.}{2020}]{Cronie2020}
Cronie, O., M.~Moradi, and J.~Mateu (2020).
\newblock Inhomogeneous higher-order summary statistics for point processes on
  linear networks.
\newblock {\em Statistics and Computing\/}~{\em 30\/}(5), 1221--1239.

\bibitem[\protect\citeauthoryear{Cronie and van Lieshout}{Cronie and van
  Lieshout}{2016}]{Cronie2016}
Cronie, O. and M.~N. van Lieshout (2016).
\newblock {Summary statistics for inhomogeneous marked point processes}.
\newblock {\em Annals of the Institute of Statistical Mathematics\/}~{\em
  68\/}(4), 905--928.

\bibitem[\protect\citeauthoryear{Cuevas-Pacheco and M{\o}ller}{Cuevas-Pacheco
  and M{\o}ller}{2018}]{Cuevas-Pacheco2018}
Cuevas-Pacheco, F. and J.~M{\o}ller (2018).
\newblock {Log Gaussian Cox processes on the sphere}.
\newblock {\em Spatial Statistics\/}~{\em 26}, 69--82.

\bibitem[\protect\citeauthoryear{Daley and Vere-Jones}{Daley and
  Vere-Jones}{2010}]{Daley2010}
Daley, D. and D.~Vere-Jones (2010).
\newblock {\em {An Introduction to the Theory of Point Processes: Volume 1:
  Elementary Theory and Methods}\/} (2 ed.), Volume~I.
\newblock Springer New York.

\bibitem[\protect\citeauthoryear{Iftimi, Cronie, and Montes}{Iftimi
  et~al.}{2019}]{Iftimi2019}
Iftimi, A., O.~Cronie, and F.~Montes (2019).
\newblock Second-order analysis of marked inhomogeneous spatiotemporal point
  processes: Applications to earthquake data.
\newblock {\em Scandinavian Journal of Statistics\/}~{\em 46\/}(3), 661--685.

\bibitem[\protect\citeauthoryear{Jun, Schumacher, and Saravanan}{Jun
  et~al.}{2019}]{Jun2019}
Jun, M., C.~Schumacher, and R.~Saravanan (2019).
\newblock Global multivariate point pattern models for rain type occurrence.
\newblock {\em Spatial Statistics\/}~{\em 31}, 100355.

\bibitem[\protect\citeauthoryear{Kumar, Inns, Ward, Lagage, Wang, Kaminska,
  Uphoff, Cohen, Mamou, and Kleanthous}{Kumar et~al.}{2024}]{Kumar2024}
Kumar, S., P.~Inns, S.~Ward, V.~Lagage, J.~Wang, R.~Kaminska, S.~Uphoff,
  E.~A.~K. Cohen, G.~Mamou, and C.~Kleanthous (2024).
\newblock Immobile lipopolysaccharides and outer membrane proteins
  differentially segregate in growing {E}scherichia {C}oli.
\newblock {\em Under review\/}.

\bibitem[\protect\citeauthoryear{Last and Penrose}{Last and
  Penrose}{2017}]{Last2017}
Last, G. and M.~Penrose (2017).
\newblock {\em {Lectures on the Poisson Process}}.
\newblock Cambridge University Press.

\bibitem[\protect\citeauthoryear{Lawrence, Baddeley, Milne, and Nair}{Lawrence
  et~al.}{2016}]{Lawrence2016}
Lawrence, T., A.~Baddeley, R.~K. Milne, and G.~Nair (2016).
\newblock {Point pattern analysis on a region of a sphere}.
\newblock {\em Stat\/}~{\em 5\/}(1), 144--157.

\bibitem[\protect\citeauthoryear{Lotwick and Silverman}{Lotwick and
  Silverman}{1982}]{Lotwick1982}
Lotwick, H. and B.~Silverman (1982).
\newblock Methods for analysing spatial processes of several types of points.
\newblock {\em Journal of the Royal Statistical Society: Series B
  (Methodological)\/}~{\em 44\/}(3), 406--413.

\bibitem[\protect\citeauthoryear{McSwiggan, Baddeley, and Nair}{McSwiggan
  et~al.}{2017}]{Mcswiggan2017}
McSwiggan, G., A.~Baddeley, and G.~Nair (2017).
\newblock Kernel density estimation on a linear network.
\newblock {\em Scandinavian Journal of Statistics\/}~{\em 44\/}(2), 324--345.

\bibitem[\protect\citeauthoryear{M{\o}ller and Rubak}{M{\o}ller and
  Rubak}{2016}]{Moller2016}
M{\o}ller, J. and E.~Rubak (2016).
\newblock {Functional summary statistics for point processes on the sphere with
  an application to determinantal point processes}.
\newblock {\em Spatial Statistics\/}~{\em 18}, 4--23.

\bibitem[\protect\citeauthoryear{M{\o}ller and Waagepetersen}{M{\o}ller and
  Waagepetersen}{2004}]{Moller2004}
M{\o}ller, J. and R.~P. Waagepetersen (2004).
\newblock {\em {Statistical Inference and Simulation for Spatial Point
  Processes}}.
\newblock Florida: CRC Press.

\bibitem[\protect\citeauthoryear{Moradi, Rodr{\'\i}guez-Cort{\'e}s, and
  Mateu}{Moradi et~al.}{2018}]{Moradi2018}
Moradi, M.~M., F.~J. Rodr{\'\i}guez-Cort{\'e}s, and J.~Mateu (2018).
\newblock On kernel-based intensity estimation of spatial point patterns on
  linear networks.
\newblock {\em Journal of Computational and Graphical Statistics\/}~{\em
  27\/}(2), 302--311.

\bibitem[\protect\citeauthoryear{Myllym{\"a}ki, Mrkvicka, Grabarnik, Seijo, and
  Hahn}{Myllym{\"a}ki et~al.}{2017}]{Myllymaki2013}
Myllym{\"a}ki, M., T.~Mrkvicka, P.~Grabarnik, H.~Seijo, and U.~Hahn (2017).
\newblock Global envelope tests for spatial processes.
\newblock {\em Journal of the Royal Statistical Society: Series B\/}~{\em 79},
  381--404.

\bibitem[\protect\citeauthoryear{Rajala, Murrell, and Olhede}{Rajala
  et~al.}{2018}]{Rajala2018}
Rajala, T., D.~J. Murrell, and S.~C. Olhede (2018).
\newblock {Detecting multivariate interactions in spatial point patterns with
  Gibbs models and variable selection}.
\newblock {\em Journal of the Royal Statistical Society. Series C: Applied
  Statistics\/}~{\em 67\/}(5), 1237--1273.

\bibitem[\protect\citeauthoryear{Rakshit, Davies, Moradi, McSwiggan, Nair,
  Mateu, and Baddeley}{Rakshit et~al.}{2019}]{Rakshit2019}
Rakshit, S., T.~Davies, M.~M. Moradi, G.~McSwiggan, G.~Nair, J.~Mateu, and
  A.~Baddeley (2019).
\newblock Fast kernel smoothing of point patterns on a large network using
  two-dimensional convolution.
\newblock {\em International Statistical Review\/}~{\em 87\/}(3), 531--556.

\bibitem[\protect\citeauthoryear{Rakshit, Nair, and Baddeley}{Rakshit
  et~al.}{2017}]{Rakshit2017}
Rakshit, S., G.~Nair, and A.~Baddeley (2017).
\newblock Second-order analysis of point patterns on a network using any
  distance metric.
\newblock {\em Spatial Statistics\/}~{\em 22}, 129--154.

\bibitem[\protect\citeauthoryear{Ripley}{Ripley}{1977}]{Ripley1977}
Ripley, B.~D. (1977).
\newblock {Modelling Spatial Patterns}.
\newblock {\em Journal of the Royal Statistical Society. Series B
  (Methodological)\/}~{\em 39\/}(2), 172--212.

\bibitem[\protect\citeauthoryear{Robeson, Li, and Huang}{Robeson
  et~al.}{2014}]{Robeson2014}
Robeson, S.~M., A.~Li, and C.~Huang (2014).
\newblock {Point-pattern analysis on the sphere}.
\newblock {\em Spatial Statistics\/}~{\em 10}, 76--86.

\bibitem[\protect\citeauthoryear{Steinicke}{Steinicke}{2024}]{galaxydata}
Steinicke, W. (2024).
\newblock Revised new general catalogue and index catalogue.
\newblock \url{http://www.klima-luft.de/steinicke/index.htm}.
\newblock Accessed: 2024-05-024.

\bibitem[\protect\citeauthoryear{Stoyan and Stoyan}{Stoyan and
  Stoyan}{2000}]{Stoyan2000}
Stoyan, D. and H.~Stoyan (2000).
\newblock Improving ratio estimators of second order point process
  characteristics.
\newblock {\em Scandinavian Journal of Statistics\/}~{\em 27\/}(4), 641--656.

\bibitem[\protect\citeauthoryear{Van~Lieshout and Baddeley}{Van~Lieshout and
  Baddeley}{1999}]{Lieshout1999}
Van~Lieshout, M. and A.~J. Baddeley (1999).
\newblock Indices of dependence between types in multivariate point patterns.
\newblock {\em Scandinavian Journal of Statistics\/}~{\em 26\/}(4), 511--532.

\bibitem[\protect\citeauthoryear{van Lieshout}{van
  Lieshout}{2006}]{vanLieshout2006}
van Lieshout, M.~N. (2006).
\newblock {A J -function for marked point patterns}.
\newblock {\em Annals of the Institute of Statistical Mathematics\/}~{\em
  58\/}(2), 235--259.

\bibitem[\protect\citeauthoryear{van Lieshout}{van
  Lieshout}{2011}]{VanLieshout2011}
van Lieshout, M.~N. (2011).
\newblock {A J-function for inhomogeneous point processes}.
\newblock {\em Statistica Neerlandica\/}~{\em 65\/}(2), 183--201.

\bibitem[\protect\citeauthoryear{Ward, Battey, and Cohen}{Ward
  et~al.}{2023}]{Ward2023}
Ward, S., H.~Battey, and E.~A.~K. Cohen (2023).
\newblock Non-parametric intensity estimation of a point process on a
  riemannian manifold.
\newblock {\em Biometrika\/}~{\em 110}, 1009--1021.

\bibitem[\protect\citeauthoryear{Ward, Cohen, and Adams}{Ward
  et~al.}{2021}]{Ward2020}
Ward, S., E.~A.~K. Cohen, and N.~Adams (2021).
\newblock Testing for complete spatial randomness on three dimensional bounded
  convex shapes.
\newblock {\em Spatial Statistics\/}~{\em 41}.

\bibitem[\protect\citeauthoryear{White}{White}{1979}]{White1979}
White, S.~D. (1979).
\newblock {The hierarchy of correlation functions and its relation to other
  measures of galaxy clustering}.
\newblock {\em Monthly Notices of the Royal Astronomical Society\/}~{\em 186},
  145--154.

\bibitem[\protect\citeauthoryear{Wiegand, Huth, Getzin, Wang, Hao, Gunatilleke,
  and Gunatilleke}{Wiegand et~al.}{2012}]{Wiegand2012}
Wiegand, T., A.~Huth, S.~Getzin, X.~Wang, Z.~Hao, C.~S. Gunatilleke, and I.~N.
  Gunatilleke (2012).
\newblock Testing the independent species’ arrangement assertion made by
  theories of stochastic geometry of biodiversity.
\newblock {\em Proceedings of the Royal Society B: Biological Sciences\/}~{\em
  279\/}(1741), 3312--3320.

\end{thebibliography}

\end{document}